\documentclass[reqno,onecolumn,letterpaper,10pt]{amsart}
\usepackage{amsmath,amssymb,fullpage,calc,enumerate}
\usepackage{graphicx,subfigure,psfrag,multicol,tikz}
\usepackage{algorithm,algorithmicx,array}
\usepackage{algpseudocode}
\usepackage{booktabs, hyperref}
\usepackage{color}
\usepackage{xcolor}
\usepackage{rotating}
\usepackage[title]{appendix}

\hypersetup{ 
    colorlinks,
    linkcolor={red!50!black},
    citecolor={blue!70!black},
    urlcolor={blue!80!black}
}
\usetikzlibrary{arrows.meta}
\usetikzlibrary{decorations.shapes}
\usetikzlibrary{decorations.markings}
\usetikzlibrary{calc,intersections,through,backgrounds,scopes}
\usetikzlibrary{decorations.pathmorphing}

\allowdisplaybreaks[1] \numberwithin{equation}{section}
\newtheorem{theorem}{Theorem}[section]

\newtheorem{lemma}[theorem]{Lemma}
\newtheorem{corollary}[theorem]{Corollary}

\theoremstyle{definition}

\def\RR{\mathbb{R}}
\def\ZZ{\mathbb{Z}}

\newcommand{\vect}[1]{\boldsymbol{\mathbf{#1}}}
\usetikzlibrary{fit}
\usetikzlibrary{shapes}
\usetikzlibrary{shapes.geometric,positioning}
\tikzstyle{vrtx} =[circle,fill=black,draw=black,inner sep=0pt,minimum size=3pt]
\tikzstyle{labvrtx}=[circle,fill=white,draw=black,inner sep=0.02cm, minimum size=0.5cm]
%
{\begin{list}{}{%
   \settowidth{\labelwidth}{\textbf{#1:}}%
   \setlength{\parsep}{0.0cm}%
   \setlength{\itemsep}{0.10cm}%
   \setlength{\leftmargin}{\labelwidth+\labelsep}}}%
{\end{list}}

{\begin{list}{}{%
   \settowidth{\labelwidth}{\textsf{#1)}}%
   \setlength{\parsep}{0.0cm}%
   \setlength{\itemsep}{0.10cm}%
   \setlength{\leftmargin}{\labelwidth+\labelsep}}}%
{\end{list}}
  \setlength{\parskip}{0.3cm}

\newcommand{\myred}[1]{\textcolor{black}{#1}}
\newcommand{\pu}[1]{\textcolor{black}{#1}}
\usepackage[normalem]{ulem}

%
%

\tikzset{My Style/.style={shape=circle,draw,inner sep=0.0cm, minimum size=3pt, fill=black}}

\title{A class of exponential neighbourhoods for the quadratic travelling salesman problem}

\author[Brad D. Woods, Abraham P. Punnen]{Brad D. Woods, Abraham P. Punnen}

\date{\today}
\thanks{This work was supported by NSERC discovery grants awarded to Abraham P. Punnen}
\thanks {Brad Woods, Department of Mathematics, Simon Fraser University, 8888 University Drive, Burnaby, British Columbia, V5A 1S6, Canada. {\tt bdw2@sfu.ca}}
\thanks {Abraham Punnen, Department of Mathematics, Simon Fraser University, 8888 University Drive, Burnaby, British Columbia, V5A 1S6, Canada. {\tt apunnen@sfu.ca}}

\begin{document}

\begin{abstract}
The Quadratic Travelling Salesman Problem (QTSP) is to find a least-cost Hamiltonian cycle in an edge-weighted graph, where costs are defined on all pairs of edges such that each edge in the pair is contained in the Hamiltonian cycle.  This is a more general version than the one that appears in the literature as the QTSP, denoted here as the \emph{adjacent quadratic} TSP, which only considers costs for pairs of adjacent edges.  Major directions of research work on the linear TSP include exact algorithms, heuristics, approximation algorithms, polynomially solvable special cases and exponential neighbourhoods~\cite{Gutin2002-2} among others.  In this paper we explore the complexity of searching exponential neighbourhoods for QTSP, the fixed-rank QTSP, and the adjacent quadratic TSP.  The fixed-rank QTSP is introduced as a restricted version of the QTSP where the cost matrix has fixed rank $p$.  It is shown that fixed-rank QTSP is solvable in pseudopolynomial time and admits an FPTAS for each of the special cases studied, except for the case of matching edge ejection tours. The adjacent quadratic TSP is shown to be polynomially-solvable in many of the cases for which the linear TSP is polynomially-solvable. Interestingly, optimizing over the matching edge ejection tour neighbourhood is shown to be pseudopolynomial for the rank 1 case without a linear term in the objective function, but NP-hard for the adjacent quadratic TSP case. \pu{We also show that the quadratic shortest path problem on an acyclic digraph can be solved in pseudopolynomial time and by an FPTAS when the rank of the associated cost matrix is fixed.}
\end{abstract}

\date{\today}
\maketitle


\section{Introduction}
The Travelling Salesman Problem (TSP) is to find a least-cost Hamiltonian cycle in an edge-weighted graph.  It is one of the most widely studied hard combinatorial optimization problems.  The TSP has been used to model a wide variety of applications.  For details we refer the reader to the well-known books~\cite{Applegate2011,Cook2012, Gutin2002-2,Lawler1985,  Reinelt1994}.  For clarity of discussion, we will refer to this problem as the \emph{linear TSP}.

Let $G=(V,E)$ be an undirected graph on the vertex set $V=\{1,\ldots ,n\}$ and edge set $E=\{1,2,\ldots,m\}$. For each edge $e \in E$, a cost $c(e)$ is given.  Also, for each pair of edges $(e,f)$, another cost $q(e,f)$ is prescribed.  Let $\mathcal{F}$ be the set of all Hamiltonian cycles (tours) in $G$.  The cost $f(\tau)$ of a tour $\tau \in \mathcal{F}$ is given by
\[
    f(\tau) = \sum \limits_{(e,f)\in \tau \times \tau} q(e,f) + \sum \limits_{e\in \tau} c(e).
\]
Then the quadratic travelling salesman problem (QTSP) is to find a least cost tour $\tau\in\mathcal{F}$ such that $f(\tau)$ is as small as possible.

The problem QTSP has received only very limited attention in literature.  A special case of QTSP has been studied by various authors recently \cite{fischer2014analysis, fischer2016polyhedral, fischer2014exact, fischer2013symmetric, Jager:2008, rostami2016lower} where $q(i,j)$ is assumed to be zero if edges $i$ and $j$ are not adjacent.  Although this restricted problem is known as the quadratic TSP in literature, to distinguish it from the general problem, we refer to it as the \emph{adjacent quadratic TSP}, which we denote by QTSP(A).  The $k$-neighbour TSP studied by Woods et al.~\cite{woods2010generalized} is also related to QTSP.  The linear TSP on Halin graphs was studied in~\cite{Cornuejols83}, and an $O(n)$ algorithm was given.  In~\cite{Woods2017Halin} it is shown that QTSP on Halin graphs is strongly NP-hard, however, an $O(n)$ algorithm solves QTSP(A) on this class of graphs.  The linear TSP is solvable in polynomial time when the set of tours is restricted to pyramidal tours~\cite{burkard1999}.  In~\cite{Woods2017PQ}, it is shown that QTSP over the set of pyramidal tours in strongly NP-hard but the corresponding QTSP(A) can be solved in $O(n^3)$ over this class of tours.

Let $Q$ be the $m$ by $m$ matrix with $(e,f)$th element $q(e,f)$, for $e,f\in E$.  If the rank of $Q$ is $p$, then by using the rank decomposition of $Q$, QTSP can be written in another form as
\begin{align*}
\text{Minimize     }  q(\tau) =& \sum_{r=1}^p\left[\left(\sum_{e\in \tau} a_e^r\right)\left(\sum_{e\in \tau}b_e^r\right)\right] + \sum_{e\in \tau}c(e) \\
\text{Subject to } & \tau \in \mathcal{F}.
\end{align*}

For the general QTSP, we can eliminate the linear term by adding $c(e)$ to $q(e,e)$.  However, for the rank-restricted case, we need to consider the linear term explicitly since adding $c(e)$ to $q(e,e)$ could change the rank.  The variation where the linear term is absent is called homogeneous rank $p$ QTSP which is denoted by QTSP(p,H) and the general rank $p$ QTSP is denoted by QTSP(p,c).  It is easy to verify that QTSP(p,c) belongs to the class QTSP(p+1,H).  QTSP(p,c) and QTSP(p,H) restricted to pyramidal tours are studied in~\cite{Woods2017PQ}, to Halin graphs in~\cite{Woods2017Halin}, and is shown to be solvable in pseudopolynomial time when $p$ is fixed, and additionally, admits an FPTAS when the costs are non-negative.\\

Since TSP is a special case of QTSP(p,H), QTSP(p,c), QTSP(A) and QTSP, all these problems are strongly NP-hard~\cite{gj}.\\

Combinatorial optimization problems with the objective function as the product of two linear functions have been studied by Goyal et al.~\cite{Goyal2011} and Kern and Woeginger~\cite{kern2007quadratic}.  Mittal and Schulz~\cite{Mittal2013} considered a further general class of problems that subsumes the class of combinatorial optimization problems with objective functions as fixed sum of product\myred{s} of linear terms.  Thus, QTSP(1,H) falls under the general class considered in ~\cite{Goyal2011,kern2007quadratic} and QTSP(p,c) falls under the class considered in~\cite{Mittal2013}.  However, the corresponding results are not applicable to  QTSP(p,c) because the conditions imposed in deriving their results are not applicable to QTSP(p,c), even if $p=1$.

An instance of QTSP with cost matrix $Q$ is said to be linearizable if there exists an instance of the linear TSP with cost matrix $C$ such that for each tour, the QTSP and linear TSP objective function values are identical.  The corresponding QTSP linearization problem is studied in~\cite{Woods2017lin} and necessary and sufficient conditions are obtained for a cost matrix $Q$ to be linearizable.

Major directions of research work on the linear TSP include exact algorithms, heuristics, approximation algorithms, polynomially solvable special cases and exponential neighbourhoods~\cite{Gutin2002-2} among others.  In this paper we explore the complexity of searching \pu{some specific} exponential neighbourhoods for QTSP, QTSP(p,H) and QTSP(A).  Our focus is on exponential neighbourhoods that are studied in literature for the linear TSP and are known to be polynomially searchable. \pu{The study offers additional insights into the structure of QTSP, QTSP(p,H), and QTSP(A) and enhance our understanding of the boundary between polynomially solvable cases and NP-hard instances of these models.}
In particular, we consider the following classes of exponential neighborhoods (which will be described in detail later):
\begin{enumerate}
    \item Single edge ejection tours (SEE-tours) on a graph $G^*$ \cite{glover1997travelling},
    \item Double edge ejection tours (DEE-tours) on a graph $G^*$ \cite{glover1997travelling},
    \item Paired vertex graphs (PV-tours), and
    \item Matching edge ejection tours (MEE-tours) \cite{bentley1992fast, d1, d2, d3, Punnen2001}.
\end{enumerate}
Unlike the linear TSP, QTSP is strongly NP-hard for all these classes of tours.  Interestingly, the special cases of QTSP(A) are polynomially solvable for three out of four of these classes while QTSP(p,H) admits fully polynomial time approximation schemes (FPTAS).  Our complexity results are summarized in the following table.
\renewcommand{\arraystretch}{1.5}
\begin{center}
\begin{table}[ht] \caption{Summary of complexity results.}
\begin{tabular}{|r| l |cccc |}
	\hline
	Section 	& Neighbourhood			& QTSP 	  		    & QTSP(1,H)	& QTSP(p,H) 	& QTSP(A)       \\ \hline
	2		& SEE-tours on $G^*$     	& strongly NP-hard     & NP-hard,		& NP-hard,     	& $O(n^2)$      \\
	& & & FPTAS & FPTAS & \\
	3		& DEE-tours on $G^*$     	& strongly NP-hard     & NP-hard,  		& NP-hard,       	& $O(n^3)$ 	    \\
	& & & FPTAS & FPTAS & \\
	4		& Paired vertex graphs   	& strongly NP-hard     & NP-hard,  		& NP-hard,       	& $O(n)$ 		\\
	& & & FPTAS & FPTAS & \\	
	5		& MEE-tours  			& strongly NP-hard 	& NP-hard, 		&NP-hard,        	    & strongly NP-hard \\
	& & & FPTAS & FPTAS? & \\	
	 \hline
\end{tabular} \label{tab:summary}
\end{table}
\end{center}
\renewcommand{\arraystretch}{1}

In addition to their theoretical interest, exponential neighbourhoods are vital to the development of efficient very large-scale neighbourhood search (VLSN search) algorithms~\cite{Ahuja2002} and variable neighbourhood search algorithms~\cite{Mladenovic1997} \pu{ for a variety of hard combinatorial optimization problems}. The success of a variable neighborhood search algorithm depends on the availability of simple neighborhoods that can be searched using very fast algorithms and more powerful neighborhoods, often of exponential size, that can be searched using `reasonable algorithms', if not polynomial, to get the search \pu{process} out of entrapments at local optima of simpler neighborhoods.  In this sense, our study also contributes to the design of effective metaheuristics for QTSP(p,c) and QTSP(A).  Some areas of applications of the QTSP model include modelling the permuted variable length Markov model in bioinformatics~\cite{Jager:2008} as well as an optimal routing problem for unmanned aerial vehicles (UAVs)~\cite{woods2014}. Orlin, Punnen and Schulz~\cite{orlin2004approximate} showed that for \myred{any} linear combinatorial optimization problem \myred{with every solution having non-negative cost}, if a $\delta$-optimum over a neighborhood can be computed efficiently, then a $(\delta  + \epsilon)$ -local optimum for that neighborhood can be obtained efficiently. In~\cite{p2}, this result is extended to \myred{a} more general class of combinatorial optimization problems with some non-linear type objective functions. Exploiting these results, the FPTAS we obtained for QTSP(p,H) over various exponential neighborhoods can be utilized to construct fully-polynomial local optimization schemes~\cite{orlin2004approximate} for QTSP(p,H) with respect to the corresponding neighborhoods.

Throughout this paper, we use the following conventions.  All matrices will be denoted by capital letters, and all vectors with bold characters.  The $i$th component of a vector $\vect{a}$ is $a_i$.  Paths are represented by tuples of vertices, i.e. $(v_1,v_2,v_3)$ is a path through vertices $v_1,v_2$ then $v_3$.  Similarly, cycles are represented as ordered sets with the first and last elements the same, i.e. $(v_1,v_2,v_3,v_1)$ is the cycle which passes through $v_1$, $v_2$ and $v_3$ (in that order). \pu{For grapth theoretic terminology and notations we refer to~\cite{bondy1976}}

\section{Single edge ejection tours on $G^*$} \label{sec:SEE}

In this section we consider a special class of tours, called \emph{single edge ejection tours} (SEE-tours), introduced by Glover and Punnen~\cite{glover1997travelling}. We now present various complexity results regarding QTSP and its variations, restricted to this class.

An SEE-tour is defined using a graph $G^*=(V,E)$ which is a spanning subgraph of $K_n$.  Partition the vertex set of $K_n$ into a single vertex $t$, called the {\it tip vertex} and sets $V^1,V^2,\ldots,V^m$, \myred{$m\geq 2$,} such that $V^k =\{v^k_1,v^k_2,\ldots ,v^k_{r_k}\}$ and $|V^k|=r_k \geq 3$, for all $k=1,2,\ldots,m$.  Create a cycle $C(k)=(v^k_1,v^k_2,\ldots ,v^k_{r_k},v^k_1)$ for each $k=1,2,\ldots ,m$ and connect each vertex in $V^k$ to each vertex in $V^{k+1}$ by edges, for $k=1,2,\ldots,m-1$.  Let $E^k$ be the collection of edges so obtained for $k = 1,2,\ldots ,m-1$.  Add all possible edges from $t$ to each vertex in $V^1$ and $V^m$. Let $E^0$ be the set of edges joining $t$ and $V^1$,  and $E^{m}$ be the set of edges joining $t$ to $V^m$. The resulting graph is denoted by $G^*=(V,E)$. (See Figure \ref{fig:G-Star} for an example of a $G^*$ graph).

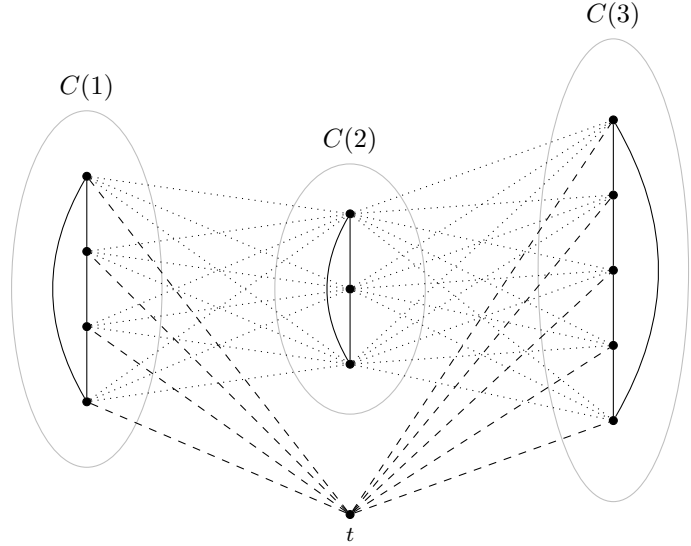
\begin{figure}[H]
\centering
\begin{tikzpicture}[scale=0.6]
\foreach \x[count=\xi] in {0.5,1.5,...,4}{
    \node[vrtx]  (v\xi) at (2,\x) {};}
\node[fit=(v4) (v3) (v2) (v1),ellipse,draw=gray!50,minimum width=2cm, label=above:{$C(1)$}] {};
\foreach \x[count=\xi] in {0.5,1.5,...,3}{
    \node[vrtx]  (w\xi) at (5.5,.5+\x) {};}
\node[fit=(w3) (w2) (w1),ellipse,draw=gray!50,minimum width=2cm, label=above:{$C(2)$}] {};
\foreach \x[count=\xi] in {0.5,1.5,...,5}{
    \node[vrtx]  (u\xi) at (9,-.25+\x) {};}
\node[fit=(u1) (u2) (u3) (u4) (u5),ellipse,draw=gray!50,minimum width=2cm, label=above:{$C(3)$}] {};
\draw (v1) -- (v2) -- (v3) -- (v4);
\draw[bend right] (v4) to (v1);
\draw (w1) -- (w2) -- (w3);
\draw[bend right] (w3) to (w1);
\draw (u1) -- (u2) -- (u3) -- (u4) -- (u5);
\draw[bend left] (u5) to (u1);
\foreach \x in {1,2,3,4}{
    \foreach \y in {1,2,3}{
        \draw[dotted] (v\x) -- (w\y);}}
\foreach \x in {1,2,3}{
    \foreach \y in {1,2,3,4,5}{
        \draw[dotted] (w\x) -- (u\y);}}
\node[vrtx, label=below:{\footnotesize $t$}]  (t) at (5.5,-1.0) {};
\foreach \y in {1,2,3,4,5}{
    \draw[dashed] (t) -- (u\y);}
\foreach \y in {1,2,3,4}{
    \draw[dashed] (t) -- (v\y);}
\end{tikzpicture}
	\caption{A graph $G^*$ with $n=13$ and $m=3$.}
	   \label{fig:G-Star}
\end{figure}

The travelling salesman problem on $G^*$ is known to be NP-hard~\cite{glover1997travelling} and it follows immediately that QTSP, QTSP(p,c), and QTSP(A) are all NP-hard on $G^*$.  Let us now consider a family of tours in $G^*$, called {it single edge ejection tours} (SEE-tours), which consists of all tours in $G^*$ which can be obtained by the following steps~\cite{glover1997travelling}.
\begin{enumerate}
    \item Choose an edge $(t,v^1_j)$ from $t$ to the cycle $C(1)$ and eject an edge $(v^1_j,v^1_{i})$ from $C(1)$.  The result creates a chain $P(1)$ from $t$ to $v^1_{i}$ which includes all edges of $C(1)$ except for the ejected edge.
    \item For each $k$ from $2$ to $m$, introduce the edge $(v^{k-1}_{i},v^k_j)$ from the vertex $v^{k-1}_{i}$ which is the end vertex of the chain $P(k-1)$ to the cycle $C(k)$, and eject an edge $(v^k_j,v^k_{i})$ from $C(k)$, where $i=j+1$ or $j-1$ modulo $r_k$, to create chain $P(k)$ from $t$ to $v^k_{i}$.
    \item Add the edge $(v^m_{i},t)$ to close the chain $P(m)$ to create a tour in $G^*$ (See Figure~\ref{fig:G-Star-SEE} for an SEE-tour in the $G^*$ graph of Figure~\ref{fig:G-Star}).
\end{enumerate}

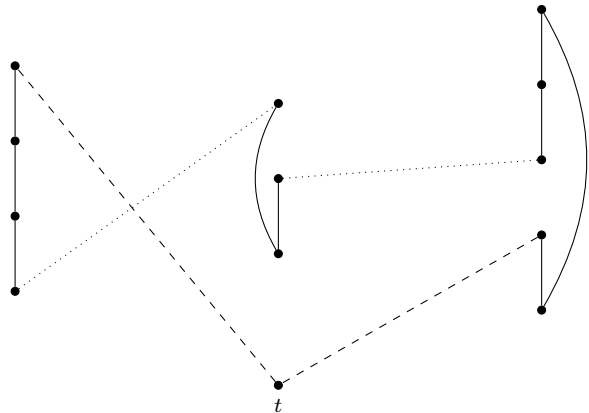
\begin{figure}[H]
    \centering
 \begin{tikzpicture}[scale=0.6]
\foreach \x[count=\xi] in {0.5,1.5,...,4}{
\node[vrtx]  (v\xi) at (2,\x) {};
}
\foreach \x[count=\xi] in {0.5,1.5,...,3}{
\node[vrtx]  (w\xi) at (5.5,.5+\x) {};
}
\foreach \x[count=\xi] in {0.5,1.5,...,5}{
\node[vrtx]  (u\xi) at (9,-.25+\x) {};
}
\draw (v1) -- (v2) -- (v3) -- (v4);
\draw (w1) -- (w2);
\draw[bend right] (w3) to (w1);
\draw (u1) -- (u2)  (u3) -- (u4) -- (u5);
\draw[bend left] (u5) to (u1);
\node[vrtx, label=below:{\footnotesize $t$}]  (t) at (5.5,-0.75) {};
\draw[dashed] (t) -- (v4);
\draw[dotted] (v1) -- (w3);
\draw[dotted] (w2) -- (u3);
\draw[dashed] (u2) -- (t);
\end{tikzpicture}
	\caption{An SEE-tour in the graph $G^*$ given in Figure \ref{fig:G-Star}.}
	   \label{fig:G-Star-SEE}
\end{figure}

Let $F(SEE)$ be the collection of SEE-tours in $G^*$. As indicated in~\cite{glover1997travelling}, $|F(SEE)|=2^m\prod_{k=1}^m|V^k|$. If $|V^k|=3$ for all $k$, then $|F(SEE)|=6^{(n-1)/3}\approx (1.817)^{n-1}$. If $|V^k|=4$ for all $k$, then $|F(SEE)|=8^{(n-1)/4}\approx (1.68)^{n-1}$. Thus finding \myred{a} best TSP tour in $F(SEE)$ is a non-trivial task. Glover and Punnen~\cite{glover1997travelling} proposed an $O(n)$ algorithm to solve the linear TSP when restricted to SEE-tours on $G^*$.


In the definition of QTSP, if the set of feasible solutions is restricted to the class of SEE-tours in $G^*$, we have an instance of QTSP-SEE. Although the linear TSP over SEE-tours can be solved in $O(n)$ time, QTSP-SEE is a much more difficult problem.

Before discussing our complexity results, we present the definition of two well-known NP-hard problems that are used in our reductions; the quadratic unconstrained binary optimization problem (QUBO) and the partition problem ({\sc PARTITION}). \pu{
\begin{description}
\item[QUBO] Given an $n\times n$ cost matrix $Q=(q_{ij})_{n\times n}$, the problem QUBO is to find  an $\vect{x}\in \{0,1\}^n$ such that $\vect{x}^TQ\vect{x}$ is minimized.
\item[PARTITION]  Given $n$ numbers $\alpha_1,\alpha_2,\ldots, \alpha_n$, the PARTITION problem is to determine if there exists subsets $S_1$ and $S_2$ of $\{1,2,\ldots ,n\}$ such that $S_1\cup S_2 =\{1,2,\ldots ,n\}$, $S_1\cap S_2 =\emptyset$, and  $\sum_{j\in S_1}\alpha_j = \sum_{j\in S_2}\alpha_j$.
\end{description}
}
\begin{theorem}\label{thm:DQTSP-SEE}
    QTSP-SEE is strongly NP-hard.
\end{theorem}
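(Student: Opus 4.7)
I would prove strong NP-hardness by a polynomial reduction from the unconstrained binary quadratic programming problem (UBQP), which is strongly NP-hard. Given an instance of UBQP with $n \times n$ matrix $Q = (q_{ij})$, the plan is to construct a $G^*$-graph on $3n+1$ vertices consisting of the tip $t$ and $n$ triangle clusters $V^k = \{a_k, b_k, c_k\}$, and then choose the $c$- and $q$-values on the edges of $G^*$ so that SEE-tours of finite cost correspond bijectively to binary vectors $\vect{x} \in \{0,1\}^n$ with QTSP cost equal to $\vect{x}^T Q \vect{x}$ plus a constant.

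The first step is to use the linear costs to force a canonical structure on any optimal SEE-tour. Fix $M$ larger than $\sum_{i,j} |q_{ij}|$. Assign cost $0$ to the edge $(t, a_1)$, to every inter-cluster edge entering $a_{k+1}$ (i.e. $(v, a_{k+1})$ for all $v \in V^k$, $k=1,\dots,n-1$), to every return edge $(v, t)$ with $v \in V^n$, and to every triangle edge of every $C(k)$; assign cost $M$ to all other edges of $G^*$. Then any SEE-tour of cost less than $M$ must enter each cluster $V^k$ through $a_k$, leaving the ejected edge from $C(k)$ to be one of $(a_k, b_k)$ or $(a_k, c_k)$. This gives exactly $2^n$ cheap SEE-tours, one per $\vect{x} \in \{0,1\}^n$. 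Regardless of the choices, exactly one of the two cost-zero exit edges $(b_k, a_{k+1})$ or $(c_k, a_{k+1})$ is used at each step, so the total linear cost of every cheap SEE-tour is the same constant.

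The second step is to encode $Q$ into the quadratic-cost array. Identify the binary variable $x_k$ with the indicator that the edge $(a_k, b_k)$ lies in the tour (equivalently, with the choice to eject $(a_k, c_k)$). Define $q(e, f) = q_{ij}$ when $e = (a_i, b_i)$ and $f = (a_j, b_j)$, and $q(e, f) = 0$ otherwise. Summing over all $(e, f) \in \tau \times \tau$ yields precisely $\sum_{i,j} q_{ij} x_i x_j = \vect{x}^T Q \vect{x}$. Hence minimizing the QTSP objective over $F(\mathrm{SEE})$ is equivalent to solving UBQP, up to an additive constant. Because the construction is polynomial and the numbers used are polynomially bounded in the UBQP numbers, strong NP-hardness transfers to QTSP-SEE.

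The main obstacle is getting the canonical-entry mechanism right: the SEE rule only allows the ejected edge to be adjacent to the entry vertex, so the linear-cost design must simultaneously (i) force the entry vertex of each cluster to be $a_k$, and (ii) keep both possible exit edges into the next cluster cheap, since the exit vertex is determined by $x_k$ and we do not want that choice to perturb the linear cost. Once this book-keeping is arranged, the correctness of the reduction is immediate from the bijection between cheap SEE-tours and $\{0,1\}^n$.
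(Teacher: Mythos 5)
Your proposal is correct and follows essentially the same route as the paper: a reduction from UBQP in which each binary variable is represented by a triangle $C(i)$ of $G^*$, one designated edge per triangle serves as the indicator of $x_i=1$, and the quadratic costs on pairs of designated edges encode $Q$. The only difference is your big-$M$ linear-cost gadget forcing entry at $a_k$, which is harmless but unnecessary: since every SEE-tour ejects exactly one edge from each triangle and every presence/absence pattern of the designated edges is realizable, the correspondence between SEE-tours and vectors in $\{0,1\}^n$ already preserves the objective value without any forcing.
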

\begin{proof}
    We reduce QUBO to QTSP-SEE. From an instance of QUBO, we  construct an instance of QTSP-SEE as follows. For each variable $x_i$, $1\leq i \leq n$, of QUBO, create a 3-cycle $C(i)$. Choose an edge from each $C(i)$ and label it $i$. Now construct the graph $G^*$ using these cycles. Arbitrarily label the remaining unlabeled edges of $G^*$ as $n+1,n+2,\ldots ,m$. (See Figure 3) Consider an $m\times m$ matrix $Q^{\prime}=(q^{\prime}_{ij})_{m\times m}$ where
    \[
    q^{\prime}_{ij} = \begin{cases}
q_{ij}, & \mbox{ if } 1\leq i,j \leq n \\
0, & \mbox{ otherwise.}
\end{cases}
    \]
    Thus, \myred{$Q^{\prime}=\left[
\begin{array}{c|c}
Q & \mathcal{O}\\ \hline
\mathcal{O}^T & \mathcal{O}'
\end{array}\right]
$where $\mathcal{O}$, and $\mathcal{O}'$ are the zero matrices of size $n \times (m-n)$ and $(m-n)\times (m-n)$, respectively.}
Given any solution $\vect{x}=(x_1,x_2,\ldots ,x_n)$ of QUBO, we can construct an SEE-tour, $\tau$, in $G^*$ containing the edge $i$ if $x_i =1$ and not containing $i$ if $x_i=0$, for $1\leq i \leq n$. Note that $\tau$ contains other edges as well. It can be verified that the cost of $\tau$ with cost matrix $Q^{\prime}$ is precisely $\vect{x}^TQ\vect{x}$.

Conversely, given any SEE-tour $\tau$ in the $G^*$ obtained above, construct a vector $\vect{x}= (x_1,x_2,\ldots ,x_n)$ by assigning $x_i=1$ if and only if edge $i$ is in $\tau$, for $1\leq i \leq n$. The cost of the tour $\tau$ with cost matrix $Q^{\prime}$ is precisely $\vect{x}^TQ\vect{x}$. Since QUBO is strongly NP-hard, the result follows.
\end{proof}
\begin{figure}[H] \centering
\begin{tikzpicture}
\foreach \x[count=\xi] in {0.5,1.5,...,3}{
\node[vrtx]  (v\xi) at (2,\x) {};
}
\node[fit=(v3) (v2) (v1),ellipse,draw=gray!50,minimum width=2cm, label=above:{$C(1)$}] {};
\foreach \x[count=\xi] in {0.5,1.5,...,3}{
\node[vrtx]  (w\xi) at (4.5,\x) {};
}
\node[fit=(w3) (w2) (w1),ellipse,draw=gray!50,minimum width=2cm, label=above:{$C(2)$}] {};
\foreach \x[count=\xi] in {0.5,1.5,...,3}{
\node[vrtx]  (u\xi) at (7,\x) {};
}
\node[fit=(u1) (u2) (u3),ellipse,draw=gray!50,minimum width=2cm, label=above:{$C(3)$}] {};
\foreach \x[count=\xi] in {0.5,1.5,...,3}{
\node[vrtx]  (z\xi) at (11,\x) {};
}
\node[fit=(z1) (z2) (z3),ellipse,draw=gray!50,minimum width=2cm, label=above:{$C(n)$}] {};
\draw (v1) -- (v2) -- node[right,xshift=-0.05cm] {\footnotesize$1$} ++ (v3);
\draw[bend right] (v3) to (v1);
\draw (w1) -- (w2) -- node[right,xshift=-0.05cm] {\footnotesize$2$} ++ (w3);
\draw[bend right] (w3) to (w1);
\draw (u1) -- (u2) -- node[right,xshift=-0.05cm] {\footnotesize$3$} ++ (u3);
\draw[bend right] (u3) to (u1);
\draw (z1) -- (z2) -- node[left,xshift=0.05cm] {\footnotesize$n$} ++ (z3);
\draw[bend left] (z3) to (z1);
\foreach \x in {1,2,3}{
\foreach \y in {1,2,3}{
\draw[dotted] (v\x) -- (w\y);
\draw[dotted] (w\x) -- (u\y);
}
}
\node[vrtx, label=below:{\footnotesize $t$}]  (t) at (7,-1.0) {};
\foreach \y in {1,2,3}{
\draw[dashed] (t) -- (z\y);
\draw[dashed] (t) -- (v\y);
\path (u\y) -- node[auto=false]{\ldots} (z\y);
}
\end{tikzpicture}
	\caption{Construction of the graph $G^*$ used in the proof of Theorem~\ref{thm:DQTSP-SEE}.}
	   \label{fig:G-Star2}
\end{figure}
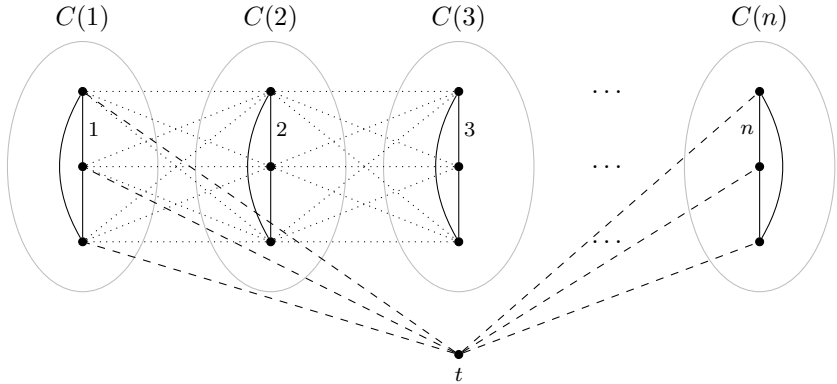

Let us now examine the complexity of some special cases of QTSP-SEE. In the definition of QTSP(p,c), if we restrict the solution set to SEE-tours in $G^*$, we have the instance QTSP(p,c)-SEE. \myred{That is,} QTSP(p,c)-SEE is precisely the special case of QTSP-SEE where the rank of the associated cost matrix is $p$ and a linear cost function is added to the quadratic costs. If the linear part is zero (i.e. homogeneous case), we denote the corresponding instance by QTSP(p,H)-SEE. Recall that \myred{there exists some vectors $\vect{a}^r$ and $\vect{b}^r$ for $r=1,\ldots,p$ such that} QTSP(p,c)-SEE can be stated as
\begin{align*}
\text{Minimize     }  q(\tau) =& \sum_{r=1}^p\left[\left(\sum_{e\in \tau} a_e^r\right)\left(\sum_{e\in \tau}b_e^r\right)\right] + \sum_{e\in \tau}c(e) \\
\text{Subject to } & \tau \in F(SEE).
\end{align*}

\begin{theorem}\label{thm:QTSP1-SEE}
 QTSP(p,c)-SEE is NP-hard even if $p=1$ and $c(e) =0$ for all \myred{$e$}.
\end{theorem}
\begin{proof}
We reduce the PARTITION problem to QTSP(1,H)-SEE.  From an instance of PARTITION \myred{with  data $\alpha_1,\ldots,\alpha_n$}, we construct an instance of QTSP(1,H)-SEE as follows.

For each $k=1,2,\ldots ,n$, create a 3-cycle $C(k)$ on the vertex set $\{v_u^k,v_y^k,v_w^k\}$. Build the graph $G^*=(V,E)$ using these cycles. Define weight for each edge $(i,j)\in E$ as follows:  For $k=1,2,\ldots ,n$, assign weight $\alpha_k$ to edge $(v_y^k,v_u^k)$ and  $-\alpha_k$ to the edge $(v_y^k,v_w^k)$. For $k=1,2,\ldots ,n-1$ assign weights $-M$ for $(v_u^k,v_y^{k+1})$ and $(v_w^k,v_y^{k+1})$ where $M= 1+\sum_{k=1}^n|\alpha_k|$. The weight of edge $(t,v_y^1)$ is $nM$, the weights of edges $(t,v^1_u)$ and $(t,v^1_w)$ are $nM + 1$, and the weights of edges $(v_u^n,t)$ and $(v_w^n,t)$ are $-M$, where $t$ is the tip vertex of $G^*$. All other edges have weight zero. (See Figure 4 for a sample $G^*$ graph constructed.) Let $a_{ij}$ denote the weight of edge $(i,j)$ constructed above and choose another set of weights, $b_{ij}$ for edge $(i,j)$, $i,j\in V$ such that $b_{ij}=a_{ij}$. Then, the objective function of QTSP(1,H)-SEE on the $G^*$ constructed above becomes $\left(\sum_{(i,j)\in \tau}a_{ij}\right )^2$ where $\tau$ is an SEE-tour in this $G^*$. Note that zero is a lower bound on the optimal objective function value of QTSP(1,H)-SEE constructed above. It can be verified that the optimal objective function value of this QTSP(1,H)-SEE is zero precisely when the required partition exists. The result follows from the NP-completeness of PARTITION~\cite{Karp1972}.\end{proof}
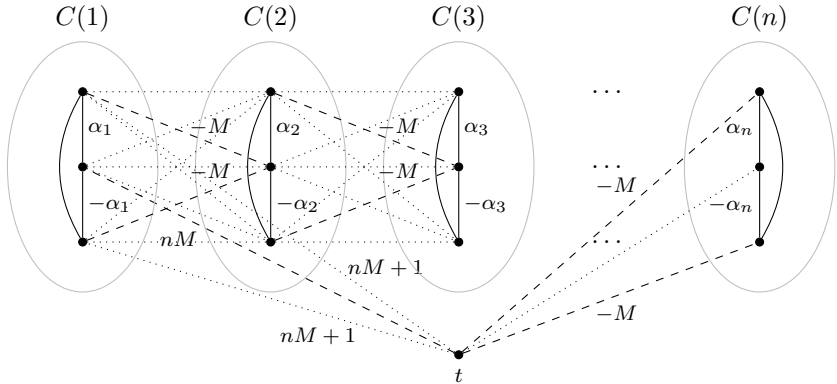
\begin{figure}[H]
    \centering
 \begin{tikzpicture}
\foreach \x[count=\xi]/\y in {0.5/w,1.5/y,2.5/u}{
\node[vrtx] (v\xi) at (2,\x){};
}
\draw[dotted] (t) -- node[right,xshift=0cm,yshift=-0.5cm] {\footnotesize$nM+1$} ++ (v1);
\draw[dashed] (t) -- node[right,xshift=-1.6cm,yshift=0.3cm] {\footnotesize$nM$} ++ (v2);
\draw[dotted] (t) -- node[right,xshift=0.9cm,yshift=-0.6cm] {\footnotesize$nM+1$} ++ (v3);
\node[fit=(v1) (v2) (v3),ellipse,draw=gray!50,minimum width=2cm, label=above:{$C(1)$}] {};
\foreach \x[count=\xi]/\y in {0.5/w,1.5/y,2.5/u}{
\node[vrtx]  (w\xi) at (4.5,\x){};
}
\node[fit=(w1) (w2) (w3),ellipse,draw=gray!50,minimum width=2cm, label=above:{$C(2)$}] {};
\foreach \x[count=\xi]/\y in {0.5/w,1.5/y,2.5/u}{
\node[vrtx]  (u\xi) at (7,\x){};
}
\node[fit=(u1) (u2) (u3),ellipse,draw=gray!50,minimum width=2cm, label=above:{$C(3)$}] {};
\foreach \x[count=\xi]/\y in {0.5/w,1.5/y,2.5/u}{
\node[vrtx]  (z\xi) at (11,\x){};
}
\node[fit=(z1) (z2) (z3),ellipse,draw=gray!50,minimum width=2cm, label=above:{$C(n)$}] {};
\foreach \x[count=\xi] in {v,w,u}{
\draw (\x1) -- node[right,xshift=-0.05cm] {\footnotesize$-\alpha_\xi$} ++ (\x2);
\draw (\x2) -- node[right,xshift=-0.05cm] {\footnotesize$\alpha_\xi$} ++ (\x3);
\draw[bend right] (\x3) to (\x1);
}
\draw (z1) -- node[left,xshift=0.05cm] {\footnotesize$-\alpha_n$} ++ (z2);
\draw (z2) -- node[left,xshift=0.05cm] {\footnotesize$\alpha_n$} ++ (z3);
\draw[bend left] (z3) to (z1);
\foreach \x in {1,2,3}{
\foreach \y in {1,3}{
\draw[dotted] (v\x) -- (w\y);
\draw[dotted] (w\x) -- (u\y);
}
}
\draw[dashed] (v1) -- node[above,xshift=0.45cm,yshift=0.2cm] {\footnotesize$-M$} ++ (w2);
\draw[dashed] (w1) -- node[above,xshift=0.45cm,yshift=0.2cm] {\footnotesize$-M$} ++ (u2);
\draw[dotted] (v2) -- (w2);
\draw[dotted] (w2) -- (u2);
\draw[dashed] (v3) -- node[above,xshift=0.45cm,yshift=-0.2cm] {\footnotesize$-M$} ++ (w2);
\draw[dashed] (w3) -- node[above,xshift=0.45cm,yshift=-0.2cm] {\footnotesize$-M$} ++ (u2);
\node[vrtx, label=below:{\footnotesize $t$}]  (t) at (7,-1.0) {};
\foreach \y in {1,2,3}{
\path (u\y) -- node[auto=false]{\ldots} (z\y);
}
\draw[dashed] (t) -- node[right,xshift=-0.3cm,yshift=-0.2cm] {\footnotesize$-M$} ++ (z1);
\draw[dotted] (t) -- (z2);
\draw[dashed] (t) -- node[right,xshift=-0.3cm,yshift=0.5cm] {\footnotesize$-M$} ++(z3);

\end{tikzpicture}
	\caption{Construction of the graph $G^*$ used in the proof of Theorem~\ref{thm:QTSP1-SEE}.  Note that the dotted edges do not belong to any optimal tour.}
	   \label{fig:G-Star3}
\end{figure}

Despite this negative result, we now show that when $p$ is fixed,  QTSP(p,H)-SEE can be solved in pseudopolynomial time and when the edge weights are non-negative it also admits an FPTAS. Recall that an instance of QTSP(p,H)-SEE is given by $p$ pairs of costs $a^r_{ij}, b^r_{ij}$ for $r=1,2,\ldots ,p$, for each edge $(i,j)\in E$. We formulate QTSP(p,H)-SEE as a {\it rank $p$ quadratic shortest path problem} (QSPP(p,H)) on a directed acyclic graph.  \myred{It is well-known that QSPP(p,H) on an acyclic digraph is NP-hard even if $p=1$~\cite{Punnen01}.}  The definition of this specific quadratic shortest path problem is given below.

Given the graph $G^*$, construct the acyclic digraph $G'=(V',E')$ \myred{with edge weight vectors $\alpha^r$, $\beta^r$, for $r=1,\ldots,p$,} as follows.  Note that the vertex set $V^k$ of cycle $C(k)$ in $G^*$ is represented by $V^k =\{v^k_1,v^k_2,\ldots ,v^k_{r_k}\}$.  Also, the edge set of $C(k)$ is $E(k) = \{e^k_1,e^k_2,\ldots ,e^k_{r_k}\}$ where $e^k_i = (v^k_i,v^k_{i+1})$ and the indices are taken modulo $r_k$.  For $k=1,2,\ldots ,m$, create $\hat{V}^k =\{\hat{v}^{k}_1, \hat{v}^{k}_2,\ldots ,\hat{v}^{k}_{r_k}\}$.  $\hat{V}^k$ can be viewed as a copy of $V^k$.  Define $V^{\prime}= \{s,t\}\cup \cup_{k=1}^{m} (V^k\cup\hat{V}^k)$. For each edge $(v^k_i,v^k_{i+1})$ in $C(k)$, introduce a directed edge $(v^k_i,\hat{v}^k_{i+1})$ and another directed edge $(v^k_{i+1},\hat{v}^k_i)$ \myred{in $E'$,} where the indices are taken modulo $r_k$. The edge $(v^k_i,\hat{v}^k_{i+1})$ represents the event of ejecting edge $e^k_i$ from $C(k)$ where a Hamiltonian cycle ``enters" $C(k)$ through $v^k_i$ and ``leaves" $C(k)$ through $v^k_{i+1}$. For each $i=1,2,\ldots ,r_k$ and each $h=1,2,\ldots ,p$, \myred{set} $\alpha^h_{v^k_i,\hat{v}^k_{i+1}} = C(a^h,k)-a^h_{e^k_i}$ and $\beta^h_{v^k_i,\hat{v}^k_{i+1}} = C(b^h,k)-b^h_{e^k_i}$, where $C(a^h,k)= \sum_{e\in C(k)}a^h_{e}$ and $C(b^h,k)=\sum_{e\in C(k)}b^h_{e}$. Similarly, the edge $(v^k_{i+1},\hat{v}^k_i)$ corresponds to ejecting edge $e^k_i$ from $C(k)$ and a Hamiltonian cycle enters $C(k)$ from $v^k_{i+1}$, traverses $v^k_{i+1},\ldots,v^k_i$, and leaves $C(k)$ through $v^k_i$.
For $h=1,2,\ldots ,p$, set $\alpha^h_{v^k_{i+1},\hat{v}^k_i} =\alpha^h_{v^k_i,\hat{v}^k_{i+1}}$ and $\beta^h_{v^k_{i+1},\hat{v}^k_i} =\beta^h_{v^k_i,\hat{v}^k_{i+1}}$. For each edge $(v^k_i, v^{k+1}_j)$ connecting vertices in $V^k$ and $V^{k+1}$ introduce a directed edge $(\hat{v}^k_i,v^{k+1}_j)$ \myred{in $E'$}. For $h=1,2\ldots ,p$, set $\alpha^h_{\hat{v}^k_i,v^{k+1}_j}=a^h_{v^k_i, v^{k+1}_j}$ and $\beta^h_{\hat{v}^k_i,v^{k+1}_j}=b^h_{v^k_i, v^{k+1}_j}$. The tip vertex $s$ is connected to $v^1_i$, for $i=1,2,\ldots,r_1$, and set the weights for edges \myred{$(s,v^1_i)$} in $E'$ as $\alpha^h_{s,v^1_i}=a^h_{t,v^1_i}$ and $\beta^h_{s,v^1_i}=b^h_{t,v^1_i}$\myred{, and corresponds to entering $C(1)$ via the edge $(t,v^1_i)$}. Similarly, \myred{every vertex $v^m_i$ is connected to $t$, for $i=1,2,\ldots,r_m$, with weights} \myred{set to} $\alpha^h_{\hat{v}^m_i,t} = a^h_{i,t}$ and $\beta^h_{\hat{v}^m_i,t} = b^h_{v^m_i,t}$\myred{, and corresponds to leaving $C(k)$ via the edge $(v^m,t)$}. The graph $G'$ constructed from the $G^*$ in Figure \ref{fig:G-Star} is shown in Figure~\ref{fig:G'-SEE}.\\
\begin{figure}[ht]
    \centering
 \begin{tikzpicture}[>=stealth, decoration={markings, mark=at position 0.25 with {\arrow{>}}}]
\foreach \x[count=\xi] in {0.5,1.5,...,4}{
  \node[vrtx]  (e\xi) at (0,\x) {};
  \node[vrtx]  (f\xi) at (2.25,\x) {};
}
\node[fit=(e4) (e3) (e2) (e1),ellipse,draw=gray!50,minimum width=2cm, label=above:{$V(1)$}] {};
\node[fit=(f4) (f3) (f2) (f1),ellipse,draw=gray!50,minimum width=2cm, label=above:{$\hat{V}(1)$}] {};
\foreach \x[count=\xi] in {0.5,1.5,...,3}{
  \node[vrtx]  (g\xi) at (4.5,.5+\x) {};
  \node[vrtx]  (h\xi) at (6.75,.5+\x) {};
}
\node[fit=(g3) (g2) (g1),ellipse,draw=gray!50,minimum width=2cm, label=above:{$V(2)$}] {};
\node[fit=(h3) (h2) (h1),ellipse,draw=gray!50,minimum width=2cm, label=above:{$\hat{V}(2)$}] {};
\foreach \x[count=\xi] in {0.5,1.5,...,5}{
  \node[vrtx]  (i\xi) at (9,-.25+\x) {};
  \node[vrtx]  (j\xi) at (11.25,-.25+\x) {};
}
\node[fit=(i1) (i2) (i3) (i4) (i5),ellipse,draw=gray!50,minimum width=2cm, label=above:{$V(3)$}] {};
\node[fit=(j1) (j2) (j3) (j4) (j5),ellipse,draw=gray!50,minimum width=2cm, label=above:{$\hat{V}(3)$}] {};
\draw[dashed,postaction={decorate}] (e1) -- (f2);
\draw[dashed,postaction={decorate}] (e2) -- (f3);
\draw[dashed,postaction={decorate}] (e3) -- (f4);
\draw[dashed,postaction={decorate}] (e4) -- (f1);
\draw[dashed,postaction={decorate}] (e1) -- (f4);
\draw[dashed,postaction={decorate}] (e2) -- (f1);
\draw[dashed,postaction={decorate}] (e3) -- (f2);
\draw[dashed,postaction={decorate}] (e4) -- (f3);
\foreach \x in {1,2,3,4}{
  \foreach \y in {1,2,3}{
    \draw[dotted,postaction={decorate}] (f\x) to (g\y);
  }
}
\foreach \x in {1,2,3}{
  \foreach \y in {1,2,3}{
    \ifnum\x=\y [\else \draw[dashed,postaction={decorate}] (g\x) -- (h\y);] \fi
  }
}
\foreach \x in {1,2,3}{
  \foreach \y in {1,2,3,4,5}{
    \draw[dotted,postaction={decorate}] (h\x) to (i\y);
  }
}
\draw[dashed,postaction={decorate}] (i1) -- (j2);
\draw[dashed,postaction={decorate}] (i2) -- (j3);
\draw[dashed,postaction={decorate}] (i3) -- (j4);
\draw[dashed,postaction={decorate}] (i4) -- (j5);
\draw[dashed,postaction={decorate}] (i5) -- (j1);
\draw[dashed,postaction={decorate}] (i1) -- (j5);
\draw[dashed,postaction={decorate}] (i2) -- (j1);
\draw[dashed,postaction={decorate}] (i3) -- (j2);
\draw[dashed,postaction={decorate}] (i4) -- (j3);
\draw[dashed,postaction={decorate}] (i5) -- (j4);
\node[vrtx, label=left:{\footnotesize $s$}]  (t1) at (-1.5,2) {};
\node[vrtx, label=right:{\footnotesize $t$}]  (t2) at (12.75,2.25) {};
\foreach \y in {1,2,3,4}{
  \draw[dashed,postaction={decorate}] (t1) -- (e\y);
}
\foreach \y in {1,2,3,4,5}{
  \draw[dashed,postaction={decorate}] (j\y) -- (t2);
}
\end{tikzpicture}
	\caption{$G'$ constructed from the graph $G^*$ given in Figure \ref{fig:G-Star}.}
	\label{fig:G'-SEE}
\end{figure}

Consider the homogeneous rank $p$ quadratic shortest path problem on $G'$,
\begin{eqnarray*}
    QSPP(p,H,G'): &\textrm{Minimize } & q(P) = \sum_{r=1}^p \left(\sum_{e\in P}a^r_e\right) \left( \sum_{e\in P}b^r_e\right) \\
    & \textrm{Subject to } & P \in \mathcal{P}_{s,t},
\end{eqnarray*}
where $\mathcal{P}_{s,t}$ is the set of all $s-t$ paths in $G'$.
\begin{theorem}\label{pth1}
	From an optimal ($\epsilon$-optimal) solution of QSPP(p,H,G$'$), an optimal ($\epsilon$-optimal) solution to QTSP(p,H)-SEE can be recovered in linear time.
\end{theorem}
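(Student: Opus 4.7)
The plan is to prove the theorem by establishing a weight-preserving bijection between $s$-$t$ paths in $G'$ and SEE-tours in $G^*$, from which optimality and $\epsilon$-optimality transfer immediately. First I would observe that, because $G'$ is layered, any $s$-$t$ path necessarily has the form
\[
s \to v^1_{i_1} \to \hat{v}^1_{j_1} \to v^2_{i_2} \to \hat{v}^2_{j_2} \to \cdots \to v^m_{i_m} \to \hat{v}^m_{j_m} \to t,
\]
since $s$ sends arcs only to $V^1$, the only internal arcs are of the forms $V^k \to \hat{V}^k$ and $\hat{V}^k \to V^{k+1}$, and $t$ is reached only from $\hat{V}^m$. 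By construction, every $V^k \to \hat{V}^k$ arc satisfies $|i_k - j_k| \equiv 1 \pmod{r_k}$ and therefore encodes a unique choice of ejected edge from $C(k)$ together with a direction of traversal of the resulting chain. The inverse map is clear: an SEE-tour is determined by its entry and exit vertices in each $C(k)$, its ejected edges, and its inter-cycle edges, and reading these off in order produces an $s$-$t$ path in $G'$.

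Second, I would verify that this bijection preserves, for every $h \in \{1,\ldots,p\}$, both $\sum_{e \in P} \alpha^h_e$ and $\sum_{e \in P} \beta^h_e$. The inter-cycle arcs and the arcs incident to $s$ and $t$ inherit their weights $\alpha^h,\beta^h$ directly from the corresponding $G^*$-edges. For a $V^k \to \hat{V}^k$ arc the weight is $C(a^h,k) - a^h_{e^k_{\min(i_k,j_k)}}$, which by the definition of $C(a^h,k)$ equals the sum of $a^h$-weights of every edge of $C(k)$ except the ejected one, i.e.\ exactly the sum of $a^h$-weights of the cycle edges that the corresponding SEE-tour traverses inside $C(k)$. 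Summing over all arcs of $P$ yields $\sum_{f \in \tau} a^h_f$, and analogously for $b^h$. Combining across the $p$ rank-one summands gives
\[
q(P) \;=\; \sum_{h=1}^{p} \Bigl(\sum_{e \in P} \alpha^h_e\Bigr)\Bigl(\sum_{e \in P} \beta^h_e\Bigr) \;=\; \sum_{h=1}^{p} \Bigl(\sum_{f \in \tau} a^h_f\Bigr)\Bigl(\sum_{f \in \tau} b^h_f\Bigr) \;=\; q(\tau),
\]
so corresponding paths and SEE-tours have identical objective values.

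Third, because the bijection preserves the objective pointwise, the optimal values of the two problems coincide; hence any optimal (resp.\ $\epsilon$-optimal) $s$-$t$ path $P$ maps to an optimal (resp.\ $\epsilon$-optimal) SEE-tour $\tau$. To recover $\tau$ from $P$, one scans the vertex sequence once: the $s \to V^1$ arc identifies the edge from $t$ into $C(1)$; each $V^k \to \hat{V}^k$ arc identifies the entry vertex, exit vertex, ejected edge, and direction of $P(k)$; each $\hat{V}^k \to V^{k+1}$ arc identifies the inter-cycle edge in $G^*$; and the final $\hat{V}^m \to t$ arc identifies the closing edge. Since an SEE-tour has $n$ edges and the path uses $2m+1$ arcs, this reconstruction runs in $O(n)$ time.

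The main obstacle is the careful bookkeeping needed to check the weight identity on cycle-traversal arcs and to confirm that the pair $(v^k_i,\hat{v}^k_{i+1})$ versus $(v^k_{i+1},\hat{v}^k_i)$ (which share the same ejected edge but encode opposite traversal directions) is correctly disambiguated by the $V^k$- and $\hat{V}^k$-endpoints of the arc in $P$. Once that is in place, the $\epsilon$-optimality statement follows with no further argument, since the bijection preserves every objective value.
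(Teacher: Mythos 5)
Your proposal is correct and follows essentially the same route as the paper: the paper's proof simply asserts the weight-preserving one-to-one correspondence between $s$--$t$ paths in $G'$ and SEE-tours in $G^*$ and concludes from there, while you spell out the layered structure of the paths, the per-cycle weight identity $C(a^h,k)-a^h_{e^k_i}=\sum_{f\in\tau\cap C(k)}a^h_f$, and the $O(n)$ reconstruction. No gaps; your version is just a fully detailed account of the verification the paper leaves to the reader.
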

\begin{proof}
	From the construction of $G'$, it can be verified that there is a one-to-one correspondence between SEE-tours in $G^*$ and $s-t$ paths in $G'$.  Moreover, the objective function values of the corresponding solutions of QTSP(p,H)-SEE and QSPP(p,H,G$'$) are identical, and the result follows.
\end{proof}

\pu{In view of Theorem~\ref{pth1}, to solve QTSP-SEE(p,H) (either by an exact algorithm or by an approximation algorithm), it is enough to solve a quadratic shortest path problem with cost matrix of rank $p$ on an acyclic digraph (QSPP(p,H)). In QSPP(p,H), we assume that the cost matrix is given in rank decomposition form as vectors $\vect{a}^r$ and $\vect{b}^r$ for $r=1,2,\ldots,p$. For the computational complexity of the quadratic shortest path problem and its various special cases, we refer to~\cite{rostami2018quadratic, hu2018special}.
}

\subsection{The QSPP(p,H)}

We now present a labelling algorithm to solve QSPP(p,H)  in pseudopolynomial time on an acyclic \myred{multi}digraph $G=(V,E)$. Construct a distance function $\vect{\delta}:(E,r)\rightarrow \mathbb{R}^{2p}$, which stores the values for $\vect{a}^r$ and $\vect{b}^r$ for $r=1,2,\ldots,p$.  Our algorithm stores a collection of distance label vectors, denoted $\Omega_j$, on each $j\in V$.  A label $\vect{d}$ on vertex $j$ represents the existence of a unique path $P^d_j$ from $s$ to $j$ such that the sums of the costs of $\vect{a}^r$ and $\vect{b}^r$ of edges in $P^d_j$, for each $r=1,2,\ldots,p$, are equal to the entries of $\vect{d}$.  Then, to solve QSPP(p,H), it suffices to find the distance label at $t$ which minimizes the sum of products of its stored values.  We now give the details of our approach.

Note that only vertices of $G$ that lie on some $s-t$ path in $G$ are relevant to QSPP(p,H).  Thus, we can remove all vertices of $G$ that are not reachable from $s$ and those from which $t$ is not reachable.  Such vertices can be identified in $O(|V|+|E|)$ time by two applications of breadth-first search.  Thus, without loss of generality, we assume that each vertex of $G$ lies on some $s-t$ path in $G$, no incoming arcs to $s$, no outgoing arcs from $t$, the vertex set $V=\{1,2,\ldots,n\}$ and the vertex labels follow topological order, $s=1$ and $t=n$.

For each \myred{arc} $(i,j)\in E$, let $\vect{\delta}_{ij}\in \mathbb{R}^{2p}$ be defined as
\begin{equation*}
    \delta_{ij}(h) =
    \begin{cases}
        a^h_{ij} & \text{ if } h=1,2,\ldots,p \\
        b^{h-p}_{ij} & \text{ if } h=p+1,p+2,\ldots, 2p.
    \end{cases}
\end{equation*}

Our pseudopolynomial algorithm to solve QSPP(p,H) maintains a collection $\Omega_j$, of distance label vectors \myred{for all}  $j\in V$.  Each \myred{distance label} vector $\vect{d}\in \Omega_j$ belongs to $\mathbb{R}^{2p}$ and represents a unique path $P^{\vect{d}}_j$ from $1$ to $j$ in $G$ such that
\begin{equation*}
    d(h) =
    \begin{cases}
        \sum_{e\in P^{\vect{d}_j}} a^{r}_e & \text{ if } h=1,2,\ldots,p\\
        \sum_{e\in P^{\vect{d}_j}} b^{r-p}_e & \text{ if } h=p+1,p+2,\ldots,2p.
    \end{cases}
\end{equation*}
For each $j\in V$, let $I(j)=\{i:(i,j)\in E\}$.  Then, given $\Omega_i$ for $i\in I(j)$, the set $\Omega_j$ can be constructed by choosing distinct elements of the multiset
\begin{equation}\label{set1}
    \{\vect{d}+\vect{\delta}_{ij}:\vect{d}\in \Omega_i,i\in I(j)\}.
\end{equation}
Starting with $\Omega_1$ consisting of the zero vector in $\mathbb{R}^{2p}$, the sets $\Omega_1,\Omega_2,\ldots,\Omega_n$ can be generated using the \myred{formula}~(\ref{set1}).  Let $\vect{d}^*\in \Omega_n$ be such that
\[\sum_{i=1}^p d^*(i) d^*(p+i) = \min_{\vect{d}\in \Omega_n} \left\{ \sum_{i=1}^{p} d(i)d(p+i)\right\} .\]
Then $\sum_{i=1}^p d^*(i)d^*(p+i)$ gives the optimal objective function value of QSPP(p,H) on $G$ with $s=1$, $t=n$ and each vertex in $G$ lies on some path from $1$ to $n$ in $G$.  The validity of this follows from the recursion defined by~(\ref{set1}).  Note that each distance label vector $\vect{d}\in \Omega_j$ is such that $\vect{d}=\vect{u}+\vect{\delta}_{ij}$ for some $i\in I(j)$ and $\vect{u}\in \Omega_i$.  For each distance label $\vect{d}\in \Omega_j$, we maintain $pred(\vect{d})=i$  \myred{which stores the predecessor vertex of distance label $\vect{d}$}, and $pointer(\vect{d})$ which is a pointer to \myred{an appropriate distance label} in $\Omega_i$.  A formal description of the algorithm is given below.

\begin{algorithm}[H]
\caption{fixed-rank QSPP}
\begin{algorithmic}
\Require Directed acyclic \myred{multi}graph $G=(V,E)$ with costs $a_e^r$, $b_e^r$ for $r=1,\ldots,p$ and $e\in E$, specified vertices $s$ and $t$.
\Ensure \myred{A} shortest path from $s$ to $t$.
\State Remove each vertex not reachable from $s$ and each vertex from which $t$ cannot be reached
\State Label vertices in topological order with $s=1$ and $t=n$.
\State Construct distance function $\vect{\delta}$ from $\vect{a}^r$, $\vect{b}^r$ for $r=1,\ldots,p$
\State $\Omega_1 = \myred{\{}\vect{0} \myred{\}} \in \mathbb{R}^{2p}$ \myred{\Comment{The set of distance labels at vertex 1 contains only the zero vector.}}
\For{$j=2,3,\ldots,n$}
    \State $\bar{\Omega}=\emptyset$ \myred{\Comment{Begin with the empty set.}}
    \For{$i\in I(j)$}
        \For{$\myred{\vect{w}}\in \Omega_{i}$}
            \State $\vect{d} = \vect{w} + \vect{\delta}_{ij}$ \myred{\Comment{Compute distance label vector.}}
            \State $pred(\vect{d})=i$ \myred{\Comment{Store the predecessor vertex of $\vect{d}$.}}
            \State $pointer(\vect{d})=\myred{\vect{w}}$ \myred{\Comment{Store a pointer to the predecessor distance label.}}
            \State $\bar{\Omega}=\bar{\Omega}\cup \{\vect{d}\}$ \myred{\Comment{Add distance label to set.}}
        \EndFor
    \EndFor
    \State $\Omega_j=$ distinct elements of $\bar{\Omega}$ \myred{\Comment{Remove duplicate vectors.}}
\EndFor
\State Choose $\vect{u}\in \Omega(n)$ such that $\sum_{i=1}^p u(i) u(p+i) = \min_{\vect{d}\in \Omega_n} \left\{\sum_{i=1}^{p} d(i)d(p+i)\right\}$.
\State Trace the path \myred{$P^{\vect{u}}_n$} determined by \myred{$\vect{u}$} using $pred(\myred{\vect{u}})$ and $pointer(\myred{\vect{u}})$.
\newline \Return \myred{$P^{\vect{u}}_n$}
\end{algorithmic}
\end{algorithm}

\begin{lemma}
    $|\Omega_n|\geq |\Omega_j|$ for $j=1,2,\ldots,n$.
\end{lemma}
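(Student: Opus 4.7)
The plan is to build, for each $j$, an explicit injection from $\Omega_j$ into $\Omega_n$, from which $|\Omega_j| \leq |\Omega_n|$ follows immediately.

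First I would observe that the algorithm's recursion forces $\Omega_j$ to coincide with the set of distinct distance-label vectors $\sum_{e\in P}\vect{\delta}_e$ as $P$ ranges over all $(s,j)$-paths in $G$. This follows by induction on the topological rank of $j$: $\Omega_1=\{\vect{0}\}$ matches the empty path at $s=1$, and the update rule $\Omega_j = \{\vect{d}+\vect{\delta}_{ij} : i \in I(j),\ \vect{d}\in \Omega_i\}$ (with duplicates collapsed) exactly enumerates the distance-label vectors of one-edge extensions of $(s,i)$-paths, so by induction it lists all such vectors over all $(s,j)$-paths.

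Next, since every vertex of $G$ is assumed to lie on some $(s,t)$-path, I would fix an arbitrary $(j,n)$-path $Q_j$ and set $\vect{\mu}_j = \sum_{e\in Q_j}\vect{\delta}_e$. The proposed injection is $\phi_j(\vect{d}) = \vect{d}+\vect{\mu}_j$, which is trivially injective as a translation. What remains is to verify that its image lies in $\Omega_n$: given $\vect{d} \in \Omega_j$ with associated $(s,j)$-path $P^d_j$, I would argue that the walk $P^d_j \cdot Q_j$ is a simple $(s,n)$-path whose distance-label vector is $\vect{d}+\vect{\mu}_j$, which therefore belongs to $\Omega_n$ by the characterization established in the previous step.

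The only subtle point, and what I expect to be the main obstacle, is checking that $P^d_j \cdot Q_j$ has no repeated vertex. This is exactly where the topological-ordering reduction pays off: in a DAG with vertices labelled in topological order, every $(s,j)$-path has all its vertices in $\{1,\ldots,j\}$ and every $(j,n)$-path has all its vertices in $\{j,\ldots,n\}$, so $j$ is the unique possible overlap. Once this is confirmed the concatenation is a genuine $(s,n)$-path and the lemma follows immediately.
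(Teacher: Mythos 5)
Your proof is correct, and it rests on the same underlying mechanism as the paper's --- translation by edge-label vectors along a path from $j$ to $n$ is injective --- but it packages that mechanism differently. The paper never leaves the syntactic level of the recursion: for any path $\pi(1),\ldots,\pi(r)$ from $1$ to $n$ passing through $j$, it notes that $\{\vect{d}+\vect{\delta}_{\pi(i)\pi(i+1)}:\vect{d}\in\Omega_{\pi(i)}\}$ is a set of distinct vectors contained in the multiset that is deduplicated to form $\Omega_{\pi(i+1)}$, so $|\Omega_{\pi(i+1)}|\geq|\Omega_{\pi(i)}|$, and then chains these inequalities edge by edge. You instead compose all of these single-edge translations into one map $\vect{d}\mapsto\vect{d}+\vect{\mu}_j$, which forces you to certify membership of the image in $\Omega_n$ semantically; that is why you need the auxiliary characterization of $\Omega_j$ as exactly the set of label vectors of $(s,j)$-paths, plus the check that $P^d_j\cdot Q_j$ is a simple path (which, as you note, follows from the topological labelling --- or simply from acyclicity, since a repeated vertex in the concatenation would close a cycle). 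Both arguments are sound; the paper's is leaner because each single-edge step needs only containment in the pre-deduplication multiset, while your characterization lemma is a genuinely stronger statement that doubles as the justification the paper only gestures at when it says the validity of the overall recursion ``follows from~(\ref{set1})''.
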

\begin{proof}
    Let $P=(\pi(1),\pi(2),\ldots,\pi(r))$ be any path from vertex $1$ to $n$ in $G$.  Consider a vertex $\pi(i)$, $i\in \{1,2,\ldots,r-1\}$.  Since the elements of $\Omega_{\pi(i)}$ are distinct vectors, the vectors that belong to $\{\vect{d}+\vect{\delta}_{\pi(i)\pi(i+1)}:\vect{d}\in \Omega_{\pi(i)}\}$ are distinct. Thus, $|\Omega_{\pi(i+1)}|\geq |\Omega_{\pi(i)}|$.  Since each vertex in $G$ belongs to some path joining vertex $1$ to vertex $n$, the result follows.
\end{proof}

\begin{theorem} \label{thm:QSPP(p,H)}
    QSPP(p,H) can be solved on an acyclic \myred{multi}digraph $G$ in $O(mn^{2p+1}U)$ time, where \\$U=\prod_{h=1}^p \max_{e\in E} |a^h_e| \max_{e\in E} |b^h_e|$, for any fixed $p$.
\end{theorem}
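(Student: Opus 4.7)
The plan is to prove correctness first and then bound the running time by controlling $|\Omega_j|$.

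For correctness, I would show by induction on the topological order of $j$ that $\Omega_j$ equals the set of distinct vectors $\vect{d}\in\mathbb{R}^{2p}$ arising from the ``signature'' of some $(1,j)$-path $P$, i.e.\ with $d(h)=\sum_{e\in P} a^h_e$ for $h\le p$ and $d(h)=\sum_{e\in P} b^{h-p}_e$ for $p<h\le 2p$. The base case $\Omega_1=\{\vect 0\}$ is trivial. For the inductive step, every $(1,j)$-path decomposes as a $(1,i)$-path followed by an edge $(i,j)$ with $i\in I(j)$, so by the definition of $\bar\Omega$ through the recursion~(\ref{set1}), every signature of a $(1,j)$-path lies in $\bar\Omega$, and conversely every element of $\bar\Omega$ is such a signature; deduplicating yields $\Omega_j$. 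Since $q(P)$ depends only on the signature, choosing $\vect u\in\Omega_n$ that minimises $\sum_{i=1}^p u(i)u(p+i)$ yields the optimum, and the stored \emph{pred} and \emph{pointer} fields allow the witnessing path to be reconstructed by walking backwards from $n$ to $1$ in $O(n)$ time.

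For the running time, the key step is to bound $|\Omega_j|$. Any $(1,j)$-path has at most $n-1$ edges, so for each $h\in\{1,\ldots,p\}$ every component $d(h)$ of a vector in $\Omega_j$ is an integer lying in $[-n\max_e|a^h_e|,\,n\max_e|a^h_e|]$, and similarly $|d(p+h)|\le n\max_e|b^h_e|$. Hence $\vect d$ assumes one of at most
\[
\prod_{h=1}^p (2n\max_e|a^h_e|+1)(2n\max_e|b^h_e|+1) \;=\; O\!\left(n^{2p}\,U\right)
\]
distinct values, so $|\Omega_j|=O(n^{2p}U)$ for every $j$.

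With this bound, I would analyse the main loop. Processing an edge $(i,j)$ requires forming $\vect w+\vect\delta_{ij}$ for each $\vect w\in\Omega_i$, which is $O(p)=O(1)$ vector arithmetic per element; summing over all edges gives $\sum_{(i,j)\in E}|\Omega_i|=O(m\,n^{2p}U)$ vector operations to build $\bar\Omega$ in total. Deduplication (either by maintaining each $\Omega_j$ as a sorted list and performing a binary‑search insertion, or by bucketing along one coordinate) adds at most a factor of $O(n)$ in the worst case, since the coordinates themselves range over $O(n\cdot\max_e|\cdot|)$ values. The preprocessing (reachability and topological sort) is $O(n+m)$, the final scan of $\Omega_n$ costs $O(n^{2p}U)$, and path recovery is $O(n)$. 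Adding everything yields the claimed $O(mn^{2p+1}U)$ bound.

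The main obstacle is the size bound on $\Omega_j$: everything else is routine. Once one establishes that each coordinate is an integer in a pseudopolynomial range and that $\Omega_j$ consists of distinct vectors (which is maintained explicitly by the algorithm), the remaining steps are straightforward bookkeeping, with the extra factor of $n$ over the pure vector‑arithmetic cost absorbing the deduplication overhead.
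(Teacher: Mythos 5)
Your proposal is correct and follows essentially the same approach as the paper: a dynamic program over the topologically ordered vertices whose label sets $\Omega_j$ are bounded by $O(n^{2p}U)$ because each of the $2p$ coordinates is an integer in a range of size $O(n\max_e|a^h_e|)$ or $O(n\max_e|b^h_e|)$, with the extra factor of $n$ absorbing the cost of extracting distinct labels. The only difference is that you spell out the correctness induction explicitly, which the paper handles in the discussion preceding the theorem.
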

\begin{proof}
	A topological order of the vertices in \myred{multi}digraph $G$ can be obtained in $O(n+m)$ time.  For each $h=1,2,\ldots,p$, the number of possible distinct values of $a^h$ for a label at any vertex is bounded by $2(n-1)\cdot \max_e |a^h_e|$.  Similarly, the number of distinct values for $b^h$ is bounded by $2(n-1)\cdot \max_e |b^h_e|$.  That is, $|\Omega_j| \leq n^{2p}U$ for any $j\in V$, where $U=\prod_{h=1}^p \max_e |a^h_e| \max_e |b^h_e|$.  To generate each $\Omega_j$, we consider each $\Omega_i$ such that $i\in I(j)$, and $|\bar{\Omega}|\leq mn^{2p}U$.  The distinct elements of $\bar{\Omega}$ can be found in $O(mn^{2p}U)$ time, and hence, all $\Omega_j$ can be constructed in $O(mn^{2p+1}U)$ time.  Selecting the minimum $\myred{\vect{u}}\in \Omega(n)$ such that $\sum_{i=1}^p u(i) u(p+i) = \min_{d\in \Omega_n} \{\sum_{i=1}^{2p} d(i)d(p+i)\}$ can be done in $O(p|\Omega_n|)$ time, and the result follows.
\end{proof}
From Theorem~\ref{thm:QSPP(p,H)}, it follows that QSPP(p,H) on an acyclic digraph can be solved in pseudopolynomial time when $p$ is fixed.  As a consequence, QTSP(p,H)-SEE can be solved in pseudopolynomial time for fixed $p$.
\begin{corollary}
 QTSP(p,H)-SEE can be solved in $O(mn^{2p+1}U)$ time, where $U=\prod_{h=1}^p \max_{e\in E} |a^h_e| \max_e |b^h_e|$.
\end{corollary}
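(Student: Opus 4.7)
The plan is to obtain the corollary as an immediate consequence of Theorem~\ref{thm:QSPP(p,H)} applied to the auxiliary digraph $G'$, combined with the cost-preserving bijection between SEE-tours of $G^*$ and $s$-$t$ paths of $G'$ established in the preceding theorem. First I would observe that $G'$ is acyclic: its vertices partition into the ordered layers
\[
\{s\},\ V^1,\ \hat{V}^1,\ V^2,\ \hat{V}^2,\ \ldots,\ V^m,\ \hat{V}^m,\ \{t\},
\]
and every arc of $G'$ advances from one layer to a strictly later one, either from $V^k$ to $\hat{V}^k$ (the cycle-ejection arcs), from $\hat{V}^k$ to $V^{k+1}$ (the inter-cycle arcs), or is incident to $s$ or $t$. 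Hence a topological order is available in linear time and the hypothesis of Theorem~\ref{thm:QSPP(p,H)} is met.

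Next I would count the parameters of $G'$ to check that they match those in the stated bound. Since each vertex of $G^*$ other than its tip vertex appears twice in $G'$ (once in some $V^k$ and once in the corresponding $\hat{V}^k$), and the new endpoints $s$ and $t$ are added, we have $|V'| = 2(n-1) + 2 = O(n)$. The arcs come in three types: each edge of $G^*$ lying on a cycle $C(k)$ contributes two directed ejection arcs (one per traversal orientation), each inter-cycle edge $(v^k_i, v^{k+1}_j)$ contributes the single arc $(\hat{v}^k_i, v^{k+1}_j)$, and the edges incident to the tip vertex of $G^*$ yield at most $2n$ arcs incident to $s$ and $t$; thus $|E'| = O(m)$.

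Finally I would combine the two ingredients: applying Theorem~\ref{thm:QSPP(p,H)} to $G'$ solves QSPP(p,H,G$'$) in time $O(|E'|\cdot |V'|^{2p+1}\cdot U) = O(mn^{2p+1}U)$, and the preceding theorem lets one recover an optimal SEE-tour in $G^*$ from the optimal $s$-$t$ path in linear time. The only point requiring care, which I expect to be the main bookkeeping obstacle, is verifying that the parameter $U$ stated in terms of the original weights $a^h_e, b^h_e$ of $G^*$ indeed controls the distance-label growth of the dynamic program on $G'$; the ejection arcs carry weights of the form $C(a^h,k) - a^h_{e^k_i}$ and $C(b^h,k) - b^h_{e^k_i}$ rather than single $G^*$ weights, but by the bijection the total weight accumulated along any $s$-$t$ path in $G'$ equals the quadratic contribution of the corresponding SEE-tour in $G^*$, so each coordinate of a label stays bounded by $O(n \max_e |a^h_e|)$ respectively $O(n \max_e |b^h_e|)$. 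The polynomial-in-$n$ factors this introduces are absorbed into the $n^{2p+1}$ already present in the bound of Theorem~\ref{thm:QSPP(p,H)}, yielding the stated complexity.
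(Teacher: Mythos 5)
Your proposal is correct and takes essentially the same route as the paper, which obtains the corollary directly by applying Theorem~\ref{thm:QSPP(p,H)} to the acyclic digraph $G'$ and invoking the cost-preserving one-to-one correspondence between SEE-tours in $G^*$ and $s$-$t$ paths in $G'$. Your additional check that $U$, defined via the original $G^*$ weights, still bounds the distance-label values on $G'$ (since labels along any $s$-$t$ path equal partial tour sums in $G^*$) is a detail the paper leaves implicit but does not constitute a different approach.
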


\pu{Let us now observe that the QSPP(p,H) can be solved a s a sequence of equality type resource-constrained shortest path problems (ECSPP)~\cite{turner2012variants}. The average performance of this approach is likely to be weaker than that of the labelling algorithm when restricted to acyclic digraphs; this approach does not require the graph to be acyclic. However, the complexity of the procedure depends on that of solving ECSPP. Let $\eta_1,\eta_2,\ldots ,\eta_p$ be a set of parameters. Introduce the constraints $\sum_{e\in P}b^r_e=\eta_r$ for $r=1,2,\ldots ,p$ and consider the cost vector $\bar{\vect{a}}$ where $\bar{a}_e=\sum_{r=1}^p\eta_ra^r_e$. Solve the resulting ECSPP and let $P(\eta)$ be the resulting optimal solution. Repeating this for all possible values of $\eta=(\eta_1,\eta_2,\ldots ,\eta_p)$ and choosing the best solution amongst the solutions of these ECSPPs provides an optimal solution to QSPP(p,H). A variation of this solution approach can be extended to construct an FPTAS for QSPP(p,H) with non-negative weights.
}

We now turn our attention to establishing that QSPP(p,H) admits an FPTAS, and hence QTSP(p,H)-SEE also admits an FPTAS.
\begin{theorem}\cite{Mittal2013} \label{thm:mittal}
	For fixed $m$ and $X\subseteq\{0,1\}^n$, let $f_i:X \rightarrow \RR_+$ for $i=1,2,\ldots,m$.  Let $h:\RR^m_+\rightarrow \RR_+$ be any function that satisfies:
	\begin{enumerate}
		\item $h(\vect{y}) \leq h(\vect{y}')$ for all $\vect{y},\vect{y}'\in \RR^m_+$ such that $y_i\leq y'_i$ for all $i=1,2,\ldots,m$; and
		\item \myred{$h(\lambda \vect{y})\leq \lambda^d h(\vect{y}')$} for all $\vect{y}\in \RR^m_+$ and $\lambda > 1$ for some fixed \myred{$d>0$}.
	\end{enumerate}
	There is an FPTAS for solving the general optimization problem: Minimize $g(\vect{x}) = h(f_1(\vect{x}),f_2(\vect{x}),\ldots,f_m(\vect{x}))$, $\vect{x}\in X$ if the following exact problem can be solved in pseudopolynomial time: Given $k\in \ZZ$, $(c_1,c_2,\ldots,c_n)\in \ZZ^n_+$, does there exist $\vect{x}\in X$ such that $\sum_{i=1}^n c_ix_i=k$?
\end{theorem}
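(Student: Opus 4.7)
The plan is to combine a \emph{geometric bucketing} of each $f_i$ with repeated calls to the pseudopolynomial exact oracle, in the spirit of standard FPTAS constructions. Given the target accuracy $\epsilon>0$, I would set $\delta=(1+\epsilon)^{1/c}-1$ so that $(1+\delta)^c=1+\epsilon$. For each $i$, extract from the input encoding a pseudopolynomial upper bound $U_i$ on $f_i(x)$ over $x\in X$ and partition $[1,U_i]$ into geometric buckets $I_{i,k}=[(1+\delta)^{k-1},(1+\delta)^{k})$ for $k=1,\ldots,\lceil \log_{1+\delta}U_i\rceil$. Since $\log(1+\delta)=\Theta(\epsilon/c)$, each range carries only polynomially many buckets in the input size and $1/\epsilon$.

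Next I would enumerate the Cartesian product of bucket indices $(k_1,\ldots,k_m)$; because $m$ is fixed, the resulting collection has polynomial size. For each tuple, call the pseudopolynomial exact oracle to seek an $x\in X$ whose vector $(f_1(x),\ldots,f_m(x))$ lies in $I_{1,k_1}\times\cdots\times I_{m,k_m}$: the single-equation oracle is applied one coordinate at a time, enumerating the pseudopolynomially many admissible integer target values within each bucket and successively restricting the feasible set $X$ to the preimage already discovered. For every solution $x$ produced, compute $g(x)=h(f_1(x),\ldots,f_m(x))$ and retain the minimum.

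The approximation guarantee then falls out of the two hypotheses on $h$. Let $x^*$ be optimal and let $(k_1^*,\ldots,k_m^*)$ be the bucket tuple containing $(f_1(x^*),\ldots,f_m(x^*))$. The iteration for that tuple returns some $\bar x$ satisfying $f_i(\bar x)\leq (1+\delta)f_i(x^*)$ for every $i$. Monotonicity (condition~1) yields
\[
  g(\bar x)\;\leq\; h\bigl((1+\delta)f_1(x^*),\ldots,(1+\delta)f_m(x^*)\bigr),
\]
and the near-homogeneity of condition~2, applied with $\lambda=1+\delta$, bounds the right-hand side by $(1+\delta)^{c}\,h(f_1(x^*),\ldots,f_m(x^*))=(1+\epsilon)g(x^*)$. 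Thus the returned solution is a $(1+\epsilon)$-approximation, and the total running time is pseudopolynomial in the input size times a polynomial in $1/\epsilon$, i.e. a genuine FPTAS.

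The main obstacle will be reconciling the \emph{single-equation} oracle with the need to constrain $m$ coordinates of $(f_1,\ldots,f_m)$ simultaneously: one must argue carefully that iteratively restricting $X$ by successive equality constraints, while enumerating over the fixed-$m$ product of bucket indices, stays within a pseudopolynomial budget. This works precisely because $m$ is a fixed constant and each $f_i$ takes only pseudopolynomially many distinct integer values on $X$, but making this reduction rigorous — rather than the scaling and approximation arithmetic — is the delicate part of the argument.
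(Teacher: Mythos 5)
You should first note that the paper offers no proof of this statement at all: it is quoted with the citation \cite{Mittal2013} and used as a black box, so there is no internal argument to compare against --- the benchmark is the Mittal--Schulz proof itself. Measured against that, your overall architecture is the right one: geometric bucketing of each objective with ratio $1+\delta$ where $(1+\delta)^c=1+\epsilon$, enumeration over the Cartesian product of bucket indices (polynomial because $m$ is fixed), and the final bound obtained by applying monotonicity and then near-homogeneity with $\lambda=1+\delta$. That part of the arithmetic is correct and is essentially the Papadimitriou--Yannakakis-style construction of an approximate Pareto frontier that Mittal and Schulz employ.

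The genuine gap is exactly the step you flag at the end: ``successively restricting the feasible set $X$ to the preimage already discovered'' is not an operation the hypothesis licenses. The oracle is assumed to solve the exact problem \emph{for $X$}, not for subsets of $X$ carved out by earlier equality constraints, and nothing guarantees that the exact problem over $\{x\in X: f_1(x)=K_1\}$ remains pseudopolynomially solvable; iterating coordinate by coordinate therefore does not reduce to the stated oracle. The standard repair, and the one used in \cite{Mittal2013}, is to fold the $m$ simultaneous targets $f_i(x)=K_i$ into a \emph{single} equation $\sum_{i=1}^m M^{\,i-1}f_i(x)=\sum_{i=1}^m M^{\,i-1}K_i$ with $M$ strictly larger than every attainable value of every $f_i$, so that one call to the given oracle decides simultaneous attainability; this uses the implicit assumption that each $f_i$ is of the form $\sum_j c_j x_j$ with $c\in\ZZ^n_+$ (which is how the theorem is applied to $a_r^Tx$ and $b_r^Tx$ in the paper). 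Two smaller omissions: the oracle as stated is a decision oracle, so you must argue that a witness $x$ can be recovered (e.g.\ from the dynamic-programming table or by self-reducibility) in order to actually return $\bar{x}$; and your buckets $[(1+\delta)^{k-1},(1+\delta)^k)$ cover only $[1,U_i)$, so the value $f_i(x)=0$ must be handled as a separate cell or the monotonicity step breaks for solutions with a vanishing coordinate.
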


Consider the homogenous fixed-rank quadratic optimization problem (rank-QOP), with rank $p$:
\begin{eqnarray*}
	&\textrm{Minimize } & q(\vect{x}) =  \sum_{r=1}^p a^T_r \vect{x}\cdot b^T_r \vect{x} \\
	&\textrm{Subject to } & \vect{x} \in X,
\end{eqnarray*}
where $a_r,b_r\in\ZZ^n_+$ and $X\subseteq \{0,1\}^n$.  It is clear that the conditions of Theorem~\ref{thm:mittal} are satisfied with \myred{$d=2$}.  We have the following corollary.
\begin{corollary}
	There exists an FPTAS for solving (rank-QOP) if the following exact problem can be solved in pseudopolynomial time: Given $k\in \ZZ$, $(c_1,c_2,\ldots,c_n)\in \ZZ^n_+$, does there exist $\vect{x}\in X$ such that $\sum_{i=1}^n c_ix_i=k$?
\end{corollary}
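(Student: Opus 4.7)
The plan is to apply Theorem~\ref{thm:mittal} directly by recognizing (rank-QOP) as an instance of the general framework considered there. First I would set $m=2p$ and choose the linear functions $f_r(x)=a_r^Tx$ for $r=1,2,\ldots,p$ and $f_{p+r}(x)=b_r^Tx$ for $r=1,2,\ldots,p$. Since $a_r,b_r\in\ZZ^n_+$ and $X\subseteq\{0,1\}^n$, each $f_i$ maps $X$ into $\RR_+$, as required. Then I would define $h:\RR^{2p}_+\rightarrow\RR_+$ by $h(y_1,\ldots,y_{2p})=\sum_{r=1}^p y_r\,y_{p+r}$, so that the composite $h(f_1(x),\ldots,f_{2p}(x))=\sum_{r=1}^p(a_r^Tx)(b_r^Tx)=q(x)$ coincides exactly with the (rank-QOP) objective.

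Next I would verify the two hypotheses imposed on $h$ in Theorem~\ref{thm:mittal}. Monotonicity (condition~1) is immediate: on $\RR^{2p}_+$ each term $y_r y_{p+r}$ is a product of nonnegative quantities and is therefore non-decreasing in each coordinate, so their sum $h$ inherits the same property. For the scaling condition (condition~2), a one-line computation gives
\[
    h(\lambda y)=\sum_{r=1}^p(\lambda y_r)(\lambda y_{p+r})=\lambda^2\sum_{r=1}^p y_r y_{p+r}=\lambda^2 h(y),
\]
so condition~2 holds with the fixed exponent $c=2$ (and in fact with equality, which is stronger than needed).

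With both hypotheses confirmed and $m=2p$ fixed whenever $p$ is fixed, Theorem~\ref{thm:mittal} immediately produces an FPTAS for the minimization of $g(x)=h(f_1(x),\ldots,f_{2p}(x))=q(x)$ over $x\in X$, provided the associated exact problem---deciding, for a given $K\in\ZZ$ and $(c_1,\ldots,c_n)\in\ZZ^n_+$, whether there exists $x\in X$ with $\sum_{i=1}^n c_ix_i=K$---can be solved in pseudopolynomial time. This is precisely the hypothesis of the corollary, so the result follows.

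There is no genuine obstacle here; the content of the proof is essentially bookkeeping, namely rewriting the rank-$p$ quadratic objective as a composition of $2p$ nonnegative linear functionals with the bilinear-sum aggregator $h$, and checking the two closure properties of $h$. The only mild care needed is to observe that nonnegativity of the coefficient vectors $a_r,b_r$ together with $X\subseteq\{0,1\}^n$ is what allows us to treat the $f_i$ as $\RR_+$-valued, which in turn is what makes the monotonicity hypothesis meaningful.
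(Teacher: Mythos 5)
Your proposal is correct and follows exactly the route the paper takes: the paper simply asserts that the conditions of Theorem~\ref{thm:mittal} hold with $c=2$, and your write-up fills in the (routine) details of that assertion by taking $m=2p$, $f_r(x)=a_r^Tx$, $f_{p+r}(x)=b_r^Tx$, and $h(y)=\sum_{r=1}^p y_r y_{p+r}$. Nothing is missing.
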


Note that the exact shortest path problem is NP-hard. We relax the problem to that of finding a shortest walk that minimizes the QSPP(p,H) objective function.  An optimal solution to the relaxed problem will have the same value as the optimal solution to the original problem since removing all cycles from any $s-t$ walk gives an $s-t$ path.  Assuming that the weights are nonnegative, the exact problem can be solved in $O(nmk)$ time by dynamic programming~\cite{Mittal2013}.  We now have the following corollaries which result from this discussion and the construction given above.
\begin{corollary} \label{thm:FPTAS QSPP(p,H)}
	QSPP(p,H) and QTSP(p,H)-SEE both admit an FPTAS when $\vect{a},\vect{b}\geq \vect{0}$.
\end{corollary}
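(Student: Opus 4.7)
The plan is to package the problem to fit the hypotheses of Theorem~\ref{thm:mittal}, exhibit a pseudopolynomial exact algorithm for the associated ``sum exactly $K$'' subproblem, and then carry the resulting FPTAS across the reduction from QTSP(p,H)-SEE to QSPP(p,H). First I would represent each $s$-$t$ walk $W$ by its edge indicator vector $x\in\{0,1\}^{|E|}$ and write the objective as $\sum_{r=1}^p (\vect{a}_r^T x)(\vect{b}_r^T x)$. In the notation of Theorem~\ref{thm:mittal}, I would set the number of inner linear forms to $2p$, take $f_r(x)=\vect{a}_r^T x$ and $f_{p+r}(x)=\vect{b}_r^T x$ for $r=1,\dots,p$, and let $h(y)=\sum_{r=1}^p y_r\,y_{p+r}$. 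The hypothesis $\vect{a},\vect{b}\in\RR_+^n$ ensures each $f_i$ maps into $\RR_+$; the function $h$ is coordinatewise monotone on $\RR_+^{2p}$ and satisfies $h(\lambda y)=\lambda^2 h(y)$, so the two hypotheses of Theorem~\ref{thm:mittal} hold with $c=2$.

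Second, I would verify that the exact problem is pseudopolynomial. For QSPP(p,H) it asks, given nonnegative integer edge weights $c_e$ and a target $K\in\ZZ_+$, whether there is an $s$-$t$ walk $W$ with $\sum_{e\in W} c_e = K$. A standard two-dimensional dynamic program that maintains reachability labels on $V\times\{0,1,\dots,K\}$ and relaxes each edge once per weight value solves this in $O(nmK)$ time, as cited from~\cite{Mittal2013}. Combining this with the preceding setup, the rank-QOP corollary to Theorem~\ref{thm:mittal} yields an FPTAS for the walk-relaxed QSPP(p,H).

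Third, I would argue that the walk relaxation is loss-free. Because $a^h_e,b^h_e\ge 0$ for all $h,e$, deleting a cycle from any $s$-$t$ walk produces an $s$-$t$ walk whose values $f_i$ are all no larger, and hence, by monotonicity of $h$, with no larger objective. Iterating yields an $s$-$t$ path of objective value no larger than the original walk, so the walk optimum equals the path optimum and a $(1+\varepsilon)$-approximate walk produced by the FPTAS converts in linear time to a $(1+\varepsilon)$-approximate path. This establishes the FPTAS for QSPP(p,H). For QTSP(p,H)-SEE, the digraph $G'$ constructed from $G^*$ is acyclic, and by the earlier theorem $\varepsilon$-optimal $s$-$t$ paths in $G'$ recover $\varepsilon$-optimal SEE-tours in $G^*$ in linear time, so the FPTAS transfers immediately.

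The chief obstacle I anticipate is the loss-free walk-to-path step. Without the nonnegativity of $\vect{a}$ and $\vect{b}$ the reduction could strictly increase the objective, and the $O(nmK)$ dynamic program would no longer suffice, since the exact shortest path problem with arbitrary integer weights is NP-hard. The nonnegativity hypothesis is exactly what is needed to neutralise this difficulty via monotonicity, after which the corollary assembles cleanly from Theorem~\ref{thm:mittal}, the dynamic program, and the $G^*\to G'$ reduction.
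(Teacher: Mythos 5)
Your proposal is correct and follows essentially the same route as the paper: invoke the Mittal--Schulz framework (Theorem~\ref{thm:mittal}) with $h(y)=\sum_{r=1}^p y_r y_{p+r}$ and $c=2$, relax to $s$-$t$ walks so that the exact problem becomes solvable in $O(nmK)$ time by dynamic programming, observe that nonnegativity makes the walk relaxation loss-free, and transfer the FPTAS to QTSP(p,H)-SEE via the $G^*\to G'$ correspondence. You are somewhat more explicit than the paper about why cycle removal cannot increase the objective, but the argument is the same.
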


The instance of QTSP(A) when the family of tours is restricted to $F(SEE)$ is denoted by QTSP(A)-SEE. Our reduction of QTSP(p,H)-SEE to QSPP(p,H) discussed above cannot be applied directly to solve QTSP(A)-SEE. The reduction, however, can be modified to take into consideration the cost arising from adjacent pairs of edges to get an equivalent instance of adjacent QSPP(p,H) on an acyclic graph. Since the adjacent QSPP on an acyclic graph can be solved in polynomial time~\cite{rostami2015quadratic}, QTSP(A)-SEE can be solved in polynomial time. We present below a simple $O(n^2)$ algorithm to solve QTSP(A)-SEE directly.

We note that the QTSP(A) objective function only depends on consecutive edges in a tour.  Moreover, since SEE-tours visit the cycles in $G^*$ sequentially, a dynamic programming algorithm naturally emerges.  Refer to any chain $P(k)$ that may be produced in the construction of an SEE-tour in $G^*$ as an \emph{SEE-Hamiltonian path} of length $k$.  \myred{A} minimum cost SEE-Hamiltonian path of length $k$ from $t$ to $v^k_i$ can be expressed as a minimum of the SEE-Hamiltonian paths of length $k-1$ plus the costs induced by connecting each path to $v^k_i$.  We now give the details of the procedure.

Without loss of generality, assume that the input for QTSP(A)-SEE is given as cost of paths of length two in $G^*$. \myred{That is,} for any 2-path $u-v-w$ with $v$ as the middle vertex, a cost $q(u,v,w)$ is given. Note that $q(u,v,w)=q(w,v,u)$.  Let $f\left(v^k_i,v^k_{i+1}\right)$ be the length of \myred{a} smallest SEE-Hamiltonian path in $G^*$ from $t$ to $v^k_i$ when edge $\left(v^k_i,v^k_{i+1}\right)$ is ejected \myred{(assuming it contains edge $(t,v^1_{i+1})$ or $(v^{k-1}_j, v^k_{i+1})$)} and let
\[g\left(v^k_i\right) = \sum_{s=1}^{r_k}q\left(v^k_s,v^k_{s+1},v^k_{s+2}\right)-q\left (v^k_i,v^k_{i+1},v^k_{i+2}\right )-q\left (v^k_{i-1},v^k_i,v^k_{i+1}\right ).\]
In the above expression and that follows, we assume that the indices $r_k+1 \equiv 1, r_k+2 \equiv 2,$ and $ 0 \equiv r_k$. Then for $k=2,3,\ldots m$,
\begin{align*}
f(v^k_i,v^k_{i+1})  = \min_{1\leq j\leq r_{k-1}}&\left\{f\left(v^{k-1}_j,v^{k-1}_{j-1}\right ) \right . +q\left(v^{k-1}_{j+1},v^{k-1}_j,v^{k}_{i+1}\right)+q\left(v^{k-1}_j,v^{k}_{i+1},v^k_{i+2}\right)+g\left (v^k_i\right ),\\
&f\left(v^{k-1}_j,v^{k-1}_{j+1}\right )  +q\left(v^{k-1}_{j-1},v^{k-1}_j,v^{k}_{i+1}\right)+
q\left(v^{k-1}_j,v^{k}_{i+1},v^k_{i+2}\right)+\left . g\left(v^k_i\right ) \right\} ,
\end{align*}
\myred{and}
\myred{
\begin{align*}
f(v^k_i,v^k_{i-1})  = \min_{1\leq j\leq r_{k-1}}&\left\{f\left(v^{k-1}_j,v^{k-1}_{j-1}\right ) \right . +q\left(v^{k-1}_{j+1},v^{k-1}_j,v^{k}_{i-1}\right)+q\left(v^{k-1}_j,v^{k}_{i-1},v^k_{i-2}\right)+g\left (v^k_{i-1}\right ),\\
&f\left(v^{k-1}_j,v^{k-1}_{j+1}\right )  +q\left(v^{k-1}_{j-1},v^{k-1}_j,v^{k}_{i-1}\right)+
q\left(v^{k-1}_j,v^{k}_{i-1},v^k_{i-2}\right)+\left . g\left(v^k_{i-1}\right ) \right\} .
\end{align*}
}
The value\myred{s} of $f\left(v^1_i,v^1_{i+1}\right)$ \myred{and $f\left(v^1_i,v^1_{i-1}\right)$} for $1\leq i\leq r_1$ can be calculated directly to initiate the above recursion. Thus we can compute the value of the SEE-Hamiltonian path from $t$ to $v^m_i$ for each $i=1,2,\ldots ,r_m$. Adding the arc $(v^m_i,t)$ for $i=1,2,\ldots,r_m$ yields a corresponding SEE-tour and  the best such tour gives an optimal solution to QTSP(A)-SEE. The foregoing discussions can be summarized in the theorem below.
\begin{theorem}
	QTSP(A)-SEE can be solved in $O(n^2)$ time.
\end{theorem}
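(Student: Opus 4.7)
The plan is to establish the theorem by designing a cycle-by-cycle dynamic program that sweeps $C(1),C(2),\dots,C(m)$ from left to right and uses as its state the pair (exit vertex, ejected edge) in the current cycle, with value $f(v^k_i,v^k_{i\pm 1})$ equal to the minimum cost of a partial SEE-tour from $t$ through $C(1)\cup\cdots\cup C(k)$ ending at $v^k_i$ with the indicated ejected edge. The starting observation is that every SEE-tour intersects each $C(k)$ in a Hamiltonian sub-path that omits exactly one cycle edge, and the two endpoints of this sub-path coincide with the endpoints of the ejected edge, so this state captures all the information about $C(k)$ needed to extend the tour.

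The key bookkeeping device is $g(v^k_i,v^k_{i+1})$, which captures the internal adjacent-quadratic cost inside $C(k)$. Writing $S_k=\sum_{\ell=1}^{r_k}q(v^k_\ell,v^k_{\ell+1},v^k_{\ell+2})$ for the sum of $q$ over all consecutive $2$-paths in the full cycle $C(k)$, the only consecutive $2$-paths of $C(k)$ that contain the ejected edge $(v^k_i,v^k_{i+1})$ are $(v^k_{i-1},v^k_i,v^k_{i+1})$ and $(v^k_i,v^k_{i+1},v^k_{i+2})$; subtracting them from $S_k$ leaves the $q$-sum over precisely the $2$-paths traversed by the Hamiltonian sub-path of $C(k)$, and this inclusion--exclusion is the definition of $g$. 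Every adjacent $2$-path of the full SEE-tour then falls into one of three classes: (a) three consecutive vertices lying in the same $C(k)$, handled by $g$; (b) a bridge edge between $V^{k-1}$ and $V^k$ together with an incident cycle edge at one of its endpoints, handled by the two $q(\cdot,\cdot,\cdot)$ terms in the recurrence; and (c) the two $2$-paths incident to the tip $t$, handled at initialization and closing.

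The recurrence itself is then a direct consequence: extending from $(v^{k-1}_j,v^{k-1}_{j\pm 1})$ to $(v^k_i,v^k_{i+1})$ via the bridge edge $(v^{k-1}_j,v^k_{i+1})$, the previous state forces the last edge inside $C(k-1)$ to be $(v^{k-1}_{j\mp 1},v^{k-1}_j)$, so exactly the two bridge-adjacent $2$-paths $q(v^{k-1}_{j\mp 1},v^{k-1}_j,v^k_{i+1})$ and $q(v^{k-1}_j,v^k_{i+1},v^k_{i+2})$ are newly created, and the new intra-cycle contribution is $g(v^k_i,v^k_{i+1})$; minimizing over $j$ and the two orientations at $C(k-1)$ reproduces the displayed formula. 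The base case $f(v^1_i,v^1_{i\pm 1})$ is obtained by summing the $2$-path costs along the forced path $t\to v^1_{i\pm 1}\to\cdots\to v^1_i$ directly, and closing is obtained by appending $(v^m_i,t)$ with its two $2$-path contributions and minimizing over the final state.

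For the complexity bound, level $k$ has $2r_k$ states and each is computed by a minimum over $2r_{k-1}$ options, so the work at level $k$ is $O(r_k r_{k-1})$; summing, $\sum_k r_k r_{k-1} \le (\max_k r_k)\sum_k r_{k-1} = O(n^2)$, and the values $S_k$ (hence all $g$'s) are precomputed in $O(n)$ total time. The main obstacle I expect is a clean treatment of class (c): the $2$-path through $t$ couples the first edge $(t,v^1_{\cdot})$ with the closing edge $(v^m_{\cdot},t)$, and since the starting vertex at $V^1$ is forced by the ejected edge and orientation chosen in $C(1)$, this can be resolved either by carrying a short tag that records the entry vertex of $C(1)$ through the DP or by charging this coupling cost only at the closing step, neither of which disturbs the $O(n^2)$ asymptotic bound.
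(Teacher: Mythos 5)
Your proposal is essentially the paper's own proof: the same dynamic program whose state is the pair (exit vertex, ejected edge) of the current cycle, the same inclusion--exclusion quantity $g\left(v^k_i,v^k_{i+1}\right)$ extracted from the cycle sum, the same recurrence minimizing over the $2r_{k-1}$ predecessor states of $C(k-1)$, and the same $\sum_k r_k r_{k-1}=O(n^2)$ complexity accounting. You in fact go slightly beyond the paper's writeup by explicitly flagging the $2$-path through the tip vertex $t$, which couples the entry edge at $C(1)$ with the closing edge at $C(m)$ and which the paper's description of the initialization and closing steps passes over silently.
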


\section{Double edge ejection tours on $G^*$} \label{sec:DEE}

In this section we consider a special class of tours, called double edge ejection tours (DEE-tours), introduced by Glover and Punnen~\cite{glover1997travelling}.  We present various complexity results regarding QTSP and its variations restricted to this class.

The family of double edge ejection (DEE) tours in $G^*$ consists of all tours which can be obtained by the following steps.
\begin{enumerate}
    \item Begin by extending two edges $(t,v^1_i)$ and $(t,v^1_{i+1})$ from $t$ to the cycle $C(1)$ and ejecting an edge $(v^1_i,v^1_{i+1})$ from $C(1)$.  The result creates an expanded cycle $D(1)$ which includes all vertices of $C(1)$ and $t$.
    \item For each $k$ from $1$ to $m-1$, in that order, select an edge $(v^k_j,v^k_{j+1})$ of $C(k)$ where $i\neq j$, and any edge $(v^{k+1}_s,v^{k+1}_{s+1})$ of $C(k\myred{+1})$.  Eject these two edges and add either the two edges $(v^k_j,v^{k+1}_j)$ and $(v^{k}_{j+1},v^{k+1}_{s+1})$ or the two edges $(v^k_j,v^{k+1}_{s+1})$ and $(v^{k}_{j+1},v^{k+1}_{s})$, creating the expanded cycle $D(k+1)$ containing the vertices of $D(k)$ and $C(k+1)$.
    \item The cycle $D(m)$ is a DEE-tour in $G^*$ (See Figure \ref{fig:G-Star-DEE} for a DEE-tour in the $G^*$ graph of Figure \ref{fig:G-Star}).
\end{enumerate}

\begin{figure}[H]
    \centering
\begin{tikzpicture}[scale=0.6]
\foreach \x[count=\xi] in {0.5,1.5,...,4}{
\node[vrtx]  (v\xi) at (2,\x) {};
}
\foreach \x[count=\xi] in {0.5,1.5,...,3}{
\node[vrtx]  (w\xi) at (5.5,.5+\x) {};
}
\foreach \x[count=\xi] in {0.5,1.5,...,5}{
\node[vrtx]  (u\xi) at (9,-.25+\x) {};
}
\draw (v1) -- (v2)  (v3) -- (v4);
\draw[bend right] (w3) to (w1);
\draw (u1) -- (u2)  (u3) -- (u4) -- (u5);
\draw[bend left] (u5) to (u1);
\node[vrtx, label=below:{\footnotesize $t$}]  (t) at (5.5,-0.75) {};
\draw[dashed] (t) -- (v4) (t) -- (v1);
\draw[dotted] (v2) -- (w2) (v3) -- (w3);
\draw[dotted] (w2) -- (u2);
\draw[dotted] (w1) -- (u3);
\end{tikzpicture}
	\caption{A DEE-tour in the graph $G^*$ given in Figure \ref{fig:G-Star}.}
	   \label{fig:G-Star-DEE}
\end{figure}
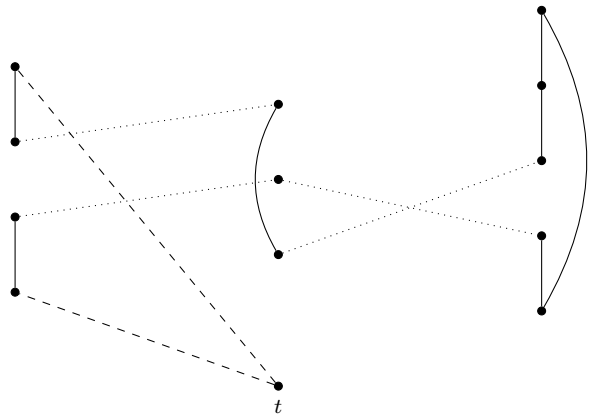

The variation of QTSP when the family of feasible solutions are restricted to DEE-tours in $G^*$ is denoted by QTSP-DEE.  Let $F(DEE)$ be the collection of all DEE-tours in $G^*$.  As indicated in \cite{glover1997travelling}, $|F(DEE)|=2^{m-1}\prod_{k=1}^m |V^k|\prod_{k=1}^{m-1}|V^{k-1}|$.  If $|V^k|=3$ for all $k$, then $|F(DEE)|=2^{m-1}3^{2m-1}\approx (1.26)^{n-4}\cdot (1.44)^{2n-7} \approx (2.61)^{n-4}$.  If $|V^k|=4$ for all $k$, then $|F(DEE)|=2^{m-1}4^{2m-1}\approx (1.19)^{n-5}\cdot 2^{n-3}$.  Despite the fact that this is an exponential class of tours, when the feasible solutions are restricted to DEE-tours in $G^*$,  the linear TSP can be solved in $O(n)$ time~\cite{glover1997travelling}. This simplicity however does not extend to QTSP-DEE.
\begin{theorem}\label{Thm:QTSP-DEE}
	QTSP-DEE is strongly NP-hard.
\end{theorem}

\begin{proof}
We reduce QUBO to QTSP-DEE.  Without loss of generality, from an instance of UBQP on $2n + 1$ variables, we construct an instance of QTSP-DEE as follows.  Create a 3-cycle $C(i)$ for $i=1,\ldots,n$.  Choose two edges from each $C(i)$, $i=1,\ldots,n-1$ and a single edge from $C(n)$ and label these $1,2,\ldots,2n+1$.  Now construct the graph $G^*$ using these cycles.  Arbitrarily label the remaining unlabelled edges of $G^*$ as $2n+2,2n+3,\ldots,m$.  Consider an $m\times m$ matrix $Q'=(q'_{ij})_{m\times m}$ where
\[
    q^{\prime}_{ij} = \begin{cases}
q_{ij}, & \mbox{ if } 1\leq i,j \leq 2n+1 \\
0, & \mbox{ otherwise.}
\end{cases}
    \]
Thus, $Q^{\prime}=\left[
\begin{array}{c|c}
Q & \mathcal{O}\\ \hline
\mathcal{O}^T & \mathcal{O}'
\end{array}\right]
$where $\mathcal{O}$, and $\mathcal{O}'$ are the zero matrices of size $2n+1 \times (m-2n-1)$ and $(m-2n-1)\times (m-2n-1)$, respectively.
Given any solution $\vect{x}=(x_1,x_2,\ldots ,x_{2n+1})$ of QUBO, we can construct an SEE-tour, $\tau$, in $G^*$ containing the edge $i$ if $x_i =1$ and not containing $i$ if $x_i=0$, for $1\leq i \leq 2n+1$. Note that $\tau$ contains other edges as well. It can be verified that the cost of $\tau$ with cost matrix $Q^{\prime}$ is precisely $\vect{x}^TQ\vect{x}$.

Conversely, given any DEE-tour $\tau$ in the $G^*$ obtained above, construct a vector $\vect{x}= (x_1,x_2,\ldots ,x_{2n+1})$ by assigning $x_i=1$ if and only if edge $i$ is in $\tau$, for $1\leq i \leq 2n+1$. The cost of the tour $\tau$ with cost matrix $Q^{\prime}$ is precisely $\vect{x}^TQ\vect{x}$. Since UBQP is strongly NP-hard, the result follows.
\end{proof}

Let us now examine the complexity of some special cases of QTSP restricted to DEE-tours.  The problem QTSP(p,H) where the family of feasible solutions is restricted to DEE-tours on $G^*$ is called double edge ejection QTSP with rank $p$, and is denoted by QTSP(p,H)-DEE. We have the analogous definition for QTSP-DEE(p,c).

\begin{theorem}\label{thm:QTSP(p,H)-DEE}
	QTSP(p,c)-DEE is NP-hard even if $p=1$ and $c(e)=0$ for all $e\in E$.
\end{theorem}
\begin{proof}
We reduce the PARTITION problem to QTSP(1,H)-DEE. From an instance of PARTITION \myred{with the given data $\alpha_1,\ldots,\alpha_n$}, construct an instance of QTSP(1,H)-DEE as follows.

For each $k=1,2,\ldots ,n$ create a 3-cycle $C(k)$ on the vertex set $\{u_k,v_k,w_k\}$. Build the graph $G^*=(V,E)$ using these cycles. Introduce a weight for each edge $(i,j)\in E$ as follows:  For $k=1,2,\ldots ,n$, assign weight $\alpha_k$ to edge $(v_k,u_k)$,  $-\alpha_k$ to the edge $(v_k,w_k)$, and $M$ to $(w_k,u_k)$, where $M= 1+n\left(\sum_{k=1}^n|\alpha_k|\right )$. The weights of the edges $(t,v_1), (t,u_1)$ and $(t,w_1)$ are $-\left(\sum_{k=1}^n\alpha_k\right )/4$.  All other edges have weight zero. Let $a_{ij}$ denote the weight of edge $(i,j)$ constructed above and choose another set of weight $b_{ij}$ which is the same as $a_{ij}$. Then the objective function of QTSP(1,H)-DEE on the $G^*$ constructed above becomes $\left(\sum_{(i,j)\in \tau}a_{ij}\right )^2$, where $\tau$ is a DEE-tour in this $G^*$. It may be noted that from each 3-cycle $C(k)$, two edges are to be ejected. In any optimal solution to the constructed instance of QTSP(1,H)-DEE, one of the ejected edge from each cycle must be the one with weight $M$. Thus for the other ejected edge, one \myred{needs} to choose an edge with weight $\alpha_k$ or $-\alpha_k$.  It can be verified that the optimal objective function value of this QTSP(1,H)-DEE is zero precisely when the required partition exists. The \myred{result} follows from the NP-completeness of PARTITION~\cite{Karp1972}.
\end{proof}
\begin{figure}[H]
    \centering
 \begin{tikzpicture}
\foreach \x[count=\xi]/\y in {0.5/w,1.5/y,2.5/u}{
\node[vrtx] (v\xi) at (2,\x){};
}
\node[vrtx, label=below:{\footnotesize $t$}]  (t) at (6,-1.0) {};
\draw[dashed] (t) -- (v1); 
\draw[dashed] (t) -- (v2); 
\draw[dashed] (t) -- (v3); 
\node[fit=(v1) (v2) (v3),ellipse,draw=gray!50,minimum width=2cm, label=above:{$C(1)$}] {};
\foreach \x[count=\xi]/\y in {0.5/w,1.5/y,2.5/u}{
\node[vrtx]  (w\xi) at (4.5,\x){};
}
\node[fit=(w1) (w2) (w3),ellipse,draw=gray!50,minimum width=2cm, label=above:{$C(2)$}] {};
\foreach \x[count=\xi]/\y in {0.5/w,1.5/y,2.5/u}{
\node[vrtx]  (u\xi) at (7,\x){};
}
\node[fit=(u1) (u2) (u3),ellipse,draw=gray!50,minimum width=2cm, label=above:{$C(3)$}] {};
\foreach \x[count=\xi]/\y in {0.5/w,1.5/y,2.5/u}{
\node[vrtx]  (z\xi) at (11,\x){};
}
\node[fit=(z1) (z2) (z3),ellipse,draw=gray!50,minimum width=2cm, label=above:{$C(n)$}] {};
\foreach \x[count=\xi] in {v,w,u}{
\draw (\x1) -- node[right,xshift=-0.05cm] {\footnotesize$-\alpha_\xi$} ++ (\x2);
\draw (\x2) -- node[right,xshift=-0.05cm] {\footnotesize$\alpha_\xi$} ++ (\x3);
\draw (\x3) to [bend right=30] node[midway,left] {\footnotesize$M$} (\x1);
}
\draw (z1) -- node[left,xshift=0.05cm] {\footnotesize$-\alpha_n$} ++ (z2);
\draw (z2) -- node[left,xshift=0.05cm] {\footnotesize$\alpha_n$} ++ (z3);
\draw (z3) to [bend left=30] node[midway,right]{\footnotesize$M$}(z1);
\foreach \x in {1,2,3}{
\foreach \y in {1,3}{
\draw[dotted] (v\x) -- (w\y);
\draw[dotted] (w\x) -- (u\y);
}
}
\draw[dotted] (v1) -- (w2);
\draw[dotted] (w1) -- (u2);
\draw[dotted] (v2) -- (w2);
\draw[dotted] (w2) -- (u2);
\draw[dotted] (v3) -- (w2);
\draw[dotted] (w3) -- (u2);
\foreach \y in {1,2,3}{
\path (u\y) -- node[auto=false]{\ldots} (z\y);
}
\draw[dotted] (t) -- (z1);
\draw[dotted] (t) -- (z2);
\draw[dotted] (t) -- (z3);

\end{tikzpicture}
	\caption{Construction of the graph $G^*$ used in the proof of Theorem~\ref{thm:QTSP(p,H)-DEE}.  Note that the dashed edges have weight $-\left(\sum_{k=1}^n\alpha_k\right )/4$.}
	\label{fig:G-Star-DEE(p,H)}
\end{figure}
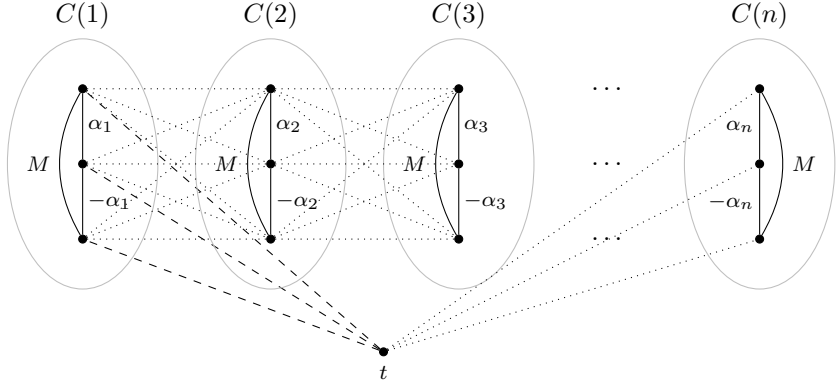

We now show that QTSP(p,c)-DEE (and hence QTSP(p,H)-DEE) can be solved in pseudopolynomial time and the problems admit FPTAS when the edge weights are non-negative. Our proof technique is to reduce QTSP(p,H)-DEE to QSPP(p,H) on a specially-constructed acyclic multigraph which we now describe.

Given a $G^*$ graph construct the acyclic digraph $G^{\prime}$  as follows. Note that the vertex $V^k$ of cycle $C(k)$ in $G^*$ is represented by $V^k =\{v^k_1,v^k_2,\ldots ,v^k_{r_k}\}$. Also, the edge set of $C(k)$ is $E(k) = \{e^k_1,e^k_2,\ldots ,e^k_{r_k}\}$, where $e^k_i = (v^k_i,v^k_{i+1})$ and the indices are taken modulo $r_k$. For $k=1,2,\ldots ,m-1$, create $\hat{E}(k) =\{\hat{e}^k_1, \hat{e}^k_2,\ldots ,\hat{e}^k_{r_k}\}$.  $\hat{E}(k)$ can be viewed as a copy of $E(k)$. Construct a graph $G^{\prime}=(V^{\prime},E^{\prime})$ where $V^{\prime}= \{t_1,t_2\}\cup E(m) \cup \left\{\cup_{k=1}^{m-1} (E(k)\cup\hat{E}(k))\right \}$. For each $k=1,2,\ldots ,m-1$ and $i,j=1,2,\ldots ,r_k$, introduce a directed edge $e=(e^k_i,\hat{e}^k_j), i\neq j$ and set $2p$ weights $\alpha^h_{e}=C(a^h,k)-a^h_{e^k_i}-a^h_{e^k_j}$ and $\beta^h_e= C(b^h,k)-b^h_{e^k_i}-b^h_{e^k_j}$ for $h=1,2,\ldots ,p$, where $C(a^h,k)=\sum_{e\in C(k)}a^h_e$ and $C(b^h,k)=\sum_{e\in C(k)}b^h_e$. The edge $e=(e^k_i,\hat{e}^k_j)$, $i\neq j$, represents the events of ejecting edges $e^k_i$ and $e^k_j$ from cycle $C(k)$ where a Hamiltonian cycle ``enters" $C(k)$ through $e^k_i$ and ``leaves" $C(k)$ through $e^k_j$. For every $k=1,2,\ldots ,m-1$, $i=1,2,\ldots ,r_{k}$, and $j=1,2,\ldots ,r_{k+1}$, introduce two directed edges $e_1=(\hat{e}^k_i,e^{k+1}_j)$ and $e_2=(\hat{e}^k_i,e^{k+1}_j)$. Note that $e_1$ and $e_2$ are parallel edges in $G^{\prime}$ in the same direction.  Let $u_1$ and $u_2$ be the endpoints of $e^k_i$ in $G^{*}$ and $v_1,v_2$ be the endpoints of $e^{k+1}_j$ in $G^*$.   Now set the weights $\alpha^h_{e_1}=a^h_{u_1v_1}+a^h_{u_2v_2}$ and $\beta^h_e= b^h_{u_1v_2}+b^h_{u_2v_1}$ for $h=1,2,\ldots ,p$. The edge $e_1$ represents ejecting $e^k_i$ from $C(k)$ and $e^{k+1}_j$ from $C(k+1)$  in $G^*$ and patching cycles using ``non-cross edges".  \myred{The edge $e_2$ represents the same event but the patching is done using ``cross edges" instead.} The tip vertex $t_1$ is connected to $e^1_i$ for $i=1,2,\ldots ,r_1$, and set $2p$ weights for the edges $e_i=(t, e^k_i)$ in $G^{\prime}$ as $\alpha^h(e_i)= a^h_{(t,v_i)}+a^h_{(t,v_{i+1})}$, $\beta^h(e_i)= b^h_{(t,v_i)}+b^h_{(t,v_{i+1})}$, where $e_i=(v_i,v_{i+1})$ in $G^*$. Finally, connect all the nodes $e^m_i$ for $i=1,2,\ldots ,r_m$, to $t_2$ and all the $\alpha$ and $\beta$ weights of these edges are zero.  The graph $G'$ constructed from the $G^*$ in Figure~\ref{fig:G-Star} is shown in Figure~\ref{fig:G'}.

\begin{figure}[ht]
\centering
\begin{tikzpicture}[->,>=stealth, decoration={markings, mark=at position 0.5 with {\arrow{>}}}]
\foreach \x[count=\xi] in {0.5,1.5,...,4}{
    \node[vrtx]  (e\xi) at (0,\x) {};
    \node[vrtx]  (f\xi) at (2.25,\x) {};}
\node[fit=(e4) (e3) (e2) (e1),ellipse,draw=gray!50,minimum width=2cm, label=above:{$E(1)$}] {};
\node[fit=(f4) (f3) (f2) (f1),ellipse,draw=gray!50,minimum width=2cm, label=above:{$\hat{E}(1)$}] {};
\foreach \x[count=\xi] in {0.5,1.5,...,3}{
    \node[vrtx]  (g\xi) at (4.5,.5+\x) {};
    \node[vrtx]  (h\xi) at (6.75,.5+\x) {};}
\node[fit=(g3) (g2) (g1),ellipse,draw=gray!50,minimum width=2cm, label=above:{$E(2)$}] {};
\node[fit=(h3) (h2) (h1),ellipse,draw=gray!50,minimum width=2cm, label=above:{$\hat{E}(2)$}] {};
\foreach \x[count=\xi] in {0.5,1.5,...,5}{
    \node[vrtx]  (i\xi) at (9,-.25+\x) {};
    \node[vrtx]  (j\xi) at (11.25,-.25+\x) {};}
\node[fit=(i1) (i2) (i3) (i4) (i5),ellipse,draw=gray!50,minimum width=2cm, label=above:{$E(3)$}] {};
\node[fit=(j1) (j2) (j3) (j4) (j5),ellipse,draw=gray!50,minimum width=2cm, label=above:{$\hat{E}(3)$}] {};
\foreach \x in {1,2,3,4}{
    \foreach \y in {1,2,3,4}{
        \ifnum\x=\y [\else \draw[dashed] (e\x) -- (f\y);] \fi}}
\foreach \x in {1,2,3,4}{
    \foreach \y in {1,2,3}{
        \draw[bend right,dotted] (f\x) to (g\y);
        \draw[bend left,dotted] (f\x) to (g\y);}}
\foreach \x in {1,2,3}{
    \foreach \y in {1,2,3}{
        \ifnum\x=\y [\else \draw[dashed] (g\x) -- (h\y);] \fi}}
\foreach \x in {1,2,3}{
    \foreach \y in {1,2,3,4,5}{
        \draw[bend right,dotted] (h\x) to (i\y);
        \draw[bend left,dotted] (h\x) to (i\y);}}
\foreach \x in {1,2,3,4,5}{
    \foreach \y in {1,2,3,4,5}{
        \ifnum\x=\y [\else \draw[dashed] (i\x) -- (j\y);] \fi}}
\node[vrtx, label=left:{\footnotesize $t_1$}]  (t1) at (-1.5,2) {};
\node[vrtx, label=right:{\footnotesize $t_2$}]  (t2) at (12.75,2.25) {};
\foreach \y in {1,2,3,4}{
    \draw[dashed] (t1) -- (e\y);}
\foreach \y in {1,2,3,4,5}{
    \draw[dashed] (j\y) -- (t2);}
\end{tikzpicture}
\caption{$G'$ constructed from the graph $G^*$ given in Figure \ref{fig:G-Star}.}
\label{fig:G'}
\end{figure}

\begin{theorem}
	From an optimal solution or $\epsilon$-optimal solution of QSPP(p,H,G$'$), an optimal solution to QTSP(p,H)-DEE can be recovered in linear time.
\end{theorem}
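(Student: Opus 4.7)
My plan follows the same template used for the SEE case in the previous section. I would establish a value-preserving bijection between DEE-tours $\tau \in F(DEE)$ in $G^*$ and directed $t_1$-$t_2$ paths $P$ in $G'$. Given such a bijection, any optimal (respectively $\epsilon$-optimal) path in $G'$ decodes in linear time into an optimal (respectively $\epsilon$-optimal) DEE-tour, since exact equality of the two objective values preserves both optimality and any multiplicative approximation ratio.

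First, I would spell out the bijection. A DEE-tour $\tau$ is completely determined by a sequence of choices: (i) the edge $e^1_i$ ejected from $C(1)$ when the two edges from the tip $t$ are attached to the endpoints of $e^1_i$; and (ii) for each $k=1,\ldots,m-1$, the second edge $e^k_j$ (with $j\neq i$) ejected from $C(k)$ to leave $C(k)$, the edge $e^{k+1}_s$ of $C(k+1)$ ejected to enter it, and the choice of a crossing or non-crossing patching between the two resulting pairs of endpoints. Following the construction of $G'$, choice (i) is encoded by the arc $(t_1, e^1_i)$; the choice of $e^k_j$ in step (ii) is encoded by the arc $(e^k_i, \hat{e}^k_j)$; and the choice of $e^{k+1}_s$ together with the patching type is encoded by one of the two parallel arcs between $\hat{e}^k_j$ and $e^{k+1}_s$. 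The path terminates with an arc $(e^m_{\cdot}, t_2)$. Reading a $t_1$-$t_2$ path in $G'$ reconstructs the same list of choices, producing the inverse map.

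Second, I would verify the cost identity. Using the weights assigned in the construction, the contribution of each arc of $P$ accounts for a disjoint portion of the edges of $\tau$: the initial arc $(t_1, e^1_i)$ contributes the two tip edges attached to $C(1)$; the arc $(e^k_i, \hat{e}^k_j)$ contributes $C(a^h,k)-a^h_{e^k_i}-a^h_{e^k_j}$, which is exactly the total $a^h$-weight of the edges of $C(k)$ retained in $\tau$; and each patching arc contributes the $a^h$-weight of the two connecting edges of $G^*$ it represents. Summing over the arcs of $P$ yields $\sum_{e\in P}\alpha^h_e=\sum_{e\in \tau}a^h_e$ for each $h$, and analogously $\sum_{e\in P}\beta^h_e=\sum_{e\in \tau}b^h_e$. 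Since $q(\tau)$ and $q(P)$ are the same symmetric bilinear function of these two linear sums, the objective values coincide.

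The main obstacle I anticipate is the careful bookkeeping in the cost-identity step, in particular checking that the two parallel arcs between $\hat{e}^k_i$ and $e^{k+1}_j$ correctly encode the crossing and non-crossing patching cases via their $\alpha^h,\beta^h$ weights, and that the boundary cycles $C(1)$ (one edge ejected in step (i) and a second in step (ii) with $k=1$) and $C(m)$ (a single edge ejected in step (ii) with $k=m-1$) are treated consistently by the weights placed on the initial arcs $(t_1, e^1_i)$ and on the terminal arcs $(e^m_{\cdot}, t_2)$. Once this accounting is confirmed, both the bijection and the value-preservation are immediate, and decoding a path into its DEE-tour is a linear scan that simply reads off the edges of $G^*$ represented by each arc of $P$.
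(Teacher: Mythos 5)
Your proposal takes essentially the same route as the paper: the paper's proof is a one-line assertion that the construction of $G'$ yields a one-to-one, objective-value-preserving correspondence between DEE-tours in $G^*$ and $t_1$--$t_2$ paths in $G'$, which is exactly the bijection-plus-cost-identity you spell out in detail. The boundary issue you flag for $C(m)$ is worth flagging, since the paper assigns zero weight to the terminal arcs $(e^m_i,t_2)$ and so the retained edges of $C(m)$ must implicitly be charged there (the natural weight is $C(a^h,m)-a^h_{e^m_i}$), but this does not change the argument or its conclusion.
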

\begin{proof}
	From the construction of $G'$, it can be verified that there is a one-to-one correspondence between SEE-tours in $G^*$ and $t_1-t_2$ paths in $G'$ that preserves the objective function values of the corresponding solutions of QTSP(p,H)-DEE and QSPP(p,H).  Note that $G^{\prime}$ is an acyclic multigraph with at most two multiples of each edge. It is possible to \myred{use} our algorithm for QSPP(p,H) on an acyclic digraph \myred{for} the multigraph case as well, and the result follows.
\end{proof}
Now, from the construction above, and the results from the previous section, we immediately have the following.
\begin{corollary}
QTSP(p,H)-DEE can be solved in $O(mn^{2p+1}U)$ time, where $U=\prod_{h=1}^p \max_{e\in E} |a^h_e| \max_e |b^h_e|$, for any fixed $p$.  Moreover, QTSP(p,H)-DEE admits an FPTAS when $\vect{a},\vect{b} \geq \vect{0}$.
\end{corollary}

The instance of QTSP(A) when the family of tours is restricted to $F(DEE)$ is denoted by QTSP(A)-DEE.  Our reduction of QTSP(p,H)-DEE to QSPP(p,H) discussed above cannot be applied directly to solve QTSP(A)-DEE.  As before, the reduction can be modified to take into consideration the cost arising from adjacent pairs of edges to get an instance of adjacent QSPP(p,H) on an acyclic graph, and hence QTSP(A)-DEE can be solved in polynomial time.  We present a simple $O(n^3)$ algorithm to solve QTSP(A)-DEE directly.

Every DEE-tour in $G^*$ is defined by the edges which are removed upon entering and exiting each cycle $C(i)$, the edge which is removed from $C(m)$, and the choice of matching between the endpoints of the edge removed when exiting $C(i)$ and the edge removed when entering $C(i+1)$ for $i=1,\ldots,m-1$.  When the edges which are removed from cycle $C(i)$ share an endpoint, the quadratic costs which are incurred depend on the edges which are removed from $C(i-1)$ and $C(i+1)$ (as in the tour in Figure \ref{fig:G-Star-DEE}).  This prevents the approach used by Glover and Punnen \cite{glover1997travelling} for the linear TSP from being extended to QTSP(p,H)-DEE.  This also complicates any dynamic programming approach which attempts to construct an optimal solution by considering one cycle in each iteration, however, we show that it still can be done by considering two consecutive cycles instead of one in a dynamic programming recursion.

Let $f^1(v^k_i, v^{k-1}_{j})$ and $f^2(v^k_i, v^{k-1}_{j})$ be the lengths of the smallest expanded cycle $D(k)$ in $G^*$ containing edges $(v^k_i,v^{k-1}_{j})$, $(v^k_{i+1},v^{k-1}_{j+1})$, and $(v^k_i,v^{k-1}_{j+1})$, $(v^k_{i+1},v^{k-1}_{j})$, respectively, and let
\[
g\left(v^k_i\right) = \sum_{s=1}^{r_k}q\left(v^k_s,v^k_{s+1},v^k_{s+2}\right)-q\left (v^k_i,v^k_{i+1},v^k_{i+2}\right )-q\left (v^k_{i-1},v^k_i,v^k_{i+1}\right ).
\]
In the above expression and that follows, we assume that the indices $r_k+1 \equiv 1, r_k+2 \equiv 2,$ and $ 0 \equiv r_k$. \myred{Let $h^1_1\left(v^k_i,v^{k-1}_{j}\right)$ represent the length of the smallest expanded cycle $D(k)$ in $G^*$ containing edges $(v^k_i,v^{k-1}_{j})$ and $(v^k_{i+1},v^{k-1}_{j+1})$, that was not constructed by selecting two adjacent edges to eject from $C(k-1)$.  That is,}
\begin{align*}
h^1_1\left(v^k_i,v^{k-1}_{j}\right) =& q\left(v^k_{i},v^{k-1}_{j},v^{k-1}_{j-1}\right) + q\left(v^k_{i+1},v^{k-1}_{j+1},v^{k-1}_{j+2}\right) - q\left(v^{k-1}_{j-1},v^{k-1}_{j},v^{k-1}_{j+1}\right) - q\left(v^{k-1}_{j},v^{k-1}_{j+1},v^{k-1}_{j+2}\right) \\& + \min_{\substack{1\leq s \leq r_{k-1},\\s\not\in \{ j-1,j,j+1\} \\ 1\leq t\leq r_{k-2}}}  \left\{ f^1(v_s^{k-1},v_t^{k-2}),f^2(v_s^{k-1},v_t^{k-2}) \right\}.
\end{align*}
\myred{Let $h^1_2\left(v^k_i,v^{k-1}_{j}\right)$ represent the length of the smallest expanded cycle $D(k)$ containing edges $(v^k_i,v^{k-1}_{j})$ and $(v^k_{i+1},v^{k-1}_{j+1})$, that was constructed by selecting edge $(v^{k-1}_{j+1},v^{k-1}_{j+2})$ when constructing $D(k-1)$ and $(v^{k-1}_j,v^{k-1}_{j+1})$ when constructing $D(k)$.  That is,}
\begin{align*}
h^1_2\left(v^k_i,v^{k-1}_{j}\right) =& \myred{q\left(v^k_{i},v^{k-1}_{j},v^{k-1}_{j-1}\right) - q\left(v^{k-1}_{j-1},v^{k-1}_{j},v^{k-1}_{j+1}\right)} \\& + \min_{1\leq t\leq r_{k-2}} \{ f^1(v^{k-1}_{j+1},v^{k-2}_{t}) + q(v^{k}_{i+1},v^{k-1}_{j+1},v^{k-2}_{t})  - q(v^{k-1}_{j},v^{k-1}_{j+1}, v^{k-2}_{t}), \\& f^2(v^{k-1}_{j+1},v^{k-2}_{t}) + q(v^{k}_{i+1},v^{k-1}_{j+1},v^{k-2}_{t+1}) - q(v^{k-1}_{j},v^{k-1}_{j+1},v^{k-2}_{t+1}) \}.
\end{align*}
\myred{Similarly, let $h^1_3\left(v^k_i,v^{k-1}_{j}\right)$ represent the length of the smallest expanded cycle $D(k)$ containing edges $(v^k_i,v^{k-1}_{j})$ and $(v^k_{i+1},v^{k-1}_{j+1})$, that was constructed by selecting edge $(v^{k-1}_{j-1},v^{k-1}_{j})$ when constructing $D(k-1)$ and $(v^{k-1}_j,v^{k-1}_{j+1})$ when constructing $D(k)$.  That is,}
\begin{align*} h^1_3\left(v^k_i,v^{k-1}_{j}\right) =& \myred{q\left(v^k_{i+1},v^{k-1}_{j+1},v^{k-1}_{j+2}\right) - q\left(v^{k-1}_{j},v^{k-1}_{j+1},v^{k-1}_{j+2}\right)}
\\& + \min_{1\leq t\leq r_{k-2}} \{ f^1(v^{k-1}_{j-1},v^{k-2}_{t}) + q(v^{k}_{i},v^{k-1}_{j},v^{k-2}_{t+1}) - q(v^{k-1}_{j+1},v^{k-1}_{j}, v^{k-2}_{t+1}), \\& f^2(v^{k-1}_{j-1},v^{k-2}_{t}) + q(v^{k}_{i},v^{k-1}_{j},v^{k-2}_{t}) - q(v^{k-1}_{j+1},v^{k-1}_{j},v^{k-2}_{t}) \} .
\end{align*}
Then for $k=3,4,\ldots,m$,
\begin{align*}
f^1\left(v^k_i,v^{k-1}_{j}\right) =& g(v^k_i) + q(v^{k}_{i-1},v^{k}_{i},v^{k-1}_{j}) + q(v^{k}_{i+2},v^{k}_{i+1},v^{k-1}_{j+1}) \\& + \min\{ h^1_1\left(v^k_i,v^{k-1}_{j}\right), h^1_2\left(v^k_i,v^{k-1}_{j}\right), h^1_3\left(v^k_i,v^{k-1}_{j}\right) \}.
\end{align*}
Similar expressions follow for $h^2_1\left(v^k_i,v^{k-1}_{j}\right)$, $h^2_2\left(v^k_i,v^{k-1}_{j}\right)$, $h^2_3\left(v^k_i,v^{k-1}_{j}\right)$ and $f^2\left(v^k_i,v^{k-1}_{j}\right)$. The values of $f^1\left(v^2_i,v^1_{j}\right)$ and $f^2\left(v^2_i,v^1_{j}\right)$ for $1\leq i\leq r_2$ and $1\leq j\leq r_1$ can be calculated directly to initiate the above recursion. Thus, we can compute the value of the smallest expanded cycle $D(i)$ for $i=1,2,\ldots,r_m$. The foregoing discussions can be summarized in the theorem below.

\begin{theorem}
	QTSP(A)-DEE can be solved in $O(n^3)$ time.
\end{theorem}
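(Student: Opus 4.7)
The plan is to verify that the recurrence already written down in the statement computes an optimal QTSP(A)-DEE tour, and then to analyse its running time with some simple precomputation. Concretely I would (i) identify the DP states with structural features of DEE-tours, (ii) verify the recurrence by induction on $k$, and (iii) bound the total work at each level.

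For the identification step, a DEE-tour in $G^*$ is determined by, for each $C(k)$, the pair of ejected edges, together with the choice of ``non-cross''/``cross'' matching at every interface. The state $(v^k_i, v^{k-1}_j)$ with superscript in $\{1,2\}$ encodes the ejected edge $(v^k_i, v^k_{i+1})$ of $C(k)$ on the $C(k-1)$ side, the ejected edge $(v^{k-1}_j, v^{k-1}_{j+1})$ of $C(k-1)$ on the $C(k)$ side, and the matching used between them, which is exactly the data needed at the $C(k)$--$C(k-1)$ boundary. I would verify inductively that $f^1(v^k_i, v^{k-1}_j)$ equals the minimum cost of $D(k)$ with this prescribed boundary, under a ``standard'' assumption that the not-yet-chosen $C(k+1)$-side ejected edge of $C(k)$ avoids $\{v^k_{i-1},\ldots,v^k_{i+2}\}$. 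The term $g(v^k_i)$ sums all triples internal to $C(k)$ except those incident to $(v^k_i, v^k_{i+1})$; the two $+q(\cdot,\cdot,\cdot)$ terms add the two interface triples centred in $C(k)$; and the three cases $h^1_1, h^1_2, h^1_3$ split according to whether the $C(k-2)$-side ejected edge of $C(k-1)$ is disjoint from, shares vertex $v^{k-1}_{j+1}$ with, or shares vertex $v^{k-1}_j$ with $(v^{k-1}_j, v^{k-1}_{j+1})$. In the two non-disjoint cases, the inductive values were computed under the ``standard'' assumption, so the $-q$ corrections remove the phantom triple that was included and the $+q$ corrections add the actual wrap-around triple through the shared vertex. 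A direct accounting argument then shows every adjacent triple of the final tour is counted exactly once. The optimal tour cost is obtained by minimising $f^1$ and $f^2$ at level $m$, with a small adjustment because $C(m)$ has only one ejected edge.

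For the running time I would process levels $k = 3, 4, \ldots, m$ in order. For each $k$, precompute $m(s) := \min_t\{f^1(v^{k-1}_s, v^{k-2}_t), f^2(v^{k-1}_s, v^{k-2}_t)\}$ in $O(r_{k-1} r_{k-2})$ time, and then use prefix/suffix minima over $s$ so that $\min_{s \notin \{j-1,j,j+1\}} m(s)$ can be answered in $O(1)$ per $j$ after $O(r_{k-1})$ preprocessing. This reduces the $h^1_1$ (and analogous $h^2_1$) contribution to $O(1)$ per state. The terms $h^1_2, h^1_3$ each carry an explicit minimum over $t$ whose summand couples $i$ and $t$ and hence takes $O(r_{k-2})$ per state. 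With $O(r_k r_{k-1})$ states at level $k$, the work per level is $O(r_k r_{k-1} r_{k-2})$, and summing over $k$ gives $\sum_k r_k r_{k-1} r_{k-2} \leq (\max_k r_k)(\max_k r_{k-1}) \sum_k r_{k-2} \leq n \cdot n \cdot n = O(n^3)$.

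The main obstacle will be the case analysis that verifies the recurrence: ensuring that across $g$, the two interface triple terms, and the three subcases $h^1_1, h^1_2, h^1_3$ (together with their $h^2$ counterparts), every adjacent triple of the final tour is counted exactly once and no triple through an ejected edge is counted. The subcases in which the two ejected edges of $C(k-1)$ share a vertex are the trickiest, because the inductive $f$-value was computed under the ``standard'' disjoint-ejection assumption and must be corrected both by subtracting phantom triples and by adding the correct wrap-around triples through the shared vertex.
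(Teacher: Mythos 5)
Your proposal follows essentially the same route as the paper: the paper's own argument is exactly this dynamic program over pairs of consecutive cycles, with the $h^1_2,h^1_3$ cases correcting the inductively computed value when the two ejected edges of $C(k-1)$ share an endpoint, and your reading of the states and of the role of the $\pm q$ correction terms is accurate. Your running-time analysis is in fact more careful than the paper's (which simply asserts the bound): observing that the minimum in $h^1_1$ is independent of $i$ and can be served from precomputed prefix/suffix minima is genuinely needed to avoid an $O(n^4)$ evaluation of that term, so this detail is a worthwhile addition rather than a deviation.
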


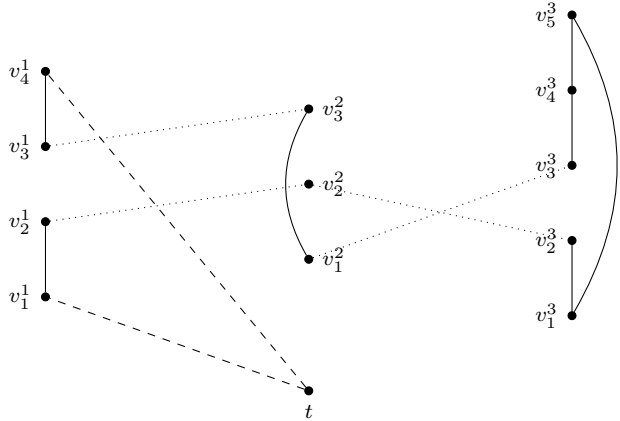
\begin{figure}[H]
    \centering
\begin{tikzpicture}
\foreach \x[count=\xi] in {0.5,1.5,...,4}{
\node[vrtx, label=left:{\footnotesize $v^1_\xi$}]  (v\xi) at (2,\x) {};
}
\foreach \x[count=\xi] in {0.5,1.5,...,3}{
\node[vrtx, label=right:{\footnotesize $v^2_\xi$}]  (w\xi) at (5.5,.5+\x) {};
}
\foreach \x[count=\xi] in {0.5,1.5,...,5}{
\node[vrtx, label=left:{\footnotesize $v^3_\xi$}]  (u\xi) at (9,-.25+\x) {};
}
\draw (v1) -- (v2)  (v3) -- (v4);
\draw[bend right] (w3) to (w1);
\draw (u1) -- (u2)  (u3) -- (u4) -- (u5);
\draw[bend left] (u5) to (u1);
\node[vrtx, label=below:{\footnotesize $t$}]  (t) at (5.5,-0.75) {};
\draw[dashed] (t) -- (v4) (t) -- (v1);
\draw[dotted] (v2) -- (w2) (v3) -- (w3);
\draw[dotted] (w2) -- (u2);
\draw[dotted] (w1) -- (u3);
\end{tikzpicture}
	\caption{An example of an optimal expanded cycle $D(3)$.  The cost can be computed as $f^2(v^3_2,v^2_1)=f^1(v^2_2,v^1_2) + g(v^3_2) + q(v^3_2,v^2_2,v^1_2) + q(v^3_1,v^3_2,v^2_2) + q(v^3_4,v^3_3,v^2_1) + q(v^3_3,v^2_1,v^2_3) - q(v^2_1,v^2_2,v^2_3) - q(v^2_2,v^2_1,v^2_3)$.  Note that some of the quadratic costs may contain vertices in 3 consecutive partitions of $G^*$, such as $q(v_2^1,v^2_2,v^3_2)$.}
	   \label{fig:QTSP(A)-DEE}
\end{figure}

\section{Paired vertex graphs} \label{sec:PV}

We now consider a class of undirected graphs which contains an exponential number of tours but on which the linear TSP is solvable in $O(n)$ time.  Let $G^{p}=(V,E)$ be constructed as follows.  Consider the sets $V^1,V^2,\ldots, V^\frac{n}{2}$ of pairs of vertices.  For each vertex in $V^k$, add an edge connecting it to every vertex in $V^{k+1}$, for all $k=1,2,\ldots,\frac{n}{2}-1$.  Add an edge connecting the two vertices in $V^1$ to each other, and the two vertices in $V^\frac{n}{2}$ to each other.  For $G^p$ with an odd number of vertices, a vertex can be added on the edge contained in $V^1$ and all following results hold.  We note that although this graph class is similar to the graph $G^*$, it is not a special case of $G^*$ and, to the best of our knowledge, has not been previously studied in connection with the linear TSP.

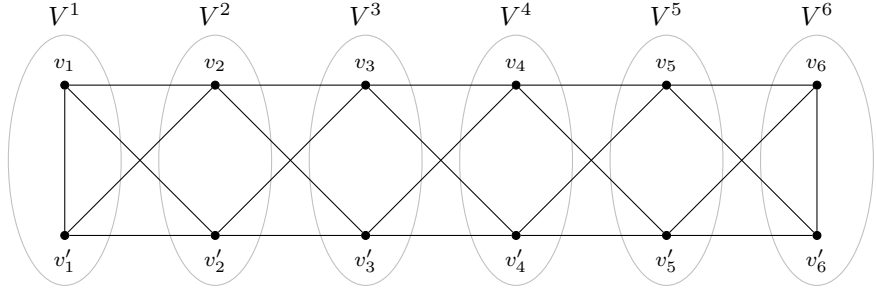
\begin{figure}[ht]
\centering
\begin{tikzpicture}
\foreach \x in {1,2,...,6}{
    \node [vrtx, label=above:{\footnotesize $v_\x$}] (va\x) at (2*\x, 2) {};]
    \node [vrtx, label=below:{\footnotesize $v'_\x$}] (vb\x) at (2*\x, 0) {};]
    \node[fit=(va\x) (vb\x),ellipse,draw=gray!50,minimum width=1.5cm, label=above:{$V^\x$}] {};
}
\draw (va1) to node {} (vb1);
\draw (va6) to node {} (vb6);
\draw (va1) -- (va2) -- (va3) -- (va4) -- (va5) -- (va6);
\draw (vb1) -- (vb2) -- (vb3) -- (vb4) -- (vb5) -- (vb6);
\draw (va1) -- (vb2) -- (va3) -- (vb4) -- (va5) -- (vb6);
\draw (vb1) -- (va2) -- (vb3) -- (va4) -- (vb5) -- (va6);
\end{tikzpicture}
\caption{Example of $G^p$ on 12 vertices.}
\label{fig:Gp}
\end{figure}

Let $F(PV)$ be the family of all tours which belong to $G^{p}$.  It can be verified that $|F(PV)|=2^{n/2-1}$.

\begin{theorem}
	The linear TSP on $G^p$ can be solved in $O(n)$ time.
\end{theorem}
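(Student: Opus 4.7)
The plan is to establish a rigid structural characterization of the tours in $F(PV)$ and then exploit the resulting per-transition independence to obtain an $O(n)$ algorithm. Writing $V^k = \{v_k, v'_k\}$ throughout, I will abbreviate $e_1 := (v_1, v'_1)$ and $e_{n/2} := (v_{n/2}, v'_{n/2})$.

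First I would show that every tour $\tau \in F(PV)$ must contain $e_1$ and $e_{n/2}$ and use exactly two edges between $V^k$ and $V^{k+1}$ for every $k = 1, \dots, n/2 - 1$. Let $a_k$ be the number of tour edges between $V^k$ and $V^{k+1}$ and let $b_1, b_{n/2} \in \{0,1\}$ indicate whether $e_1$ and $e_{n/2}$ are in $\tau$. Since every vertex has tour-degree two, summing degrees inside each partition gives $2b_1 + a_1 = 4$, $a_{k-1} + a_k = 4$ for $2 \le k \le n/2 - 1$, and $a_{n/2-1} + 2b_{n/2} = 4$. Combined with the parity constraint that $a_k$ is even (the cut $\bigcup_{j \le k}V^j$ versus $\bigcup_{j>k}V^j$ is crossed an even number of times by a cycle) and the connectivity constraint $a_k \ge 2$ (otherwise the tour disconnects), these equations force $a_k = 2$ for all $k$ and $b_1 = b_{n/2} = 1$.

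Next I would show that removing $e_1$ and $e_{n/2}$ from $\tau$ leaves two vertex-disjoint paths with endpoint set $\{v_1, v'_1, v_{n/2}, v'_{n/2}\}$, each of which is monotone through the partitions. Indeed, a path from $V^1$ to $V^{n/2}$ must cross each cut $(V^k, V^{k+1})$ an odd number of times, so at least once; since the two paths together contribute only $a_k = 2$ crossings, each crosses each cut exactly once and therefore visits exactly one vertex of every $V^k$. Hence at transition $k$ the two crossing edges are either the \emph{straight} pair $\{(v_k,v_{k+1}),(v'_k,v'_{k+1})\}$ or the \emph{crossed} pair $\{(v_k,v'_{k+1}),(v'_k,v_{k+1})\}$. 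Every assignment of straight/crossed at the $n/2-1$ transitions produces a valid tour, accounting for all $2^{n/2-1} = |F(PV)|$ elements of the family, so these are precisely the tours we must optimize over.

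Finally, this decomposition yields the cost formula
\begin{equation*}
c(\tau) = c(e_1) + c(e_{n/2}) + \sum_{k=1}^{n/2-1} \gamma_k,
\end{equation*}
where $\gamma_k$ equals $c(v_k,v_{k+1}) + c(v'_k,v'_{k+1})$ on a straight transition and $c(v_k,v'_{k+1}) + c(v'_k,v_{k+1})$ on a crossed one. Because the straight/crossed choices are independent across transitions, an optimal tour is obtained by selecting the cheaper option at each $k$, giving
\begin{equation*}
\min_{\tau \in F(PV)} c(\tau) = c(e_1) + c(e_{n/2}) + \sum_{k=1}^{n/2-1} \min\bigl\{c(v_k,v_{k+1}) + c(v'_k,v'_{k+1}),\; c(v_k,v'_{k+1}) + c(v'_k,v_{k+1})\bigr\},
\end{equation*}
which is computable in $O(n)$ time with constant work per transition. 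The main obstacle is the first step: one has to combine the degree-sum equations, edge parity, and connectivity simultaneously to pin down $a_k = 2$ uniformly. Once that structural claim is established, the two-monotone-path decomposition, the straight/crossed dichotomy, and the additive independence of the transition costs all follow by direct inspection.
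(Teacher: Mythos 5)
Your proof is correct and takes essentially the same approach as the paper: both arguments observe that every PV-tour must contain the two intra-partition edges and exactly one straight or crossed pair of edges at each transition, so the cost decomposes additively over independent per-transition choices and the cheaper pair can be selected greedily in constant time per transition. The only difference is one of rigor --- you derive the structural characterization from degree-sum, cut-parity, and connectivity constraints, whereas the paper simply asserts it.
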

\begin{proof}
	Let $G^p$ be a paired vertex graph on \myred{$2r=n$} vertices.  Every tour $\tau$ in $G^p$ contains the edges $(v_1,v'_1)$ and $(v_r,v'_r)$.  To connect $V^k$ to $V^{k+1}$, $\tau$ must either contain pairs of edges $(v_k,v_{k+1})$ and $(v'_k,v'_{k+1})$, or $(v_k,v'_{k+1})$ and $(v'_k,v_{k+1})$.  It is now clear that $\tau^*$ can be constructed greedily by adding pairs of edges joining vertices in $V^k$ to vertices in $V^{k+1}$ which minimize $\{c(v_k,v_{k+1}) + c(v'_k,v'_{k+1}), c(v_k,v'_{k+1}) + c(v'_k,v_{k+1})\}$ for each $k=1,\ldots,r-1$.
\end{proof}

Interestingly, the convex hull of the incidence vectors of tours in $F(PV)$ has a compact representation.
We give a linear description, $P(G^p)$ of the polytope of $G^p$.
\begin{theorem} \label{thm:Pgp}
	\begin{eqnarray}
		P(G^p) &=& \{ \vect{x}\in \mathbb{R^E}: 0\leq x_e\leq 1 \text{ for all } e\in E, \\
			&& x_{u_1,v} + x_{u_2,v}=1:u_1,u_2\in V^{k-1},v\in V^k, \text{ for all } k=2,\ldots,\frac{n}{2}, \label{eqn:gp1}\\
			&& x_{u_1,v} + x_{u_2,v}=1:v\in V^k,u_1,u_2\in V^{k+1}, \text{ for all } k=1,\ldots,\frac{n}{2}-1,\label{eqn:gp2}\\
			&& x_{u,v}=1: u,v\in V^1,\label{eqn:gp3}\\
			&& x_{u,v}=1: u,v\in V^\frac{n}{2} \label{eqn:gp4}\}.
	\end{eqnarray}
\end{theorem}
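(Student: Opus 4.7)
The plan is to establish that $P(G^p)$ equals $\mathrm{conv}\{x^\tau : \tau \in F(PV)\}$, where $x^\tau$ is the incidence vector of tour $\tau$, by proving the two containments and then reading off the extreme points.

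First I would verify the inclusion $\mathrm{conv}\{x^\tau : \tau \in F(PV)\} \subseteq P(G^p)$ by showing that every $x^\tau$ satisfies the listed constraints. Since $(u,v)$ with $u,v \in V^1$ is the unique edge across the cut $\{V^1, V\setminus V^1\}$ joining the two vertices of $V^1$, any Hamiltonian tour in $G^p$ must contain it, giving (\ref{eqn:gp3}); the same argument on $V^{n/2}$ gives (\ref{eqn:gp4}). For any $v \in V^k$ with $2 \le k \le n/2$, the tour uses exactly one edge incident to $v$ from $V^{k-1}$ (it has no other neighbours in that layer), which is (\ref{eqn:gp1}); the symmetric argument on $V^{k+1}$ gives (\ref{eqn:gp2}). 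The bounds $0 \le x_e \le 1$ are immediate.

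For the reverse inclusion I would exhibit a parameterization showing that $P(G^p)$ is affinely isomorphic to $[0,1]^{n/2-1}$. Label the vertices of $V^k$ as $v_k, v'_k$. For each $k \in \{1, \ldots, n/2-1\}$, the four edges between $V^k$ and $V^{k+1}$ are constrained by two equations from (\ref{eqn:gp1}) (one for each vertex of $V^{k+1}$) and two from (\ref{eqn:gp2}) (one for each vertex of $V^k$). Setting $y_k := x_{v_k,v_{k+1}}$, a direct solve of this $4\times 4$ system yields
\begin{equation*}
x_{v_k,v_{k+1}} = x_{v'_k,v'_{k+1}} = y_k, \qquad x_{v_k,v'_{k+1}} = x_{v'_k,v_{k+1}} = 1-y_k,
\end{equation*}
so the four variables between consecutive layers are determined by a single parameter. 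The bound constraints reduce to $0 \le y_k \le 1$, and the layers decouple entirely since each $y_k$ involves disjoint edge variables. Combined with the fixed values $x_e = 1$ forced by (\ref{eqn:gp3})-(\ref{eqn:gp4}), this gives an affine bijection between $P(G^p)$ and $[0,1]^{n/2-1}$.

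The extreme points of $[0,1]^{n/2-1}$ are exactly the $2^{n/2-1}$ binary vectors, and each choice $y_k \in \{0,1\}$ corresponds to picking the ``parallel'' ($y_k=1$) or ``crossed'' ($y_k=0$) matching between $V^k$ and $V^{k+1}$; this recovers precisely the $|F(PV)| = 2^{n/2-1}$ tours of $G^p$. Hence every vertex of $P(G^p)$ is the incidence vector of some tour in $F(PV)$, which by Minkowski's theorem completes the reverse inclusion. The only mildly delicate step is the $4\times 4$ elimination establishing the parameterization; once that is in hand, integrality of vertices and the tour correspondence are immediate, so I do not expect a serious obstacle.
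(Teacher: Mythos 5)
Your proof is correct, but it takes a genuinely different route from the paper's. The paper argues in two steps: first, that any \emph{integral} solution of the system is a tour (summing (\ref{eqn:gp1}) and (\ref{eqn:gp2}) forces degree $2$ at every interior vertex, and the layered structure rules out subtours), and second, that the constraint matrix is totally unimodular (two ones per row and per column), so every vertex of $P(G^p)$ is integral and hence a tour. You instead eliminate variables explicitly: solving the $4\times 4$ system between consecutive layers shows that $x_{v_k,v_{k+1}}=x_{v'_k,v'_{k+1}}=y_k$ and $x_{v_k,v'_{k+1}}=x_{v'_k,v_{k+1}}=1-y_k$, so $P(G^p)$ is affinely isomorphic to the cube $[0,1]^{n/2-1}$, whose $2^{n/2-1}$ vertices are exactly the parallel/crossed matching choices defining the tours in $F(PV)$. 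Your parameterization is more elementary and self-contained: it sidesteps total unimodularity entirely (the paper's stated criterion --- two ones in each row and column --- is not by itself sufficient for TU; one also needs that the two ones in each column lie in different classes of a row bipartition, which does hold here but is not said), and it gives the vertex count $|F(PV)|=2^{n/2-1}$ for free. The paper's TU argument is the more standard template and would survive in settings where no clean closed-form elimination exists, but for this particular polytope your cube isomorphism is arguably the cleaner proof. One small imprecision: your justification that every Hamiltonian cycle of $G^p$ must use the edge inside $V^1$ (``unique edge across the cut'') is garbled --- the correct reason is that omitting it would force four tour edges between $V^1$ and $V^2$, closing a $4$-cycle on $V^1\cup V^2$; but this direction of the inclusion is routine and the slip does not affect the argument.
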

\begin{proof}
Let $A$ be the coefficient matrix for $P(G^p)$ and $\tau$ be the tour with characteristic vector $\vect{x}$.  Adding (\ref{eqn:gp1}) and (\ref{eqn:gp2}) implies that every vertex in $V^2,V^3,\ldots,V^{n/2-1}$ has degree 2 in $\tau$.  Since every edge in $G^p$ other than the edges contained in $V^1$ and $V^{n/2}$ connects vertices in successive partitions, a solution that contains a subtour must also include both edges incident with $v\in V^k$ and the vertices in $V^{k+1}$ (or $V^{k-1}$).  This contradicts either (\ref{eqn:gp1}) or (\ref{eqn:gp2}), and thus, $\tau$ is a tour in $G^p$.

$A$ is a binary matrix with exactly two ones in each row.  Moreover, since the variable for every edge is in exactly two constraints, there are exactly two 1's in each column.  It follows that the coefficient matrix \myred{is} totally unimodular, and hence $P(G^p)$ is a linear description of the polytope.
\end{proof}
The variant of QTSP when the tours are restricted to PV-tours is denoted QTSP-PV.

\begin{theorem}\label{thm:QTSP-PV}
	QTSP-PV is strongly NP-hard.
\end{theorem}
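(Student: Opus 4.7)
The plan is to mirror the reduction from UBQP used in Theorem~\ref{thm:DQTSP-SEE}. The key structural observation is that every PV-tour is completely determined by one independent binary choice per transition between successive partitions: between $V^k$ and $V^{k+1}$ a PV-tour either uses the two ``parallel'' edges $(v_k,v_{k+1}),(v'_k,v'_{k+1})$ or the two ``cross'' edges $(v_k,v'_{k+1}),(v'_k,v_{k+1})$. This binary-choice structure directly encodes the free variables of UBQP.

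First, from an UBQP instance with $n\times n$ matrix $Q=(q_{ij})$, I would construct a paired vertex graph $G^p$ with $n+1$ partitions $V^1,V^2,\ldots,V^{n+1}$, giving exactly $n$ transitions. For each $k=1,\ldots,n$, pick one of the two parallel edges between $V^k$ and $V^{k+1}$ and label it $k$; label the remaining edges of $G^p$ (the other parallel edge and both cross edges at each transition, together with the two fixed edges inside $V^1$ and $V^{n+1}$) with arbitrary distinct indices from $\{n+1,\ldots,m\}$. Define the $m\times m$ quadratic cost matrix $Q'=(q'_{ef})$ by $q'_{ef}=q_{ef}$ when $1\le e,f\le n$ and $q'_{ef}=0$ otherwise, and set all linear costs to zero.

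Next, I would verify the bijection between PV-tours $\tau$ in $G^p$ and binary vectors $x\in\{0,1\}^n$ defined by $x_k=1$ if and only if edge $k$ belongs to $\tau$, equivalently, the parallel configuration is chosen at the $k$-th transition. Under this correspondence, the QTSP-PV cost of $\tau$ collapses to $x^T Q x$, because every pair $(e,f)\in\tau\times\tau$ involving an edge with index greater than $n$ contributes zero, and for $e,f\le n$ we have $q'(e,f)=q_{ef}$ with edge $e$ present in $\tau$ precisely when $x_e=1$.

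The step requiring the most care, though technically routine, is this bookkeeping: checking that at each transition exactly one of the four connecting edges carries an index in $\{1,\ldots,n\}$ and lies in $\tau$ precisely when the parallel configuration is chosen, and that the always-present edges inside $V^1$ and $V^{n+1}$ never contribute to the cost. Since the reduction is polynomial in $n$ and introduces no new large numerical values beyond those in $Q$, the strong NP-hardness of UBQP transfers to QTSP-PV.
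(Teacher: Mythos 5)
Your reduction is correct and is essentially identical to the paper's: both reduce UBQP to QTSP-PV on a paired vertex graph with $n+1$ partitions, encode $x_k$ by the choice of parallel versus cross edges at the $k$-th transition, place the cost $q_{ij}$ on the pair of designated parallel edges $\bigl((v_i,v_{i+1}),(v_j,v_{j+1})\bigr)$ with all other quadratic costs zero, and observe that the tour cost collapses to $\vect{x}^TQ\vect{x}$. The only difference is cosmetic (your explicit relabelling of edges and block matrix $Q'$ versus the paper's direct cost assignment), so no further comment is needed.
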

\begin{proof}
We reduce UBQP to QTSP-PV.  From an instance of UBQP on $n$ variables, we construct an instance of DQTSP-PV as follows.  Let $G^p$ be a graph on $n+1$ pairs of vertices, $V(G^p)=\cup_{k=1}^{n+1} V^k$, where $V^k=\{v_k,v'_k\}$.  $G^p$ contains edges connecting each vertex in $V^k$ to each vertex in $V^{k+1}$ for each $k=1,2,\ldots,n$, an edge connecting the vertices in $V^1$, as well as an edge connecting the vertices of $V^{n+1}$.  Assign costs $q((v_i,v_{i+1}),(v_j,v_{j+1}))=Q_{ij}$ for all $i,j$.  All other pairs of edges are assigned $q(e,f)=0$.

Given any solution $\vect{x}= (x_1,x_2,\ldots ,x_n)$ of UBQP, we can construct a tour $\tau$ in $G^p$ containing the edges $(v_i,v_{i+1})$ and $(v'_i,v'_{i+1})$ if $x_i=1$ and the edges $(v_i,v'_{i+1})$ and $(v'_{i},v_{i+1})$ if $x_i=0$, for $1\leq i\leq n$, as well as the edges contained in $V^1$ and $V^{n+1}$.  It can be verified that the cost of $\tau$ is precisely $\vect{x}^TQ\vect{x}$.

Conversely, given any tour $\tau$ in the $G^p$ obtained above, construct a vector $\vect{x}$ as $x_i=1$ if and only if edge $(v_i,v_{i+1})$ belongs to $\tau$.  The cost of the tour $\tau$ is precisely $\vect{x}^TQ\vect{x}$.  Since UBQP is strongly NP-hard, the proof follows.
\end{proof}

\begin{figure}[H]
\centering
\begin{tikzpicture}
\foreach \x in {1,2,...,6}{
    \node [vrtx, label=above:{\footnotesize $v_\x$}] (va\x) at (2*\x, 2) {};]
    \node [vrtx, label=below:{\footnotesize $v'_\x$}] (vb\x) at (2*\x, 0) {};]
    \node[fit=(va\x) (vb\x),ellipse,draw=gray!50,minimum width=1.5cm, label=above:{$V^\x$}] {};}
\draw (va1) to node {} (vb1);
\draw (va6) to node {} (vb6);
\draw (va1) -- (va2);
\draw (va4) -- (va5) -- (va6);
\draw (vb1) -- (vb2);
\draw (vb4) -- (vb5) -- (vb6);
\draw (vb2) -- (va3) -- (vb4);
\draw (va2) -- (vb3) -- (va4);
\end{tikzpicture}
\caption{Example of a tour $\tau$ in $G^p$ which corresponds to the solution $\vect{x}=(1,0,0,1,1)$ in the proof of Theorem~\ref{thm:QTSP-PV}.}
\label{fig:Gp}
\end{figure}

The problem QTSP(p,H) where the family of feasible solutions is restricted to PV-tours is called the paired vertex QTSP with rank $p$ and is denoted by QTSP(p,H)-PV.  We have the analogous definition for QTSP(p,c).

\begin{theorem}\label{QTSP(p,H)-PV}
	QTSP(p,c)-PV is NP-hard even when $p=1$ and $c(e)=0$ for all $e\in E$.
\end{theorem}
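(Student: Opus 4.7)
The plan is to reduce PARTITION to QTSP(1,H)-PV, exactly mirroring the template used in the proofs of Theorems~\ref{thm:QTSP1-SEE} and~\ref{thm:QTSP(p,H)-DEE}. Given an instance $\alpha_1,\ldots,\alpha_n$ of PARTITION, I would construct a paired vertex graph $G^p$ on $n+1$ pairs $V^1,\ldots,V^{n+1}$ with $V^k=\{v_k,v'_k\}$. The key structural fact, already exploited in the proof of Theorem~\ref{thm:QTSP-PV}, is that any tour in $G^p$ is forced to include the edges within $V^1$ and $V^{n+1}$, and between each consecutive pair $V^k$ and $V^{k+1}$ must pick either the \emph{parallel} pair $\{(v_k,v_{k+1}),(v'_k,v'_{k+1})\}$ or the \emph{crossed} pair $\{(v_k,v'_{k+1}),(v'_k,v_{k+1})\}$. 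Thus each PV-tour encodes a binary string of length $n$, which lines up exactly with the choice variables of PARTITION.

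I would then set up the rank-one homogeneous cost by picking single-edge weights $a_e$ and taking $b_e=a_e$, so that the QTSP(1,H) objective becomes $\bigl(\sum_{e\in\tau} a_e\bigr)^2$. Specifically, for $k=1,\ldots,n$ assign
\[
a_{(v_k,v_{k+1})} = a_{(v'_k,v'_{k+1})} = \alpha_k,\qquad a_{(v_k,v'_{k+1})} = a_{(v'_k,v_{k+1})} = -\alpha_k,
\]
and set the weights of the two forced edges inside $V^1$ and $V^{n+1}$ (and any remaining edges) to zero. Choosing the parallel pair at junction $k$ then contributes $+2\alpha_k$ to $\sum_{e\in\tau}a_e$, while choosing the crossed pair contributes $-2\alpha_k$. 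Hence the objective of every PV-tour $\tau$ equals $\bigl(2\sum_{k=1}^n \varepsilon_k\alpha_k\bigr)^2$ for some $\varepsilon\in\{-1,+1\}^n$, with $0$ as a trivial lower bound.

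The optimum is $0$ precisely when some choice of signs makes $\sum_{k=1}^n \varepsilon_k\alpha_k=0$, which is equivalent to the existence of a valid partition of $\{\alpha_1,\ldots,\alpha_n\}$ into two subsets of equal sum. Since PARTITION is NP-complete~\cite{Karp1972}, this proves QTSP(1,H)-PV is NP-hard, and the statement for QTSP(p,c)-PV then follows immediately because QTSP(1,H) is the special case of QTSP(1,c) with $c\equiv 0$, which in turn is a special case of QTSP(p,c).

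There is no real obstacle here; the only points that need checking are that the forced edges inside $V^1$ and $V^{n+1}$ contribute nothing (handled by setting their weights to zero) and that no additional cross-terms appear — both are automatic once the objective collapses to the square of a single linear form. The reduction is polynomial in the size of the PARTITION instance, completing the argument.
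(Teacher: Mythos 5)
Your proposal is correct and follows essentially the same route as the paper's proof: a reduction from PARTITION in which parallel versus crossed edge pairs between consecutive vertex pairs encode the sign choice, $b_e=a_e$ collapses the rank-one objective to $\left(\sum_{e\in\tau}a_e\right)^2$, and the optimum is zero exactly when a valid partition exists. The only cosmetic difference is that the paper places the weights $\pm s_i$ only on the two edges leaving $v_i$ (so each junction contributes $\pm s_i$ rather than your $\pm 2\alpha_i$), which changes nothing in the argument.
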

\begin{proof}
We reduce the PARTITION problem to QTSP(1,H)-PV.  From an instance of PARTITION \myred{with the given data $\alpha_1,\ldots,\alpha_n$}, we construct an instance of QTSP(1,H)-PV as follows.  Let $G^p$ be a graph on $n+1$ pairs of vertices, $V(G^p)=\cup_{k=1}^{n+1} V^k$, where $V^k=\{v_k,v'_k\}$.  $G^p$ contains edges connecting each vertex in $V^k$ to each vertex in $V^{k+1}$ for each $k=1,2,\ldots,n$, an edge connecting the vertices in $V^1$, as well as an edge connecting the vertices of $V^{n+1}$.  Assign costs $a(v_i,v_{i+1})=b(v_i,v_{i+1})=\myred{\alpha_i}$ and $a(v_i,v'_{i+1})=b(v_i,v'_{i+1})=-\myred{\alpha_i}$ for each $i=1,2,\ldots,n$.  The objective function of QTSP(1,H)-PV on the $G^p$ constructed above becomes $(\sum_{e\in \tau}a_e)^2 \geq 0$, where $\tau$ is a tour in this $G^p$.  It can be verified that the optimal objective function value of this QTSP(1,H)-PV is zero precisely when the required PARTITION exists.  The \myred{result} follows from the NP-completeness of PARTITION~\cite{Karp1972}.
\end{proof}

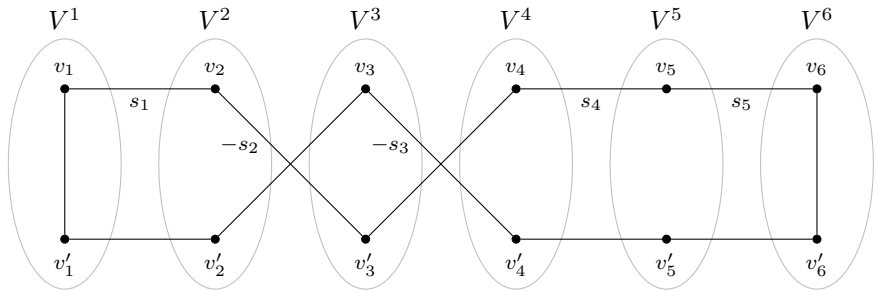
\begin{figure}[ht]
\centering
\begin{tikzpicture}
\foreach \x in {1,2,...,6}{
    \node [vrtx, label=above:{\footnotesize $v_\x$}] (va\x) at (2*\x, 2) {};]
    \node [vrtx, label=below:{\footnotesize $v'_\x$}] (vb\x) at (2*\x, 0) {};]
    \node[fit=(va\x) (vb\x),ellipse,draw=gray!50,minimum width=1.5cm, label=above:{$V^\x$}] {};}
\draw (va1) to node {} (vb1);
\draw (va6) to node {} (vb6);
\draw (va1) -- node[below] {\footnotesize $s_1$} ++ (va2);
\draw (va4) -- node[below] {\footnotesize $s_4$} (va5) -- node[below] {\footnotesize $s_5$} ++ (va6);
\draw (vb1) -- (vb2);
\draw (vb4) -- (vb5) -- (vb6);
\draw (vb2) -- (va3) -- node[below,xshift=-0.67cm,yshift=0.48cm]{\footnotesize$-s_3$} (vb4);
\draw (va2) -- node[below,xshift=-0.67cm,yshift=0.48cm]{\footnotesize$-s_2$} (vb3) -- (va4);
\end{tikzpicture}
\caption{Example of a tour $\tau$ in $G^p$ which corresponds to the solution $S=\{1,4,5\}$ in the proof of Theorem~\ref{QTSP(p,H)-PV}.}
\label{fig:Gp}
\end{figure}

Despite this negative result, we now show that when $p$ is fixed, QTSP(p,H)-PV can be solved in pseudopolynomial time and in this case it also admits an FPTAS when the edge weights are nonnegative.  Recall that an instance of QTSP(p,H)-PV is given by $p$ pairs of costs $a^r_{ij},b^r_{ij}$ for \myred{$r=1,2,\ldots,p$}, for each edge $(i,j)\in G^p$.  We formulate QTSP(p,H) as a rank $p$ quadratic shortest path problem in an acyclic directed graph.

Given a graph $G^p$, construct the acyclic digraph $G'$ as follows.  Note that the vertex set $V^k=\{v_k,v'_k\}$ for $k=1,2,\ldots, \frac{n}{2}$.  Construct graph $G'=(V',E')$ where $V'=\{\hat{v}_1,\hat{v}_2\ldots,\hat{v}_{n/2}\}$.  For each pair of edges $(v_k,v_{k+1})$ and $(v'_k,v'_{k+1})$ in $G^p$, introduce a directed edge $e_k=(\hat{v}_k,\hat{v}_{k+1})$ which represents the edges $(v_k,v_{k+1})$ and $(v'_k,v'_{k+1})$ being included in a Hamiltonian cycle, and similarly, for each pair of edges $(v_k,v'_{k+1})$ and $(v'_k,v_{k+1})$ in $G^p$, introduce a directed edge $\bar{e}_k=(\hat{v}_k,\hat{v}_{k+1})$.  For $h=1,2,\ldots,p$, set $\alpha^h_{e_1}=a^h_{v_1,v'_1} + a^h_{v_1,v_2} + a^h_{v'_1,v'_2} + a^h_{v_k,v'_k}$, $\alpha^h_{\bar{e}_1}=a^h_{v_1,v'_1} + a^h_{v_1,v'_2} + a^h_{v'_1,v_2} + a^h_{v_k,v'_k}$, $\beta^h_{e_1}=b^h_{v_1,v'_1} + b^h_{v_1,v_2} + b^h_{v'_1,v'_2} + b^h_{v_k,v'_k}$, and $\beta^h_{\bar{e}_1}=b^h_{v_1,v'_1} + b^h_{v_1,v'_2} + b^h_{v'_1,v_2} + b^h_{v_k,v'_k}$.  For $k=2,3\ldots,n/2-1$, and $h=1,2,\ldots,p$ we set $\alpha^h_{e_k}=a^h_{v_k,v_{k+1}} + a^h_{v'_k,v'_{k+1}}$, $\alpha^h_{\bar{e}_k}=a^h_{v_k,v'_{k+1}} + a^h_{v'_k,v_{k+1}}$, $\beta^h_{e_k}=b^h_{v_k,v_{k+1}} + b^h_{v'_k,v'_{k+1}}$ and $\beta^h_{\bar{e}_k}=b^h_{v_k,v'_{k+1}} + b^h_{v'_k,v_{k+1}}$.  The graph $G'$ constructed from the $G^p$ in Figure~\ref{fig:Gp} is shown in Figure~\ref{fig:Gp-sp}.

\begin{figure}[ht]
\centering
\begin{tikzpicture}[-,>=stealth, decoration={markings, mark=at position 0.5 with {\arrow{>}}}]
\foreach \x in {1,2,...,6}{
    \node [vrtx, label=above:{\footnotesize $\hat{v}_\x$}] (v\x) at (2*\x, 2) {};]
}
\foreach \x[count=\xi] in {2,3,...,6}{
	\draw[bend right, dotted, postaction={decorate}] (v\xi) to (v\x);
	\draw[bend left, dotted,postaction={decorate}] (v\xi) to (v\x);}
\end{tikzpicture}
\caption{$G'$ constructed from the graph $G^p$ given in Figure~\ref{fig:Gp}.}
\label{fig:Gp-sp}
\end{figure}
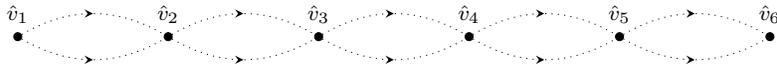

\myred{Let $\Pi$ be the collection of a paths from $\hat{v}_1$ to $\hat{v}_{n/2}$ in $G'$.} From the construction given above, it can be verified that there is a one-to-one correspondence between elements of $\Pi$ and $F(PV)$ where the corresponding elements have the same weight.  Further, given an element of $\Pi$, we can construct a corresponding element in $F(PV)$ in polynomial time.  Note that the graph $G^{\prime}$ is an acyclic multigraph with exactly two multiples of each edge.  Thus, QTSP(p,H)-PV can be solved in pseudopolynomial time, and a minor modification of the analysis in the proof of Theorem~\ref{thm:QSPP(p,H)} yields the  following theorem.  Although the number of edges doubles, this does not change the worst-case complexity.
\begin{corollary}
    QTSP(p,H)-PV can be solved in $O(n^{2p+1}U)$ time, where $U=\prod_{h=1}^p (\max_{e\in E} |a^h_e| \max_e |b^h_e|)$, for any fixed $p$.  Moreover, QTSP(p,H)-PV admits an FPTAS when $\vect{a},\vect{b} \geq \vect{0}$.
\end{corollary}

We now show that the adjacent quadratic TSP restricted to the set of paired vertex tours, denoted QTSP(A)-PV, can be solved in polynomial time using dynamic programming.  The input for QTSP(A)-PV is given as the costs of paths of length two in $G^p$.  \myred{That is,} for any $2$-path $u-v-w$ with $v$ as the middle vertex, a cost $q(u,v,w)$ is given.  Note that $q(u,v,w)=q(w,v,u)$.  Let $f(k)$ be the length of the smallest PV-Hamiltonian path in $G^p$ from $v_k$ to $v'_k$ containing the edges \myred{$(v_{k-1},v_k)$} and \myred{$(v'_{k-1},v'_k)$}.  Similarly, let $g(k)$ be the length of the smallest PV-Hamiltonian path containing \myred{$(v_{k-1},v'_{k})$} and \myred{$(v'_{k-1},v_k)$}.  Then for $k=2,3,\ldots,\frac{n}{2}-1$,
\begin{align*}
    f(k+1) = \min\{ f(k) + q(v_{k-1},v_k,v_{k+1}) + q(v'_{k-1},v'_k,v'_{k+1}), g(k) + q(v_{k-1},v'_k,v'_{k+1}) + q(v'_{k-1},v_k,v_{k+1})\} \},
\end{align*}
\myred{and}
\myred{
\begin{align*}
    g(k+1) = \min\{ f(k) + q(v_{k-1},v_k,v'_{k+1}) + q(v'_{k-1},v'_k,v_{k+1}), g(k) + q(v_{k-1},v'_k,v_{k+1}) + q(v'_{k-1},v_k,v'_{k+1})\} \}.
\end{align*}}
The values of $f(2)$ and $g(2)$ can be calculated directly to initiate the recursion.  Adding the edge $(v_{\frac{n}{2}},v'_{\frac{n}{2}})$ completes the tour and the better of the two tours gives an optimal solution to QTSP(A) on $G^p$.  The foregoing discussion can be summarized in the theorem below.
\begin{theorem}
	QTSP(A)-PV can be solved in $O(n)$ time.
\end{theorem}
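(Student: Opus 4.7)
The plan is to verify correctness and running time of the dynamic programming recursion already sketched in the paragraph preceding the statement. The key structural observation is that any PV-tour in $G^p$ is forced to contain the two endpoint edges inside $V^1$ and $V^{n/2}$, and between each consecutive pair $V^k,V^{k+1}$ it must use either the two ``parallel'' edges $\{(v_k,v_{k+1}),(v'_k,v'_{k+1})\}$ or the two ``crossed'' edges $\{(v_k,v'_{k+1}),(v'_k,v_{k+1})\}$. Consequently a PV-tour is encoded by a binary string of length $n/2-1$, and after we ``cut'' the tour at the edge inside $V^1$ we may process the two resulting Hamiltonian paths in parallel while scanning from $V^1$ to $V^{n/2}$. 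Because the cost function is adjacent-quadratic, the cost of extending such a pair of partial paths from $V^k$ to $V^{k+1}$ depends only on how the paths leave $V^{k-1}$ and enter $V^{k+1}$, i.e.\ on whether the previous transition was parallel or crossed.

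The next step is to justify the state definitions: $f(k)$ (resp.\ $g(k)$) records the minimum cost of the two Hamiltonian paths from $V^1$ up through $V^k$, conditioned on the transition into $V^k$ being parallel (resp.\ crossed). Because only this one bit of history is needed to determine the adjacent-quadratic cost at vertices of $V^k$, these two states form a complete DP description. I would then verify the recurrence displayed in the text, namely
\begin{align*}
    f(k+1) = \min\bigl\{ &f(k) + q(v_{k-1},v_k,v_{k+1}) + q(v'_{k-1},v'_k,v'_{k+1}),\\
    &g(k) + q(v_{k-1},v'_k,v_{k+1}) + q(v'_{k-1},v_k,v'_{k+1})\bigr\},
\end{align*}
and its analogue for $g(k+1)$, checking that each of the two 2-paths through $v_k$ and $v'_k$ is counted exactly once and that no adjacent pair of edges is double-counted or omitted.

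For the base case $k=2$, the values $f(2)$ and $g(2)$ can each be evaluated in constant time by enumerating the four adjacent pairs through the endpoints $(v_1,v'_1)$ and the two choices of edges connecting $V^1$ to $V^2$. At the other end, we close the tour by adding the edge $(v_{n/2},v'_{n/2})$ and accounting for the two adjacent quadratic terms at $v_{n/2}$ and $v'_{n/2}$; taking the minimum over the parallel and crossed entry into $V^{n/2}$ yields the optimum. Since each recurrence step is $O(1)$ work and there are $n/2-1$ steps, the total running time is $O(n)$.

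The only delicate point, and the one I would double-check carefully, is the bookkeeping of adjacent quadratic costs: each internal vertex $v_k$ or $v'_k$ is the middle of exactly one 2-path whose ends lie in $V^{k-1}$ and $V^{k+1}$, and this 2-path is fully determined by the ``parallel vs.\ crossed'' choices made at the $k{-}1 \to k$ and $k \to k{+}1$ transitions. Writing a small table of the four $(\text{in},\text{out})\in\{\text{parallel},\text{crossed}\}^2$ combinations and showing that the two $q(\cdot,v_k,\cdot)+q(\cdot,v'_k,\cdot)$ terms chosen in the recurrence agree with this table completes the correctness argument.
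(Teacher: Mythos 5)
Your proposal follows essentially the same route as the paper: the identical two-state dynamic program with ``parallel'' versus ``crossed'' entry into $V^k$ as the only history needed, constant work per transition, and the same closing step at $V^{n/2}$, giving $O(n)$ overall. One index slip in your displayed recurrence: in the $g(k)$ branch of $f(k+1)$ the tour uses crossed edges into $V^k$ and parallel edges out, so the two 2-paths centred at $V^k$ are $v'_{k-1}-v_k-v_{k+1}$ and $v_{k-1}-v'_k-v'_{k+1}$, i.e.\ the added cost should be $q(v'_{k-1},v_k,v_{k+1})+q(v_{k-1},v'_k,v'_{k+1})$; the terms you wrote, $q(v_{k-1},v'_k,v_{k+1})+q(v'_{k-1},v_k,v'_{k+1})$, would require the crossed edges $(v'_k,v_{k+1})$ and $(v_k,v'_{k+1})$ leaving $V^k$ and hence belong to the $g(k+1)$ update --- the four-entry in/out table you propose to write would catch exactly this.
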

The results discussed in this section can easily be modified to obtain corresponding results when $n$ is odd by adding a single vertex $v_1''$ along the edge $(v_1,v_1')$.  Then, every tour which contains $(v_1,v_1')$ will contain edges $(v_1,v_1'')$ and $(v_1',v_1'')$ in the modified graph.  The results for the case when $n$ is odd follow.

\section{Matching Edge Ejection Tours} \label{sec:NI}

In this section we consider a special class of tours considered by Punnen~\cite{Punnen2001}.  Consider a special spanning subgraph $G^M$ of the complete graph $K_n$ obtained as follows.  Partition the vertices of $K_n$ into two sets, $U=\{u_1,u_2,\ldots,u_r\}$ and $V=\{v_1,v_2,\ldots,v_s\}$.  Let $E_u=\{(u_i,u_{i+1}):1\leq i\leq r\}$, where $r+1\equiv 1$ and $E_{uv}=\{(u_i,v_j):1\leq i\leq r, 1\leq j\leq s\}$.  Hereafter, we assume that $s\leq r$.  The edge set of $G^M$ is defined as $E(G^M)=E_u\cup E_{uv}$. The resulting graph is denoted by $G^M=(V^M,E^M)$, (See Figure \ref{fig:G-M} for an example of a $G^M$ graph), where $V^M=U\cup V$.

\begin{figure}[H]
\centering
\begin{tikzpicture}[scale=0.6]
\foreach \x[count=\xi] in {0.5,1.3,...,5.2}{
    \node[vrtx]  (u\xi) at (0,\x) {};}
\node[fit=(u6) (u1),ellipse,draw=gray!50,minimum width=2cm, label=above:{$U$}] {};
\foreach \x[count=\xi] in {0.5,1.3,...,3.6}{
    \node[vrtx]  (v\xi) at (4,0.8+\x) {};}
\node[fit=(v1) (v4),ellipse,draw=gray!50,minimum width=2cm, label=above:{$V$}] {};
\draw (u1) -- (u2) -- (u3) -- (u4) -- (u5) -- (u6);
\draw[bend right] (u6) to (u1);
\foreach \x in {1,2,3,4,5,6}{
    \foreach \y in {1,2,3,4}{
        \draw[dotted] (u\x) -- (v\y);}}
\end{tikzpicture}
	\caption{A graph $G^M$ with $r=6$ and $s=4$.}
	   \label{fig:G-M}
\end{figure}
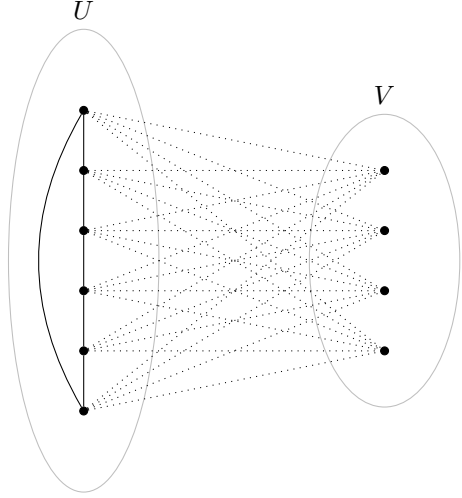

Since the TSP is NP-hard on a complete bipartite graph and $G^M$ has a spanning subgraph which is a complete bipartite graph, the TSP is NP-hard on $G^M$ as well.  Let us now consider a family of tours in $G^M$, called \emph{matching edge ejection tours} (MEE-tours) which consists of all tours in $G^M$ that can be obtained by the following process.
\begin{enumerate}
    \item Eject $s$ edges $e_{\pi(1)},e_{\pi(2)},\ldots,e_{\pi(s)}$ from the cycle $E_u\equiv (u_1,u_2,\ldots,u_r,u_1)$, and let $E_u(s)=E_u - \{e_{\pi(1)},e_{\pi(2)},\ldots,e_{\pi(s)}\}$ be the edge set of the resulting subgraph.
    \item Insert the vertices $v_i\in V$ into $E_u(s)$ by connecting it by edges to the endpoints of $e_{\pi(i)}$, $1\leq i\leq s$ to construct a tour in $G^M$. (See Figure \ref{fig:MEE-tour} for an MEE-tour in the $G^M$ graph of Figure \ref{fig:G-M}).
\end{enumerate}

\begin{figure}[H]
\centering
\begin{tikzpicture}[scale=0.6]
\foreach \x[count=\xi] in {0.5,1.4,...,5.4}{
    \node[vrtx]  (u\xi) at (0,\x) {};}
\node[fit=(u6) (u1),ellipse,draw=gray!50,minimum width=2cm, label=above:{$U$}] {};
\foreach \x[count=\xi] in {0.5,1.4,...,4}{
    \node[vrtx]  (v\xi) at (4,0.9+\x) {};}
\node[fit=(v1) (v4),ellipse,draw=gray!50,minimum width=2cm, label=above:{$V$}] {};
\draw (u3) -- (u4);
\draw (u5) -- (u6);
\draw[dotted] (u2) -- (v1) -- (u3);
\draw[dotted] (u1) -- (v2) -- (u6);
\draw[dotted] (u1) -- (v3) -- (u2);
\draw[dotted] (u4) -- (v4) -- (u5);
\end{tikzpicture}
	\caption{An MEE-tour in the graph $G^M$ given in Figure \ref{fig:G-M}.}
	   \label{fig:MEE-tour}
\end{figure}
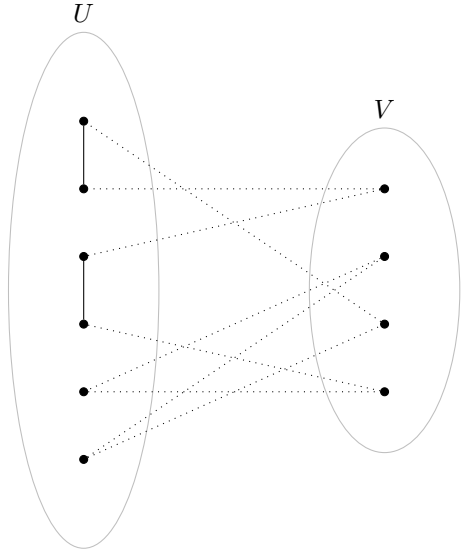


Let $F(MEE)$ be the collection of all MEE-tours in $G^M$. It has been shown in~\cite{Punnen2001} that $|F(MEE)|=\frac{r!}{(r-s)!}$.  If $n$ is even and $r=n/2$\myred{,} then $|F(MEE)|=(\frac{n}{2})!$\myred{,} and if $n$ is odd and $r=(n+1)/2$\myred{,} then $|F(MEE)|=(\frac{n+1}{2})!$.  In fact, $|F(MEE)|$ could be even larger than $(\frac{n}{2})!$ for an appropriate choice of $r$ and $s$.  Gutin and Yeo~\cite{Gutin1999} showed that $|F(MEE)|$ could be as large as $(\frac{n}{2} + p_0)!/(2p_0)!$ where $p_0=\sqrt{\frac{1}{8}(n+\frac{9}{8})} + \frac{3}{8}$.  Finding the best QTSP tour in $F(MEE)$ is a nontrivial task.  Interestingly, TSP restricted to MEE-tours can be solved in $O(n^3)$ time by formulating it as a minimum weight perfect matching problem on an associated bipartite graph~\cite{Punnen2001}.

The quadratic travelling salesman problem where the family of feasible solutions is restricted to MEE-tours is denoted by QTSP-MEE.  Note that by using the rank decomposition of the matrix $Q$ (which has rank $p$), QTSP-MEE can be stated as
\begin{align*}
\text{Minimize     }  q(\tau) =& \sum_{h=1}^p\left(\sum_{e\in \tau} a_e^h\right)\left(\sum_{e\in \tau}b_e^h\right) \\
\text{Subject to } & \tau \in F(SEE).
\end{align*}
It may be noted that in the above representation, $p$ could be $O(n^2)$.

QTSP-MEE is strongly NP-hard.  This follows directly from a stronger result that we later prove.

The quadratic assignment problem on the complete bipartite graph $G'=(U,V,E)$ which, by using the rank decomposition, can be stated as
\begin{eqnarray*}
    QAP(G'): &\textrm{Minimize } & q(P) = \sum_{h=1}^p \left(\sum_{e\in P}\alpha^h_e\right) \left( \sum_{e\in P}\beta^h_e\right) \\
    & \textrm{Subject to } & P \in \mathcal{P},
\end{eqnarray*}
where $\mathcal{P}$ is the set of all perfect matchings in $G'$, and $|U|=|V|=n$.  When $n$ is odd, we denote QAP(G$'$) as odd-QAP.  It is easy to see that odd-QAP is strongly NP-hard.

By extending the formulation~\cite{Punnen2001}, we can also formulate QTSP-MEE as a QAP(G$'$).

Given a graph $G^M$, construct the complete bipartite graph $G'$ as follows.  Note that the vertex set of $G^M$ is represented by $V^M=U\cup V$ where $U=\{u_1,u_2,\ldots,u_r\}$ and $V=\{v_1,v_2,\ldots,v_s\}$.  Also, the edge set is $E(G^M)=E_u\cup E_{uv}$ where $E_u=\{e_i=(u_i,u_{i+1}):1\leq i\leq r\}$ and $E_{uv}=\{(u_i,v_j):1\leq i\leq r, 1\leq j\leq s\}$, $r+1\equiv 1$ and $s\leq r$.  Construct a complete bipartite graph $G'=(V',E')$ where $V'=\{E_u \cup (V \cup \{v_i:s<i\leq r\}) \}$ and $E'=\{(e_i,v_j):e_i\in E_u,v_j\in V \cup \{v_i:s<i\leq r\}\}$.  For $j\in V$ set weights $\alpha_{ij}^h=a_{u_i,v_j}^h + a_{v_j,u_{i+1}}^h - a_{u_i,u_{i+1}}^h$ and $\beta^h_{ij}=b_{u_i,v_j}^h + b_{v_j,u_{i+1}}^h - b_{u_i,u_{i+1}}^h$, and set weights $\alpha_{ij}^h=a_{u_i,u_{i+1}}^h$ and $\beta^h_{ij}=b_{u_i,u_{i+1}}^h$, otherwise, for all $h=1,2,\ldots,p$ and \myred{all $i$ such that $e_i\in E_u$}.  For $j\leq s$, the edge $e=(e_i,v_j)$ represents the events of ejecting edge $e_i$ from cycle $E_u$ and inserting $v_j$ by joining it to the endpoints of $e_i$, otherwise $e$ represents the event that no vertex is inserted along $e_i$.

The problem QTSP(p,c) restricted to the collection of tours in $F(MEE)$ is denoted QTSP(p,c)-MEE.  We have the analogous definition for the homogenous case, denoted QTSP(p,H)-MEE.

\begin{corollary}\label{thm:QTSP(p,H)-MEE}
	QTSP(p,c)-MEE is NP-hard even if $p=1$ and $c(e)=0$ for all $e\in E^M$.
\end{corollary}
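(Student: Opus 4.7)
The plan is to reduce PARTITION to QTSP(1,H)-MEE, mirroring the structure of the reductions used in Theorems~\ref{thm:QTSP1-SEE},~\ref{thm:QTSP(p,H)-DEE}, and~\ref{QTSP(p,H)-PV}. Given a PARTITION instance $\alpha_1,\alpha_2,\ldots,\alpha_n$, I would build $G^M$ by setting $r=2n$ with $U=\{u_1,u_2,\ldots,u_{2n}\}$ arranged in the cycle $E_u$, and $s=n$ with $V=\{v_1,v_2,\ldots,v_n\}$. Choose $a=b$ to make the instance rank $1$ and homogeneous, so the objective reduces to $\left(\sum_{e\in\tau}a_e\right)^2$. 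Assign weight $0$ to every cycle edge, and for each $k=1,\ldots,n$ (indices modulo $2n$) set $a(u_{2k-1},v_k)=\alpha_k$, $a(u_{2k},v_k)=0$, $a(u_{2k+1},v_k)=-\alpha_k$, and $a(u_i,v_k)=M$ for every remaining $i$, where $M := 1+2\sum_{k=1}^{n}|\alpha_k|$.

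The heart of the argument is a case analysis of the contribution of each inserted $v_k$. Inserting $v_k$ into $e_{2k-1}$ contributes $a(u_{2k-1},v_k)+a(v_k,u_{2k})=\alpha_k$, and inserting $v_k$ into $e_{2k}$ contributes $a(u_{2k},v_k)+a(v_k,u_{2k+1})=-\alpha_k$; call these the \emph{good} insertions for $v_k$. Any other insertion of $v_k$ uses at least one $M$-weighted edge and therefore contributes at least $M-\max_k|\alpha_k|$ to $\sum_{e\in\tau}a_e$. Crucially, the pairs $\{e_{2k-1},e_{2k}\}$ are disjoint across $k$, so any sign vector $(\epsilon_1,\ldots,\epsilon_n)\in\{-1,+1\}^n$ specifies a valid MEE-tour (place $v_k$ in $e_{2k-1}$ if $\epsilon_k=+1$ and in $e_{2k}$ if $\epsilon_k=-1$) whose edge-sum equals $\sum_k\epsilon_k\alpha_k$.

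Correctness then splits into two parts. An all-good tour achieves objective $\left(\sum_k\epsilon_k\alpha_k\right)^2$, which vanishes precisely when $\{k:\epsilon_k=+1\}$ and $\{k:\epsilon_k=-1\}$ form a valid partition. Any tour containing $t\geq1$ bad insertions satisfies $\sum_{e\in\tau}a_e\geq t(M-\max_k|\alpha_k|)-\sum_k|\alpha_k|>0$ by the choice of $M$, so it has strictly positive objective. Thus the optimum is zero iff the PARTITION instance is feasible, and NP-completeness of PARTITION~\cite{Karp1972} yields NP-hardness of QTSP(1,H)-MEE, and hence of QTSP(p,c)-MEE. The main subtlety is ensuring $M$-contributions from bad insertions cannot cancel: giving every penalty edge the \emph{same} positive weight $M$ forces all bad contributions to accumulate additively, so a single sufficiently large $M$ makes the penalty dominate any linear combination of the $\alpha_k$'s.
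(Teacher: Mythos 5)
Your proposal is correct, but it takes a different route from the paper. The paper derives this corollary in one line from the reduction in Theorem~\ref{thm:QTSP-MEE}: the objective-preserving correspondence between perfect matchings in the complete bipartite graph $G'$ and MEE-tours in $G^M$ (via the consistent system (\ref{eq:QTSP-MEEa})--(\ref{eq:QTSP-MEEb})) carries a rank-$1$ homogeneous odd QAP instance to a rank-$1$ homogeneous QTSP-MEE instance, and the NP-hardness of rank-$1$ odd QAP is then invoked. You instead give a direct, self-contained reduction from PARTITION in the style of Theorems~\ref{thm:QTSP1-SEE}, \ref{thm:QTSP(p,H)-DEE} and~\ref{QTSP(p,H)-PV}: taking $r=2n$, $s=n$, reserving the disjoint edge pairs $\{e_{2k-1},e_{2k}\}$ as the two ``good'' insertion slots for $v_k$ with contributions $\pm\alpha_k$, and using a uniform big-$M$ on all other insertion edges so that any tour with a bad insertion has strictly positive objective. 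I checked the details: the disjointness of the pairs guarantees every sign vector is realizable as an injection $\pi$, the contribution of inserting $v_k$ into $e_j$ is exactly $a(u_j,v_k)+a(u_{j+1},v_k)$ since the cycle edges have weight zero, and your choice $M=1+2\sum_k|\alpha_k|$ makes $t(M-\max_k|\alpha_k|)-\sum_k|\alpha_k|\geq 1$ for $t\geq 1$, so the optimum is zero iff the partition exists. What each approach buys: the paper's argument is shorter and reuses machinery already built for Theorem~\ref{thm:QTSP-MEE}, but it rests on the asserted (uncited) NP-hardness of rank-$1$ odd QAP; yours is longer but elementary and fully self-contained, bottoming out directly at PARTITION. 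Both establish only ordinary (not strong) NP-hardness, which is all the statement claims.
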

The proof follows from the reduction above using the fact that rank 1 odd QAP is NP-hard~\cite{Punnen01}.

\begin{corollary}
    QTSP(1,H)-MEE admits an FPTAS when $\vect{a},\vect{b}\geq \vect{0}$.
\end{corollary}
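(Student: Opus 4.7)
The plan is to invoke the FPTAS framework for minimizing the product of two non-negative linear cost functions of Goyal, Genc-Kaya, and Ravi~\cite{Goyal2011} (equivalently that of Kern and Woeginger~\cite{kern2007quadratic}). The QTSP(1,H)-MEE objective is $\bigl(\sum_{e\in\tau}a_e\bigr)\bigl(\sum_{e\in\tau}b_e\bigr)$, where $\tau$ is an MEE-tour, so under the hypothesis $\vect{a},\vect{b}\geq\vect{0}$ we are already in the standard setting of that framework: a product of two non-negative linear functions of the characteristic vector of $\tau$, each non-negative on every feasible $\tau$.

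To apply the framework I would use the bijection between MEE-tours in $G^M$ and perfect matchings in the complete bipartite graph $G'$ constructed just before the corollary, but with the natural non-negative choice of bipartite weights $\alpha_{ij}=a_{u_i,v_j}+a_{v_j,u_{i+1}}$ when $v_j\in V$ and $\alpha_{ij}=a_{u_i,u_{i+1}}$ when $v_j$ is a dummy (and symmetrically for $\beta$). A direct check shows that with these weights the total weight of any perfect matching $M$ of $G'$ equals the total weight of the corresponding MEE-tour; so QTSP(1,H)-MEE becomes: minimize $\bigl(\sum_{(i,j)\in M}\alpha_{ij}\bigr)\bigl(\sum_{(i,j)\in M}\beta_{ij}\bigr)$ over perfect matchings $M$ of $G'$, with $\alpha,\beta\geq 0$.

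The sole hypothesis of the Goyal--Genc-Kaya--Ravi framework is a polynomial-time oracle for linear minimization over the feasible polytope. For bipartite perfect matching this is the classical Hungarian algorithm; equivalently, Punnen~\cite{Punnen2001} already solves the linear TSP on MEE-tours in $O(n^3)$ via this very reduction to a matching problem, so the oracle is in hand for the linear TSP on MEE-tours directly. The FPTAS for QTSP(1,H)-MEE follows immediately.

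I do not anticipate a serious obstacle here. The only routine check is that the bipartite reformulation preserves non-negativity of the derived weights, which is immediate from $\vect{a},\vect{b}\geq\vect{0}$. Proceeding instead via the Mittal--Schulz Theorem~\ref{thm:mittal} framework that is used elsewhere in the paper would require establishing a pseudopolynomial algorithm for the exact bipartite perfect matching problem on $G'$, which is considerably less transparent in general, so I would prefer to stay within the Goyal--Genc-Kaya--Ravi framework for this corollary.
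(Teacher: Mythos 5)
Your proposal follows essentially the same route as the paper: reduce QTSP(1,H)-MEE via the bijection with perfect matchings in the bipartite graph $G'$ to a rank-1 (odd) QAP and invoke the Goyal--Genc-Kaya--Ravi FPTAS for minimizing a product of two non-negative linear functions, given the polynomial matching oracle. Your explicit choice of weights $\alpha_{ij}=a_{u_i,v_j}+a_{v_j,u_{i+1}}$ (rather than the paper's version with the subtracted term $-a_{u_i,u_{i+1}}$) is a worthwhile refinement, since it makes the non-negativity hypothesis of that framework manifestly satisfied.
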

The proof of this corollary follows from the reduction given above and applying the result of Goyal et al.~\cite{Goyal2011} on the resulting rank 1 odd QAP.

The adjacent quadratic TSP over the collection of MEE-tours is denoted QTSP(A)-MEE.  Note that QTSP(A)-SEE, QTSP(A)-DEE, and QTSP(A)-PV are solvable in polynomial time.  This simplicity however, does not extend to QTSP(A)-MEE.

\begin{theorem}\label{thm:QTSP(A)-MEE}
	QTSP(A)-MEE is strongly NP-hard.
\end{theorem}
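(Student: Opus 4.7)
The plan is to give a polynomial-time reduction from the (symmetric) TSP on a complete graph, which is well-known to be strongly NP-hard. Given a TSP instance on $N$ cities with symmetric nonnegative weights $w(j,k)$, $1 \le j,k \le N$, I would construct an instance of QTSP(A)-MEE on $G^M$ with $r=s=N$, so $U=\{u_1,\dots,u_N\}$, $V=\{v_1,\dots,v_N\}$, and $|V^M|=2N$. The assumption $s \le r$ in the definition of $G^M$ permits the extreme case $s=r$.

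The key structural observation is that when $s=r$, every MEE-tour must eject \emph{all} $r$ cycle edges of $E_u$ and insert each $v_j$ into exactly one ejected position, so $F(MEE)$ is in bijection with the set of permutations $\sigma$ of $\{1,\dots,N\}$, where $\sigma(i)$ denotes the index of the vertex of $V$ inserted between $u_i$ and $u_{i+1}$ (indices modulo $N$). The corresponding tour is
\begin{equation*}
u_1 \to v_{\sigma(1)} \to u_2 \to v_{\sigma(2)} \to \cdots \to u_N \to v_{\sigma(N)} \to u_1,
\end{equation*}
so the two tour edges incident to each $u_i$ are $(v_{\sigma(i-1)},u_i)$ and $(u_i,v_{\sigma(i)})$, while the two tour edges incident to each $v_{\sigma(i)}$ are $(u_i,v_{\sigma(i)})$ and $(v_{\sigma(i)},u_{i+1})$.

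I would set the adjacent quadratic costs as $q\bigl((u_i,v_j),(u_i,v_k)\bigr) = \tfrac12 w(j,k)$ for every $i$ and every $j\ne k$ (symmetrically in the edge pair), and $q(e,f)=0$ for every other adjacent pair of edges of $G^M$, including all pairs at $v$-vertices, all pairs involving cycle edges of $E_u$ (these never appear in an MEE-tour when $s=r$), and all diagonal terms $q(e,e)$. For the MEE-tour associated to $\sigma$, each $u_i$ contributes one adjacency with cost $w(\sigma(i-1),\sigma(i))$ in the ordered-pair sum $\sum_{(e,f)\in\tau\times\tau}q(e,f)$, and no other adjacency contributes, so the QTSP(A) objective value equals
\begin{equation*}
\sum_{i=1}^{N} w\bigl(\sigma(i-1),\sigma(i)\bigr),
\end{equation*}
precisely the length of the TSP tour $\sigma(1),\sigma(2),\dots,\sigma(N),\sigma(1)$. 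Minimum-cost MEE-tours therefore correspond bijectively to minimum-cost TSP tours, and strong NP-hardness of TSP transfers to QTSP(A)-MEE.

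The reduction is clearly polynomial-time and introduces no new large numbers. The main point to verify with care is the bijection between $F(MEE)$ in the $s=r$ case and permutations of $V$ (in particular the cyclic wraparound at $i=1$ which forces the final edge $(v_{\sigma(N)},u_1)$ to contribute $w(\sigma(N),\sigma(1))$ at $u_1$), together with the claim that every adjacency of the tour at a $v$-vertex contributes zero. Both are immediate from the rigid alternating structure of MEE-tours when all cycle edges are ejected; no gadget is needed beyond the collapse of the adjacent quadratic cost onto the $u$-side vertex adjacencies.
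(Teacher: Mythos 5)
Your proof is correct and takes essentially the same route as the paper: a reduction from the strongly NP-hard linear TSP using $G^M$ with $r=s=n$, the observation that MEE-tours then biject with permutations (alternating tours $u_1,v_{\sigma(1)},u_2,\dots,u_n,v_{\sigma(n)},u_1$), and quadratic costs on adjacent insertion edges so that the MEE-tour cost equals the TSP tour cost. One detail in your favour: you charge $w(j,k)$ on the pair of edges meeting at a shared $u$-vertex and indexed by the two $V$-endpoints, which is the placement that actually makes the objective equal $\sum_i w(\sigma(i-1),\sigma(i))$; the paper's displayed assignment $q\left((u_i,v_j),(v_j,u_k)\right)=c(i,k)$ charges the pair meeting at $v_j$ and indexed by the $u$-endpoints, under which every MEE-tour would cost the constant $\sum_i c(i,i+1)$ --- evidently an index slip that your version silently corrects (you also correctly handle the ordered-pair double counting with the factor $\tfrac12$).
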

\begin{proof}
We give a reduction from the linear TSP.  Given a graph $G$ on the vertices $1,2,\ldots,n$, \myred{and linear cost function $c$ defined on the edges of $G$,} the graph $G^M$ is constructed on the vertex set $V^M=U\cup V$, where $U=\{u_1,u_2,\ldots, u_n\}$ and $V=\{v_1,v_2,\ldots, v_n\}$.  Let $E_u$ be the cycle $(u_1,u_2,\ldots,u_n,u_1)$.  The indices are taken modulo $n$.  Assign quadratic cost on the pairs of adjacent edges $q((u_i,v_j),(v_j,u_k))=c(i,k)$ for $i,j,k=1,2,\ldots,n$, $i\neq k$.  All other costs are zero.
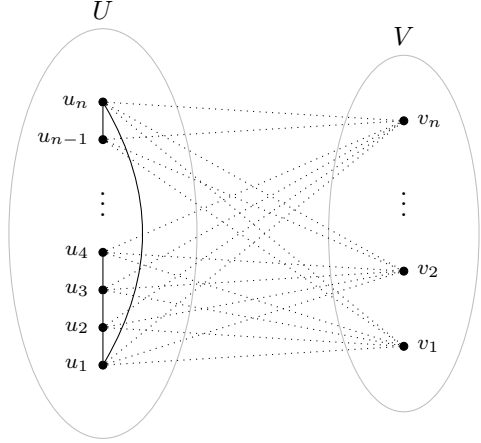
\begin{figure}[H]
\centering
\begin{tikzpicture}
\node[vrtx, label=left:{\footnotesize $u_1$}]  (u1) at (0,0.5) {};
\node[vrtx, label=left:{\footnotesize $u_2$}]  (u2) at (0,1) {};
\node[vrtx, label=left:{\footnotesize $u_3$}]  (u3) at (0,1.5) {};
\node[vrtx, label=left:{\footnotesize $u_4$}]  (u4) at (0,2) {};
\node[vrtx, label=left:{\footnotesize $u_{n-1}$}]  (unm1) at (0,3.5) {};
\node[vrtx, label=left:{\footnotesize $u_n$}]  (un) at (0,4) {};
\node[fit=(un) (u1),ellipse,draw=gray!50,minimum width=2.1cm, label=above:{$U$}] {};
\node[vrtx, label=right:{\footnotesize $v_1$}]  (v1) at (4,0.5) {};
\node[vrtx, label=right:{\footnotesize $v_2$}]  (v2) at (4,1) {};
\node[vrtx, label=right:{\footnotesize $v_3$}]  (v3) at (4,1.5) {};
\node[vrtx, label=right:{\footnotesize $v_4$}]  (v4) at (4,2) {};
\node[vrtx, label=right:{\footnotesize $v_{n-1}$}]  (vnm1) at (4,3.5) {};
\node[vrtx, label=right:{\footnotesize $v_n$}]  (vn) at (4,4) {};
\node[fit=(v1) (vn),ellipse,draw=gray!50,minimum width=2.1cm, label=above:{$V$}] {};
\draw (u1) -- (u2) -- (u3) -- (u4);
\draw (unm1) -- (un);
\draw[bend left] (un) to (u1);
\foreach \x in {1,2,3,4,nm1,n}{
	\foreach \y in {1,2,3,4,nm1,n}{
        		\draw[dotted] (u\x) -- (v\y);}}
\path (u4) -- node[auto=false]{\vdots} (unm1);
\path (v4) -- node[auto=false]{\vdots} (vnm1);
\end{tikzpicture}
	\caption{Construction used in the proof of Theorem~\ref{thm:QTSP(A)-MEE}.}
	   \label{fig:QTSP(A)-MEE-proof}
\end{figure}

It can be verified that every tour $\pi=(\pi(1),\pi(2),\ldots,\pi(n),\pi(1))\in G$ has the same cost as the tour $\pi'$ which results from inserting $\pi(i)$ into edge $(u_i,u_{i+1})$, for each $i=1,2,\ldots,n$, that is, \\$\pi'=(u_1,v_{\pi(1)},u_2,v_{\pi(2)},\ldots,u_n,v_{\pi(n)},u_1)\in G'$.  This establishes a 1-1 correspondence between tours in $G$ and MEE-tours in $G^M$, and the result follows.
\end{proof}

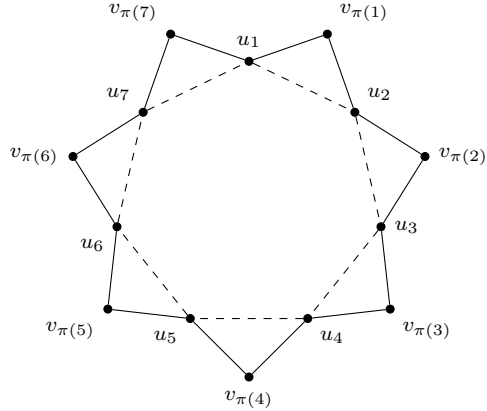
\begin{figure}[H]
	\centering
    \begin{tikzpicture}[scale=1.2]
        \node [vrtx, label=above:{\footnotesize $u_1$}] (v1) at (0, 1.5){};
	\node [vrtx, label=above right:{\footnotesize$u_2$}] (v2) at (1.1727, 0.9352) {};
        \node [vrtx, label=right:{\footnotesize$u_3$}] (v3) at (1.4624, -0.3338) {};
        \node [vrtx, label=below right:{\footnotesize$u_4$}] (v4) at (0.6508, -1.3515) {};
        \node [vrtx, label=below left:{\footnotesize$u_5$}] (v5) at (-0.6508, -1.3515) {};
        \node [vrtx, label=below left:{\footnotesize$u_6$}] (v6) at (-1.4624, -0.3338) {};
        \node [vrtx, label=above left:{\footnotesize$u_7$}] (v7) at (-1.1727, 0.9352) {};
        \node [vrtx, label=above right:{\footnotesize$v_{\pi(1)}$}] (sigma1) at (0.8678, 1.8019){};
        \node [vrtx, label=right:{\footnotesize$v_{\pi(2)}$}] (sigma2) at (1.9499, 0.4450){};
        \node [vrtx, label=below right:{\footnotesize$v_{\pi(3)}$}] (sigma3) at (1.5637, -1.2470){};
        \node [vrtx, label=below:{\footnotesize$v_{\pi(4)}$}] (sigma4) at (0, -2){};
        \node [vrtx, label=below left:{\footnotesize$v_{\pi(5)}$}] (sigma5) at (-1.5637, -1.2470){};
        \node [vrtx, label=left:{\footnotesize$v_{\pi(6)}$}] (sigma6) at (-1.9499, 0.4450){};
        \node [vrtx, label=above left:{\footnotesize$v_{\pi(7)}$}] (sigma7) at (-0.8678, 1.8019){};
	    \draw[dashed] (v1) -- (v2) -- (v3) -- (v4) -- (v5) -- (v6) -- (v7) -- (v1);
        \draw (v1) -- (sigma1) -- (v2) -- (sigma2) -- (v3) -- (sigma3) -- (v4) -- (sigma4)
        -- (v5) -- (sigma5) -- (v6) -- (sigma6) -- (v7) -- (sigma7) -- (v1);
 	\end{tikzpicture}
	\caption{An example of the construction used in the proof of Theorem~\ref{thm:QTSP(A)-MEE} is shown.  The solid lines indicate the tour $\pi'\in G'$ defined by $\pi\in G$.  The cycle \myred{$E_u=(u_1,u_2,\ldots,u_7,u_1)$} is shown with dashed lines.}
	   \label{fig:NI-A}
\end{figure}

\section{Conclusion}
We presented a systematic study of various complexity aspects of QTSP which generalizes the well-known travelling salesman problem.  We have shown that QTSP is NP-hard on several classes of exponential neighbourhoods for which the linear TSP is polynomially-solvable.  We introduce a restricted version of the QTSP objective, the fixed-rank QTSP, and examine the complexity of this problem on these classes of exponential neighbourhoods.  It is shown that QTSP(p,c)-SEE, QTSP(p,c)-DEE, and QTSP(p,c)-PV can be solved in pseudopolynomial time and they also admit FPTAS.  QTSP(p,c)-MEE with $p=1$ can be solved in pseudopolynomial time and admits an FPTAS.  For fixed $p>1$, the complexity status is open.  For the adjacent QTSP variation, i.e. QTSP(A)-SEE, QTSP(A)-DEE and QTSP(A)-PV, we present polynomial algorithms.  The problem QTSP(A)-MEE is shown to be NP-hard.  As a by-product, we obtain an FPTAS for the fixed-rank quadratic shortest path problem, and a pseudopolynomial algorithm when the problem is restricted to acyclic graphs.


\bibliographystyle{plain}

\begin{thebibliography}{10}

\bibitem{Ahuja2002}
R.~Ahuja, {\"O}.~Ergun, J.~Orlin, and A.~Punnen.
\newblock{\em A survey of very large-scale neighborhood search techniques}.
\newblock{\em Discrete Applied Mathematics}, 123:75--102, 2002.

\bibitem{Applegate2011}
D.~Applegate, R.~Bixby, V.~Chvatal, and W.~Cook.
\newblock {\em The traveling salesman problem: a computational study}.
\newblock Princeton University Press, 2011.

\bibitem{Balas2006}
E.~Balas, R.~Carr, M.~Fischetti, and N.~Simonetti.
\newblock New facets of the STS polytope generated from known facets of the
  ATS polytope.
\newblock {\em Discrete Optimization}, 3:3 -- 19, 2006.

\bibitem{bentley1992fast}
J.~Bentley.
\newblock Fast algorithms for geometric traveling salesman problems.
\newblock {\em ORSA Journal on computing}, 4:387--411, 1992.

\myred{
\bibitem{bondy1976}
J.~Bondy, U.~Murty, and others.
\newblock{\em Graph theory with applications}.
\newblock London: Macmillan, 1976.}

\bibitem{burkard1999}
R.~Burkard, V.~Deineko, and G.~Woeginger.
\newblock Erratum: the travelling salesman and the PQ-tree.
\newblock {\em Mathematics of operations research}, 24:262--272, 1999.

\bibitem{Cook2012}
W.~Cook.
\newblock {\em In pursuit of the traveling salesman: mathematics at the limits
  of computation}.
\newblock Princeton University Press, 2012.

\bibitem{Cornuejols83}
G.~Cornu{\'e}jols, D.~Naddef, and W.~Pulleyblank.
\newblock Halin graphs and the travelling salesman problem.
\newblock {\em Mathematical Programming}, 26:287--294, 1983.

\bibitem{d1} V. G. De\u{i}neko · G. J. Woeginger, A study of exponential neighborhoods for the Travelling Salesman Problem and for the Quadratic Assignment Problem, {it Mathematical Programming}, 87:159--542, 2000.

\bibitem{dudzinski1987exact}
K.~Dudzi{\'n}ski and S.~Walukiewicz.
\newblock Exact methods for the knapsack problem and its generalizations.
\newblock {\em European Journal of Operational Research}, 28:3--21, 1987.

\bibitem{Ergun2006}
{\"O}.~Ergun and J.~Orlin.
\newblock A dynamic programming methodology in very large scale neighborhood
  search applied to the traveling salesman problem.
\newblock {\em Discrete Optimization}, 3:78 -- 85, 2006.
\newblock The Traveling Salesman Problem.

\bibitem{fischer2014analysis}
A.~Fischer.
\newblock An analysis of the asymmetric quadratic traveling salesman polytope.
\newblock {\em SIAM Journal on Discrete Mathematics}, 28:240--276, 2014.

\bibitem{fischer2016polyhedral}
A.~Fischer.
\newblock A polyhedral study of the quadratic traveling salesman problem.
\newblock In {\em Operations Research Proceedings 2014}, 143--149.
  Springer, 2016.

\bibitem{fischer2014exact}
A.~Fischer, F.~Fischer, G.~J{\"a}ger, J.~Keilwagen, P.~Molitor, and I.~Grosse.
\newblock Exact algorithms and heuristics for the quadratic traveling salesman
  problem with an application in bioinformatics.
\newblock {\em Discrete Applied Mathematics}, 166:97--114, 2014.

\bibitem{fischer2013symmetric}
A.~Fischer and C.~Helmberg.
\newblock The symmetric quadratic traveling salesman problem.
\newblock {\em Mathematical Programming}, 142:205--254, 2013.

\bibitem{gj} M. R. Garey and D. S. Johnson, {\em Computers and Intractability: A Guide to the Theory of NP-Completeness},   W. H. Freeman and Co., San Francisco, 1979.

\bibitem{gilmore1986traveling}
P.~Gilmore, E.~Lawler, and D.~Shmoys.
\newblock The traveling salesman problem. a guided tour of combinatorial
  optimization.
\newblock {\em Lawler, Lenstra, Rinnooy Kan (Eds)}, 87--143, 1986.

\bibitem{glover1997travelling}
F.~Glover and A.~Punnen.
\newblock The travelling salesman problem: new solvable cases and linkages with
  the development of approximation algorithms.
\newblock {\em Journal of the Operational Research Society}, 48:502--510,
  1997.

\bibitem{d3} F. Glover and C. Rego, New assignment‐based neighborhoods for traveling salesman and routing problems, {\it Networks} 71:171-187, 2018.

\bibitem{Goyal2011}
V.~Goyal, L.~Genc-Kaya, and R.~Ravi.
\newblock An FPTAS for minimizing the product of two non-negative linear cost
  functions.
\newblock {\em Mathematical programming}, 126:401--405, 2011.

\bibitem{Goyal2013}
V.~Goyal and R.~Ravi.
\newblock An FPTAS for minimizing a class of low-rank quasi-concave functions over a convex set.
\newblock{\em Operations Research Letters}, 41:191--196, 2013.

\bibitem{Gutin1999}
G.~Gutin.
\newblock Exponential neighbourhood local search for the traveling salesman problem.
\newblock {\em Computers and Operations Research}, 26:313--320, 1999.

\bibitem{d2} G. Gutin and F. Glover, Further Extension of the TSP Assign Neighborhood, {\it Journal of Heuristics} 11:501--505, 2005.

\bibitem{Gutin1999-2}
G.~Gutin and A.~Yeo.
\newblock Small diameter neighbourhood graphs for the traveling salesman problem: at most four moves from tour to tour.
\newblock {\em Computers and Operations Research}, 26:321--327, 1999.

\bibitem{Gutin2007}
G.~Gutin, A.~Yeo, and A.~Zverovitch. 
\newblock Exponential neighborhoods and domination analysis for the TSP
\newblock {\em The traveling salesman problem and its variations}. Springer, Boston, MA, 2007. 223--256.

\bibitem{Gutin2002-2}
G.~Gutin and A.~Punnen, editors.
\newblock {\em The traveling salesman problem and its variations}, volume~12 of
  {\em Combinatorial Optimization}.
\newblock Springer New York, 2002.

\bibitem{hastad1996clique}
J.~H{\aa}stad.
\newblock Clique is hard to approximate within $n ^{1-\epsilon}$.
\newblock In {\em Foundations of Computer Science, 1996. Proceedings., 37th
  Annual Symposium on}, 627--636. IEEE, 1996.

\bibitem{hu2018special}
H.~Hu and R.~Sotirov.
\newblock Special cases of the quadratic shortest path problem.
\newblock {\em Journal of Combinatorial Optimization}, 35:754--777, 2018.

\bibitem{Jager:2008}
G.~J{\"a}ger and P.~Molitor.
\newblock Algorithms and experimental study for the traveling salesman problem
  of second order.
\newblock In Boting Yang, Ding-Zhu Du, and CaoAn Wang, editors, {\em
  Combinatorial Optimization and Applications}, volume 5165 of {\em Lecture
  Notes in Computer Science}, 211--224. Springer Berlin Heidelberg, 2008.

\myred{\bibitem{Karp1972}
R.~Karp.
\newblock {\em Reducibility among Combinatorial Problems}, pages 85--103.
\newblock Springer US, Boston, MA, 1972.}

\bibitem{kellerer2004introduction}
H.~Kellerer, U.~Pferschy, and D.~Pisinger.
\newblock {\em Introduction to NP-Completeness of knapsack problems}.
\newblock Springer, 2004.

\bibitem{kern2007quadratic}
W.~Kern and G.~Woeginger.
\newblock Quadratic programming and combinatorial minimum weight product problems.
\newblock In {\em Mathematical programming}, 110, 641--649, 2007.

\bibitem{Larusic2014}
J.~LaRusic and A.~Punnen.
\newblock The asymmetric bottleneck traveling salesman problem: algorithms,
  complexity and empirical analysis.
\newblock {\em Computers \& Operations Research}, 43:20--35, 2014.

\bibitem{Larusic2012}
J.~LaRusic, A.~Punnen, and E.~Aubanel.
\newblock Experimental analysis of heuristics for the bottleneck traveling
  salesman problem.
\newblock {\em Journal of heuristics}, 18:473--503, 2012.

\bibitem{Lawler1985}
E.~Lawler, J.~Lenstra, A.~Kan, and D.~Shmoys.
\newblock {\em The {T}raveling {S}alesman {P}roblem: a guided tour of
  combinatorial optimization}, volume~3.
\newblock Wiley New York, 1985.

\bibitem{Mladenovic1997}
N.~Mladenovi{\'c}, and P.~Hansen.
\newblock{\em Variable neighborhood search}.
\newblock{\em Computers \& operations research}, 24:1097--1100, 1997.

\bibitem{Mittal2013}
S.~Mittal and A.~Schulz.
\newblock A general framework for designing approximation schemes for
  combinatorial optimization problems with many objectives combined into one.
\newblock {\em Operations Research}, 61:386--397, 2013.

\bibitem{orlin2004approximate}
J.~Orlin and A.~Punnen and A.~Schulz.
\newblock{Approximate local search in combinatorial optimization.}
\newblock{\em SIAM Journal on Computing}, 33:1201--1214, 2004.

\bibitem{Punnen01}
A.~Punnen.
\newblock Combinatorial optimization with multiplicative objective function.
\newblock{\em International Journal of Operations and Quantitative Management}, 7:205--210, 2001.

\bibitem{Punnen2001}
A.~Punnen.
\newblock The traveling salesman problem: new polynomial approximation algorithms and domination analysis.
\newblock {\em Journal of Information and Optimization Sciences}, 22:191--206, 2001.

\bibitem{p2} A. P. Punnen. {\em Approximate local search for combinatorial optimization problems with special non-linear objective functions}, working paper, Department of Mathematics, Simon Fraser University, 2018.

\bibitem{Woods2017lin} A. P. Punnen, M. Walter, B. D. Woods.
A characterization of linearizable instances of the quadratic traveling salesman problem, arXiv:1708.07217v3 [cs.DM], 2018.

\bibitem{Reinelt1994}
G.~Reinelt.
\newblock {\em The traveling salesman: computational solutions for {TSP}
  applications}.
\newblock Springer-Verlag, 1994.

\bibitem{rostami2016lower}
B.~Rostami, F.~Malucelli, P.~Belotti, and S.~Gualandi.
\newblock Lower bounding procedure for the asymmetric quadratic traveling
  salesman problem.
\newblock {\em European Journal of Operational Research}, 253:584--592,
  2016.

\bibitem{rostami2015quadratic}
B.~Rostami, F.~Malucelli, D.~Frey, and C.~Buchheim.
\newblock On the quadratic shortest path problem.
\newblock {\em International Symposium on Experimental Algorithms},
  379--390. Springer, 2015.

\bibitem{rostami2018quadratic}
B.~Rostami and A.~Chassein and M.~Hopf and D.~Frey and C.~Buchheim and F.~Malucelli and M.~Goerigk.
\newblock The quadratic shortest path problem: complexity, approximability, and solution methods.
\newblock{\em European Journal of Operational Research}, 268:473--485, 2018.

\bibitem{turner2012variants}
L.~Turner.
\newblock Variants of shortest path problems.
\newblock {\em Algorithmic Operations Research}, 6:91--104, 2012.

\bibitem{woods2010generalized}
B.~Woods.
\newblock {\em Generalized travelling salesman problems on Halin graphs}.
\newblock MSc thesis, Science: Department of Mathematics, 2010.

\bibitem{woods2014}
B.~Woods, A.~Punnen, and T.~Stephen.
\newblock {\em A linear time algorithm for the 3-neighbour Travelling Salesman Problem
on Halin graphs and extensions}.
\newblock {\em Discrete Optimization}, 26:163--182, 2017.

\bibitem{Woods2017PQ}
B.~Woods, and A.~Punnen.
\newblock{\em The quadratic travelling salesman problem over pyramidal tours}.
\newblock working paper, Department of Mathematics, Simon Fraser University, 2018.

\bibitem{Woods2017Halin}4
B.~Woods, and A.~Punnen.
\newblock{\em The quadratic travelling salesman problem on Halin graphs}.
\newblock working paper, Department of Mathematics, Simon Fraser University, 2018.

\end{thebibliography}

\end{document}